\documentclass[
  aps,
  reprint,
  superscriptaddress,
  bibnotes
]{revtex4-2}
\usepackage[utf8]{inputenc}
\usepackage[american]{babel}
\usepackage[normalem]{ulem}
\usepackage{mathpazo}
\usepackage{lmodern}
\usepackage{physics}
\usepackage{graphicx}
\usepackage{indentfirst}
\usepackage{listings}
\usepackage{braket}
\usepackage{amsmath}
\usepackage{amssymb}
\usepackage{amsthm}
\usepackage[font=small,labelfont=bf,justification=raggedright]{caption}
\usepackage{circuitikz}
\usepackage{subcaption}
\usepackage{booktabs}
\usepackage{bbold}
\usepackage{placeins}
\usepackage[title]{appendix}
\usepackage{qcircuit}
\usepackage[unicode=true, bookmarks=false, breaklinks=true, colorlinks=true, linkcolor=blue, citecolor=blue, urlcolor=blue]{hyperref}
\usepackage[dvipsnames]{xcolor}
\usepackage{tikz}

\usepackage{float}
\makeatletter
\let\newfloat\newfloat@ltx
\makeatother

\usepackage{algorithm}
\usepackage{algorithmicx}
\usepackage{algpseudocode}
\usepackage{textcomp}
\usepackage{circledsteps}
\usepackage{bbm}
\usepackage{xspace}
\usepackage{orcidlink}

\newcommand{\iu}{\mathrm{i}} 

\theoremstyle{definition}
\newtheorem{definition}{Definition}
\newtheorem{theorem}{Theorem}
\newtheorem{lemma}{Lemma}

\newcommand{\HKUSTGZ}{\affiliation{Thrust of Artificial Intelligence, Information Hub, The Hong Kong University of Science and Technology (Guangzhou), Guangzhou 511453, China}}
\newcommand{\QSC}{\affiliation{Quantum Science Center of Guangdong-Hong Kong-Macao Greater Bay Area, Shenzhen 518045, China}}

\begin{document}

\title{A Compass on the Quantum State Sphere: \\ The Hopf Ansatz for Arbitrary Pure-State Optimization}

\author{Ruge Lin\,\orcidlink{0000-0002-9297-7622}}
\email[Corresponding author: ]{gogoko699@gmail.com}
\HKUSTGZ

\author{Guangxi Li\,\orcidlink{0009-0009-5481-2411}}
\QSC

\begin{abstract}
Optimizing arbitrary quantum state vectors is like navigating the unit sphere in
Hilbert space: beyond target reachability, optimization asks for coordinates,
local distance information, and measurable directions. We introduce the Hopf
ansatz, a binary-tree circuit for arbitrary normalized real and complex state
vectors. Internal angles steer probability between subtrees, leaf phases carry
the complex degrees of freedom, and the same tree gives state preparation and an
explicit inverse map from amplitudes to physical angles. Together these structures act as a compass for the search: the inverse map gives coordinates,
the diagonal induced metric gives local distance information, and nonzero
coordinate tangents become preparable normalized states. 

For Hamiltonian objectives, and for objectives with the same local transition-moment
form, each gradient component is a known scale factor times a transition moment
between the current state and a tangent state. A branch-state construction
expresses these moments through expectation-value measurements, while the tree
organizes the compiled gradient settings by magnitude layer and leaf-indexed
phase family. Thus the number of distinct compiled gradient-access circuit families grows only
logarithmically with Hilbert-space dimension, while the measurement budget
required for a chosen precision remains a separate statistical cost. 
In deterministic real-state benchmarks with exact costs, exact gradients, and known
global optima, metric-aware Hopf optimizers reach numerical-precision median
gaps, with the clearest baseline gains in smaller VQE mean final gaps and
stronger concentration of metrology-inspired traces at numerical precision. The
Hopf ansatz turns universal state preparation into a navigable framework for
arbitrary pure-state optimization.
\end{abstract}

\maketitle

\section{Introduction}

In state-based variational algorithms, the ansatz is both a preparation routine
and a coordinate system for variational search. Its parameters determine which
states are reachable, how trial states move under updates, and which derivative
information can be accessed from circuits. This perspective applies directly to
variational quantum eigensolvers (VQE) and related variational quantum algorithms
\cite{peruzzo2014variational,mcclean2016theory,cerezo2021variational}, and also
to variational sensing and metrology, where the optimized object may be a probe
state or sensing protocol rather than only an energy-minimizing wavefunction
\cite{koczor2020variational,kaubruegger2021quantum,meyer2021variational,kaubruegger2023optimal,vasilyev2024optimal,le2023variational,maclellan2024end}. The arbitrary-state version asks for an executable chart over the full pure-state sphere. For optimization, reachability is necessary but not sufficient: one also wants an explicit initialization map, local distance information for controlled state-space motion, and derivative directions that can be prepared or measured.
These are the three optimizer-side roles of an ansatz in arbitrary-state search:
initialization, geometry, and gradient access.

Universal state-preparation and circuit-synthesis methods address the
reachability part of this problem. General synthesis results compile arbitrary
transformations
\cite{knill1995approximation,shende2005synthesis}, while state-preparation
constructions load arbitrary vectors via controlled rotations, optimized depth,
or configurable decompositions
\cite{mottonen2004transformation,plesch2011quantum,zhang2022quantum,sun2023asymptotically,araujo2023configurable}.
These methods are natural for preparing a chosen vector, and several give
explicit amplitude-to-angle constructions. In quantum machine learning,
data-encoding circuits are themselves part of the model design, reinforcing that
preparation maps can be consequential beyond mere loading
\cite{schuld2021effect,li2022concentration}. Specialized encoders for constrained subspaces, such
as fixed-Hamming-weight sectors, address the complementary regime where the
search space is smaller than the full state sphere \cite{farias2025quantum}.
Here we ask whether a universal preparation circuit can also expose
optimizer-side structure: an inverse map, an analytic pullback metric, and
coordinate tangent states within the same circuit skeleton.

We introduce the Hopf ansatz as such a chart: a binary-tree circuit based on
generalized Hopf coordinates, recursively splitting a Euclidean radius into
orthogonal sine--cosine branches
\cite{cohl2013fourier,schobel2016orthogonal}. The real version gives a
dimension-matched magnitude chart, while the complex version adds one phase per
leaf, covering arbitrary normalized real and complex state vectors up to the
usual zero-subtree or zero-leaf coordinate singularities. The map is bidirectional: angles prepare the state, and a bottom-up inverse recovers the coordinates from amplitudes and final sign or phase data.

The Hopf chart is expressive and geometrically tractable. Its real and complex
parameter counts match the dimensions of the corresponding normalized
state-vector spheres, so the metric describes motion within the full
expressible family rather than in a redundant embedding. Every tree split
separates two orthogonal subspaces, making the induced pullback metric diagonal.
In the real case, each metric weight is the probability mass entering a tree
node; in the complex case, the magnitude block has the same tree metric and the
phase block is weighted by leaf occupations. Natural-gradient rescaling therefore
reduces to componentwise operations rather than estimation and inversion of a
full quantum geometric tensor \cite{stokes2020quantum}. The same diagonal
structure also enables efficient state-sphere geodesic updates
\cite{ferreira2025quantum}.

The final ingredient is efficient derivative access. Analytic-gradient methods,
including parameter-shift rules, evaluate objectives at shifted parameter values
or related shifted evolutions
\cite{schuld2019evaluating,banchi2021measuring,wierichs2022general}, while
recent approaches address gradient overhead through circuit and measurement
reuse, density-based training, or expressivity--measurement tradeoffs
\cite{abbas2023quantum,bowles2025backprop,coyle2025density,chinzei2025tradeoff}.
In the Hopf chart, explicit shifted and clamped assignments prepare each nonzero
coordinate tangent as a normalized state in the Hopf circuit skeleton. For
Hamiltonian objectives, each gradient component is a known metric factor times a
transition moment between the current state and a tangent state, and signed
branch states express these moments through measurable expectation values. The
efficiency mechanism is layerwise sharing: magnitude tangents at one tree depth
share an index register, and complex phase tangents use one leaf-indexed
construction. Thus the compiled gradient-access configurations are organized by
tree layer rather than by individual parameter, while each measurement repetition
contributes, through its observed label and branch sign, to the estimator for one
gradient component.

This tangent- and branch-state construction applies beyond energy
minimization when objective derivatives can be written locally as transition
moments of Hermitian chain-rule observables.
Variational quantum metrology has used parameterized circuits to optimize probes
\cite{koczor2020variational}, co-design probes and measurements for
multiparameter estimation \cite{meyer2021variational}, and optimize
platform-tailored preparation and decoding for Ramsey interferometry and atomic
clocks \cite{kaubruegger2021quantum}. We show that pure-state quantum Fisher
information (QFI) probe objectives and fixed-readout classical Fisher information
(CFI) objectives of this chain-rule form fit the Hopf transition-moment template
under local-estimation assumptions, with the Hamiltonian replaced by the relevant
chain-rule observable.

The main text is organized as the narrative spine of the paper.
Section~\ref{sec:hopf_coordinates} introduces the binary-tree Hopf ansatz: the
tree defines the state-preparation chart, the amplitude recursion, and the
diagonal metric. Section~\ref{sec:gradient_access} explains how the same tree
turns coordinate tangents into preparable states and organizes gradient access
by tree layer. Section~\ref{sec:numerical_simulation} then tests the resulting
optimizer behavior on deterministic problems with known global optima. The
appendices provide the technical pieces needed for the complete picture:
Appendix~\ref{app:algorithms} gives the inverse map, gate schedules, and
CNOT-counting model; Appendix~\ref{app:hopf4_example} is a concrete and visual
guide, working through the four-qubit tree, circuit schedule, metric table,
representative tangent states, and indexed derivative and branch circuits in one
setting; Appendix~\ref{app:optimizers} records the Jacobian formulas and
geometry-native optimizers; and Appendix~\ref{app:metrology} gives the
metrological chain-rule observables and their compiled-setting scaling.

\section{A universal Hopf variational chart}
\label{sec:hopf_coordinates}

We use ``Hopf'' in the sense of generalized Hopf coordinates from the
polyspherical-coordinate literature
\cite{cohl2013fourier,schobel2016orthogonal}.  In that construction, one
recursively splits a Euclidean radius into two orthogonal parts, producing a
binary sine--cosine tree with ordinary polar phases at the terminal
two-dimensional leaves.  The Hopf ansatz imports this architecture into quantum
state preparation.

Throughout, \(n\) denotes the number of qubits and \(N:=2^n\) the
Hilbert-space dimension, equivalently the number of leaves in the Hopf tree.
We write the imaginary unit as \(\iu\), reserving italic letters such as
\(i,j,k\) for indices.

For \(n\) qubits, the tree leaves are computational-basis states.  Each
internal node carries one angle that splits incoming probability mass between
its left and right subtrees, so a root-to-leaf path gives one amplitude as a
product of sine and cosine factors.  The complex version adds one independent
phase to each leaf.  This structure yields an explicit inverse map, a diagonal
pullback metric, and exactly preparable normalized coordinate tangents. Thus the Hopf ansatz is a variational chart for moving through the full real or complex pure-state sphere, with state loading as one component of the same
structure.

\begin{definition}[Hopf binary tree]
\label{def:binary_tree}
Fix \(n\in\mathbb{N}\) and set \(N=2^n\).
The Hopf binary tree is the complete rooted binary tree with root at depth \(0\)
and leaves at depth \(n\).

Each node has a breadth--first tree index
\[
j\in\{1,2,\ldots,2N-1\},
\]
with root \(j=1\), left child \(2j\), and right child \(2j+1\).

\begin{itemize}
\item Nodes \(1\le j\le N-1\) are internal nodes and carry split parameters
\(\theta_j\).
\item Nodes \(N\le j\le 2N-1\) are leaves.
\end{itemize}

A root-to-leaf path is encoded by bits \(q_n\cdots q_1\), read from
MSB to LSB: left branch \(0\), right branch \(1\).  Let
\[
i=(q_n\cdots q_1)_2\in\{0,\ldots,N-1\}.
\]
The corresponding leaf has tree index \(j_{\mathrm{leaf}}=N+i\), labels the
computational-basis state
\[
|b_i\rangle \equiv |q_n\cdots q_1\rangle,
\]
and carries amplitude \(x_i\).
\end{definition}

\begin{definition}[Hopf coordinate system]
\label{def:hopf_coordinates}
Fix $n\in\mathbb{N}$, set $N=2^n$, and consider the Hopf binary tree of
Definition~\ref{def:binary_tree}. For each computational basis index
$i\in\{0,\ldots,N-1\}$, write
\[
i = (q_n q_{n-1}\cdots q_1)_2, \qquad q_k\in\{0,1\},
\]
and follow the unique path in the Hopf binary tree from the root to the
corresponding leaf.

Initialize $x_i=1$ and $j=1$. For $k=n,n-1,\ldots,1$, update
\[
\begin{aligned}
x_i &\leftarrow
\begin{cases}
x_i \cos\theta_j, & q_k=0,\\
x_i \sin\theta_j, & q_k=1,
\end{cases}
\\[4pt]
j &\leftarrow
\begin{cases}
2j, & q_k=0,\\
2j+1, & q_k=1.
\end{cases}
\end{aligned}
\]

(A) Real Hopf coordinates.
The real Hopf coordinate system uses the internal-node angles
$\theta_1,\ldots,\theta_{N-1}$ with depth-dependent ranges
\[
\begin{aligned}
\theta_j &\in [0,\tfrac{\pi}{2}], && 1 \le j \le N/2-1,\\
\theta_j &\in [0,2\pi), && N/2 \le j \le N-1.
\end{aligned}
\]
The final-layer range $\theta_j\in[0,2\pi)$ allows the last split in each sibling
leaf pair to encode arbitrary real signs. With these ranges, the construction
produces
\[
\mathbf{x}=(x_0,\ldots,x_{N-1}) \in \mathbb{R}^{N},
\qquad
\sum_{i=0}^{N-1} x_i^2 = 1.
\]
Thus the real Hopf coordinates give a dimension-matched parameterization of the
real unit sphere $\mathbb{S}^{N-1}$ by $N-1$ continuous parameters. The explicit
inverse convention for the signed final-layer angles is stated in
Lemma~\ref{lem:inverse_hopf_map}.

(B) Complex Hopf coordinates.
The complex Hopf coordinate system uses the same internal-node tree, but now all
magnitude angles lie in
\[
\theta_j \in [0,\tfrac{\pi}{2}], \qquad 1 \le j \le N-1,
\]
so that the tree produces nonnegative magnitudes. These magnitudes are then
augmented by independent leaf phases
\[
\theta_{N+i} \in [0,2\pi), \qquad i=0,\ldots,N-1,
\]
through the assignment
\[
x_i \;\leftarrow\; x_i\,e^{\iu\theta_{N+i}}.
\]
This yields
\[
\mathbf{x}\in\mathbb{C}^{N},
\qquad
\sum_{i=0}^{N-1} |x_i|^2 = 1.
\]
The target manifold is the unit sphere in \(\mathbb{C}^{N}\), equivalently
\(\mathbb{S}^{2N-1}\) as a real manifold. The complex Hopf representation uses
\(N-1\) magnitude angles and \(N\) leaf phases, giving \(2N-1\) real
parameters, matching the dimension of the normalized state-vector sphere. In the circuit realization (Definition~\ref{def:hopf_ansatz}), the leaf phases are implemented by promoting the final-layer rotations to $R_{\mathbb{C}}$ gates, each of which injects the
two phases associated with a sibling leaf pair.

In both the real and complex cases, the Hopf coordinates should be understood as
a full sphere parameterization rather than as a single global one-to-one chart.
As usual for spherical coordinates, boundary values and zero-subtree weights lead
to nonunique coordinate representatives. On the regular set where the relevant
subtree weights are nonzero, the inverse map from amplitudes
$\mathbf{x}$ to Hopf parameters $\boldsymbol{\theta}$ is explicit and computable
in linear classical time; see Lemma~\ref{lem:inverse_hopf_map}
(Appendix~\ref{app:inverse_hopf_map}).
\end{definition}

\begin{definition}[Hopf ansatz]
\label{def:hopf_ansatz}
The Hopf ansatz is the circuit realization of Definition~\ref{def:hopf_coordinates}. Acting on $\ket{0}^{\otimes n}$, it prepares
\[
\ket{\psi(\boldsymbol{\theta})}
=
\sum_{i=0}^{N-1} x_i(\boldsymbol{\theta}) \ket{b_i},
\]
with amplitudes given exactly by the binary-tree Hopf recursion.

The circuit is generated from the tree as four parallel lists $\mathtt{Ctrl}$, $\mathtt{Anti}$, $\mathtt{Targ}$ and $\mathtt{Index}$, encoding positive controls, negative controls, target qubits, and the relevant
Hopf parameter index or indices.  The masks \(\mathtt{Ctrl}\) and
\(\mathtt{Anti}\) specify multi-qubit control conditions, \(\mathtt{Targ}\)
selects the single target qubit, and \(\mathtt{Index}\) selects the tree
parameter. Each internal-node parameter corresponds to one gate location in the
fixed skeleton. In the complex case, leaf phases are added without extra
layers: each final-layer rotation is promoted to a three-parameter gate carrying
the parent magnitude angle and the two sibling leaf phases.

The Hopf ansatz uses the following single-qubit gates.

(A) Real Hopf ansatz.
In the real case, each gate is a possibly multi-controlled $R_y$ rotation,
\begin{equation}
R_y(\theta)
=
\begin{bmatrix}
\cos\theta & -\sin\theta \\
\sin\theta & \ \cos\theta
\end{bmatrix},
\end{equation}
which corresponds to the standard quantum gate $R_y(2\theta)$ under the
conventional definition $R_y(\varphi)=e^{-\iu\varphi Y/2}$. Its action on the
computational basis is
\begin{equation}
\begin{aligned}
R_y(\theta)\ket{0} &= \cos\theta\ket{0} + \sin\theta\ket{1},\\
R_y(\theta)\ket{1} &= -\sin\theta\ket{0} + \cos\theta\ket{1}.
\end{aligned}
\end{equation}

Given angles
\[
\boldsymbol{\theta}=(\theta_1,\ldots,\theta_{N-1})
\]
in the real Hopf coordinate ranges, Algorithm~\ref{alg:order_real}
(Appendix~\ref{app:gate_schedule_routines}) generates the corresponding gate
list. Applying the resulting circuit gives
\begin{equation}
\mathrm{HopfReal}(\boldsymbol{\theta})\ket{0}^{\otimes n}
=
\sum_{i=0}^{N-1} x_i \ket{b_i},
\end{equation}
with real amplitudes $x_i$.

(B) Complex Hopf ansatz.
In the complex case, the magnitude-splitting structure is identical to the real
case, while phases are attached at the leaf level. Accordingly, the
depth-$(n-1)$ rotations are promoted to generalized single-qubit rotations
\begin{equation}
R_{\mathbb{C}}(\theta_a,\theta_b,\theta_c)
=
\begin{bmatrix}
e^{\iu\theta_b}\cos\theta_a & -e^{-\iu\theta_c}\sin\theta_a \\
e^{\iu\theta_c}\sin\theta_a & \ \ e^{-\iu\theta_b}\cos\theta_a
\end{bmatrix},
\end{equation}
with action
\begin{equation}
\begin{aligned}
R_{\mathbb{C}}(\theta_a,\theta_b,\theta_c)\ket{0}
&=
e^{\iu\theta_b}\cos\theta_a\ket{0}
+
e^{\iu\theta_c}\sin\theta_a\ket{1},\\
R_{\mathbb{C}}(\theta_a,\theta_b,\theta_c)\ket{1}
&=
- e^{-\iu\theta_c}\sin\theta_a\ket{0}
+
e^{-\iu\theta_b}\cos\theta_a\ket{1}.
\end{aligned}
\end{equation}
This gate reduces to $R_y(\theta_a)$ when $\theta_b=\theta_c=0$. For state preparation only the action on \(\ket{0}\) is used; the action on \(\ket{1}\) specifies a unitary extension and supports controlled use of the same gate family.

Given parameters
\[
\boldsymbol{\theta}=(\theta_1,\ldots,\theta_{2N-1}),
\]
Algorithm~\ref{alg:order_complex}
(Appendix~\ref{app:gate_schedule_routines}) generates the corresponding gate
list. Applying the resulting circuit gives
\begin{equation}
\mathrm{HopfComplex}(\boldsymbol{\theta})\ket{0}^{\otimes n}
=
\sum_{i=0}^{N-1} x_i \ket{b_i},
\end{equation}
with complex amplitudes satisfying $\sum_i |x_i|^2=1$.
\end{definition}

For tangent-state synthesis we also use gate-parameter assignments on the same
Hopf skeleton outside the canonical ranges of
Definition~\ref{def:hopf_coordinates}; for instance, some internal-node angles
are shifted by \(+\pi/2\). These remain valid physical gate settings. Thus
``prepared by the Hopf ansatz'' means prepared by the same ordered Hopf skeleton
and gate types, not necessarily by a point in the canonical coordinate domain.

\subsection*{Hopf geometry on the state sphere}
\label{sec:hopf_geometry}

The Hopf chart exposes the optimization geometry of the full-state family in
closed form. Its pullback metric is analytic and diagonal in both the real and
complex ansatze, so Riemannian or state-sphere descent directions can be formed
without hardware estimation or inversion of a dense quantum geometric tensor. In
the real case, the Fubini--Study metric reduces here to the usual sphere metric.
In the complex case, we keep the global phase as a tangent direction of the
normalized state-vector sphere and use the ambient round-sphere metric. This is
the metric convention naturally used by optimizer updates that move directly on
normalized state vectors \cite{ferreira2025quantum}.

Throughout this section,
\[
\ket{\psi(\boldsymbol{\theta})}
=\sum_{i=0}^{N-1}x_i(\boldsymbol{\theta})\ket{b_i}
\]
is the Hopf state of Definition~\ref{def:hopf_ansatz}. We treat the real and
complex ansatze in parallel; they differ only in the metric convention:
pullback Fubini--Study in the real case, and pullback ambient round-sphere
metric in the complex case \cite{ferreira2025quantum}.

\begin{definition}[Metrics in parameter space]
\label{def:hopf_metric_convention}
Let $\boldsymbol{\theta}$ denote the Hopf parameters and define the raw
coordinate tangent associated with $\theta_i$ by
\[
\ket{\dot\psi_i(\boldsymbol{\theta})}
:=
\partial_{\theta_i}\ket{\psi(\boldsymbol{\theta})}.
\]

(A) Real-state Hopf ansatz. Assume $x_k(\boldsymbol{\theta})\in\mathbb{R}$ and
$\sum_{k=0}^{N-1}x_k(\boldsymbol{\theta})^2=1$.
We define the parameter-space metric as the pullback of the Fubini--Study metric
\cite{stokes2020quantum}:
\begin{equation}
\begin{split}
g^{\mathbb{R}}_{i\ell}(\boldsymbol{\theta})
:= \mathrm{Re}\!\Big[
&\braket{\dot\psi_i(\boldsymbol{\theta})|\dot\psi_\ell(\boldsymbol{\theta})} \\
&- \braket{\dot\psi_i(\boldsymbol{\theta})|\psi(\boldsymbol{\theta})}
   \braket{\psi(\boldsymbol{\theta})|\dot\psi_\ell(\boldsymbol{\theta})}
\Big].
\end{split}
\label{eq:FS_real_def}
\end{equation}
Because $\sum_{k=0}^{N-1}x_k^2=1$ implies
$\braket{\psi|\dot\psi_i}=\sum_{k=0}^{N-1}x_k\,\partial_{\theta_i} x_k=0$, the projection term
vanishes, and hence
\begin{equation}
g^{\mathbb{R}}_{i\ell}(\boldsymbol{\theta})
=\braket{\dot\psi_i(\boldsymbol{\theta})|\dot\psi_\ell(\boldsymbol{\theta})}
=\sum_{k=0}^{N-1}\frac{\partial x_k}{\partial\theta_i}
                  \frac{\partial x_k}{\partial\theta_\ell}.
\label{eq:real_metric_JTJ}
\end{equation}

(B) Complex-state Hopf ansatz.
Assume $x_k(\boldsymbol{\theta})\in\mathbb{C}$ and
\(
\sum_{k=0}^{N-1}|x_k(\boldsymbol{\theta})|^2 = 1,
\)
so that
\(
\psi(\boldsymbol{\theta}) \in \mathbb{S}^{2N-1} \subset \mathbb{C}^{N}.
\)
We define the parameter-space metric as the pullback of the ambient round-sphere
metric on $\mathbb{S}^{2N-1}$:
\begin{equation}
g^{\mathbb{C}}_{i\ell}(\boldsymbol{\theta})
:= \mathrm{Re}\,
\braket{\dot\psi_i(\boldsymbol{\theta})|\dot\psi_\ell(\boldsymbol{\theta})}.
\label{eq:round_complex_def}
\end{equation}
This convention treats the global phase direction as a genuine tangent direction
of the unit sphere. If desired, one may instead pass to the projective
Fubini--Study metric by subtracting the standard projection term
$\braket{\dot\psi_i|\psi}\braket{\psi|\dot\psi_\ell}$.
Here we use the ambient-sphere convention because the optimizer-side state
updates are formulated on normalized state vectors rather than on projective
Hilbert space \cite{ferreira2025quantum}.

\end{definition}

Our first structural result shows that the Hopf chart diagonalizes the pullback metric in closed form.

\begin{theorem}[Diagonal metric of the Hopf ansatz]
\label{thm:hopf_metric_diagonal}
Let $\ket{\psi(\boldsymbol{\theta})}$ be prepared by the Hopf ansatz in
Definition~\ref{def:hopf_ansatz}.

(A) Real-state version. For $\boldsymbol{\theta}=(\theta_1,\ldots,\theta_{N-1})$, the metric
$\boldsymbol{g}^{\mathbb{R}}$ of~\eqref{eq:real_metric_JTJ} is diagonal and
$g^{\mathbb{R}}_{1,1}=1$. For $i>1$, let $d_i=\lfloor\log_2 i\rfloor$, and let
$q_1\cdots q_{d_i}$ be the length-$d_i$ binary expansion of $i-2^{d_i}$, padded
with leading zeros and written most-significant bit first. Let
$s_1,\ldots,s_{d_i}$ be the ancestor indices from the root $s_1=1$ to the parent
of node $i$. Then
\begin{equation}
g^{\mathbb{R}}_{i,i}(\boldsymbol{\theta})
=\prod_{m=1}^{d_i}
\begin{cases}
\cos^2\theta_{s_m}, & q_m=0,\\
\sin^2\theta_{s_m}, & q_m=1.
\end{cases}
\label{eq:g_real_diag}
\end{equation}

(B) Complex-state version. For parameters ordered as $\theta_1,\ldots,\theta_{N-1}$ (magnitudes)
followed by $\theta_N,\ldots,\theta_{2N-1}$ (phases), the round-sphere pullback metric $\boldsymbol{g}^{\mathbb{C}}$ of Definition~\ref{def:hopf_metric_convention}(B) is diagonal with $g^{\mathbb{C}}_{1,1}=1$ and
\begin{equation}
g^{\mathbb{C}}_{i,i}(\boldsymbol{\theta})
=
\begin{cases}
g^{\mathbb{R}}_{i,i}(\boldsymbol{\theta}), & 1\le i\le N-1,\\[4pt]
|x_{i-N}(\boldsymbol{\theta})|^2, & N\le i\le 2N-1.
\end{cases}
\label{eq:g_complex_diag}
\end{equation}

Thus, in the complex Hopf ansatz, the diagonal metric has two blocks: a
magnitude block \(1\le i\le N-1\), identical to the real Hopf metric, and a
phase block \(N\le i\le 2N-1\), whose entries are the leaf occupations
\(|x_{i-N}(\boldsymbol{\theta})|^2\). Hence phase directions are weighted by
local squared amplitudes, while magnitude directions inherit the real Hopf tree
metric.

\end{theorem}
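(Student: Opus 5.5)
The plan is to compute the raw tangents $\partial_{\theta_i}\ket{\psi}$ directly from the tree recursion of Definition~\ref{def:hopf_coordinates} and read off their supports. Each amplitude is a product of factors along its root-to-leaf path, and $\theta_i$ (for an internal node $i$) appears only in the amplitudes of leaves in the subtree rooted at $i$. For such a leaf, $x_k = P_i\, f(\theta_i)\, R_k$. Here $P_i$ is the product of the $d_i$ ancestor factors, which is exactly the right-hand side of \eqref{eq:g_real_diag} before squaring, with signs. The factor $f$ is $\cos$ or $\sin$ according to whether the leaf lies in the left or right child of $i$. The remainder $R_k$ collects the factors below node $i$, together with the leaf phase in case (B).

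Differentiating swaps $\cos\to-\sin$ and $\sin\to\cos$. Hence
\[
\partial_{\theta_i}\ket{\psi} = P_i\big(-\sin\theta_i\ket{L_i}+\cos\theta_i\ket{R_i}\big),
\]
where $\ket{L_i},\ket{R_i}$ are the normalized states generated by the left and right subtrees of $i$. They are normalized because the subtree factors are themselves a sphere parameterization; I will note by induction on depth that the sum of squared products over any complete subtree equals $1$.

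\textbf{Diagonal entries.} The two subtree states have disjoint supports, so $\|\partial_{\theta_i}\psi\|^2 = P_i^2(\sin^2\theta_i+\cos^2\theta_i)=P_i^2$. This gives \eqref{eq:g_real_diag}, and $g_{11}=1$ since the root has empty ancestor product. In case (B) the same computation applies because the leaf phases contribute unit-modulus factors inside $\ket{L_i},\ket{R_i}$. For a phase parameter $\theta_{N+k}$ one has $\partial\ket\psi=\iu x_k\ket{b_k}$, whose squared norm is $|x_k|^2$.

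\textbf{Off-diagonal entries.} This is the main step. For two internal nodes $i\neq\ell$ there are two cases.
\begin{itemize}
\item If neither node is an ancestor of the other, their subtrees are disjoint, so the tangents have disjoint supports and the inner product is $0$.
\item If $\ell$ is a descendant of $i$, say in the left subtree, then $\partial_{\theta_\ell}\psi$ is supported inside the left subtree. On that subtree, $\partial_{\theta_i}\psi$ is proportional to $\ket{\psi}$ restricted there, which in turn is proportional to the left-subtree state. The inner product then reduces to $\braket{L_i|\partial_{\theta_\ell} L_i}$ times a scalar. This vanishes in the real case because $\ket{L_i}$ has constant norm $1$, so $\mathrm{Re}\braket{L_i|\partial L_i}=0$ and the real part is the whole quantity.
\end{itemize}

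In the complex case I must take the real part. I have $\mathrm{Re}\braket{L|\partial_{\theta_\ell} L}=0$, where the scalar prefactor is real since magnitudes are real. More concretely, $\partial_{\theta_\ell}$ of a magnitude angle multiplies each leaf by a real factor while preserving phases. The sum $\sum |x|\,\partial|x|$ over the subtree vanishes by the norm constraint, so the real part is $0$.

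For a magnitude tangent against a phase tangent, $\braket{\partial_{\theta_i}\psi|\iu x_k b_k}=\iu\,\overline{\partial_i x_k}\,x_k$. Since $\partial_i x_k$ and $x_k$ share the phase $e^{\iu\theta_{N+k}}$ and differ by a real factor, this quantity is purely imaginary, so its real part vanishes. Two distinct phase tangents have disjoint supports.

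The main obstacle is this nested-ancestor case. To handle it cleanly I will state it as an auxiliary fact: the norm of any subtree state is constant in the parameters below its root. The required orthogonality then follows uniformly for both the real and complex versions.
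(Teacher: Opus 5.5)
Your proposal is correct and follows essentially the same route as the paper's proof. Both arguments confine each magnitude tangent to the subtree of its node and identify the diagonal entry with the ancestor product. Both kill off-diagonal entries by disjoint supports or, for nested nodes, by constancy of a subtree's norm, and both handle the complex blocks through real-part and purely-imaginary arguments. The only difference is presentational: you write the tangent as $P_i(-\sin\theta_i\ket{L_i}+\cos\theta_i\ket{R_i})$, whereas the paper uses the $-\tan\theta_i\,x_k$ and $\cot\theta_i\,x_k$ scalings, so your version also stays valid at points where $\sin\theta_i$ or $\cos\theta_i$ vanishes.
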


Proofs of the main results are collected in Appendix~\ref{app:proof}.

Theorem~\ref{thm:hopf_metric_diagonal} separates the variational geometry into independent coordinates. For a magnitude parameter, \(g_{i,i}\) is the squared
norm of the amplitudes in the subtree rooted at node \(i\), equivalently the
probability mass entering that node; for a phase parameter, it is the occupation
probability of the corresponding leaf. Since the metric is analytic and diagonal, Riemannian descent requires only componentwise
gradient rescaling, not hardware estimation or inversion of a dense quantum
geometric tensor. Appendix~\ref{app:optimizers} gives the complementary
Jacobian formulas and state-sphere maps used by the optimizer family. The same
tangent-space structure also enables gradient access: the next section shows
that Hopf coordinate tangents can be prepared exactly and organized coherently
across the tree.

\section{Hopf gradient access and scaling}
\label{sec:gradient_access}

We next address gradient access. In the Hopf chart, coordinate tangents are exactly synthesizable quantum states within the same circuit family. This makes differential information directly accessible and permits coherent magnitude-layer batching, together with one indexed batch for the complex phase block.

\begin{definition}[Normalized coordinate tangent]
\label{def:normalized_partial_general}
Let \(g_{i\ell}(\boldsymbol{\theta})\) be the parameter-space metric of
Definition~\ref{def:hopf_metric_convention}. Whenever
\(g_{i,i}(\boldsymbol{\theta})>0\), define
\begin{equation}
\ket{e_i(\boldsymbol{\theta})}
:=
\frac{\ket{\dot\psi_i(\boldsymbol{\theta})}}
{\sqrt{g_{i,i}(\boldsymbol{\theta})}}.
\label{eq:normalized_partial_general}
\end{equation}
Thus \(\ket{\dot\psi_i}\) denotes the raw coordinate tangent, whereas
\(\ket{e_i}\) denotes its normalized tangent state, with
\(\|\ket{e_i(\boldsymbol{\theta})}\|=1\).
\end{definition}

If \(g_{i,i}(\boldsymbol{\theta})=0\), the corresponding coordinate lies on a
boundary stratum of the Hopf parameterization, such as a zero-weight subtree or
zero-occupation leaf, and the normalized coordinate tangent is not part of the
regular chart. The identities below are therefore stated on the regular set;
boundary cases are handled by the inverse-map convention of
Appendix~\ref{app:inverse_hopf_map} or by continuity.

With this normalization, the gradient formula takes a uniform tangent-state form.

\begin{theorem}[Gradient estimation for VQE]
\label{thm:gradient_estimation_short}
Let
\(
\ket{\psi(\boldsymbol{\theta})}=U_{\boldsymbol{\theta}}\ket{0}^{\otimes n}
\)
be a differentiable family of pure states, and let
\[
E(\boldsymbol{\theta})
=
\bra{\psi(\boldsymbol{\theta})}H\ket{\psi(\boldsymbol{\theta})}
\]
for a Hermitian Hamiltonian \(H\).
For each parameter \(\theta_i\),
\begin{equation}
\partial_{\theta_i} E(\boldsymbol{\theta})
=
2\sqrt{g_{i,i}(\boldsymbol{\theta})}\;
\mathrm{Re}\!\left[
\bra{e_i(\boldsymbol{\theta})}\,H\,\ket{\psi(\boldsymbol{\theta})}
\right].
\label{eq:hopf_grad_short}
\end{equation}
Here \(\ket{e_i(\boldsymbol{\theta})}\) denotes the normalized coordinate tangent of Definition~\ref{def:normalized_partial_general}. The formula \eqref{eq:hopf_grad_short} holds for both the real-state and complex-state versions of the Hopf ansatz, with the metric \(g_{i,i}\) chosen according to Definition~\ref{def:hopf_metric_convention}.
\end{theorem}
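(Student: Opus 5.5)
The argument is a direct differentiation followed by the normalization of Definition~\ref{def:normalized_partial_general}, so the proof should be short. First I will differentiate the expectation value with the product rule. Since \(H\) is Hermitian and independent of \(\boldsymbol{\theta}\),
\[
\partial_{\theta_i}E
=\braket{\dot\psi_i|H|\psi}+\braket{\psi|H|\dot\psi_i}
=\braket{\dot\psi_i|H|\psi}+\overline{\braket{\dot\psi_i|H|\psi}}
=2\,\mathrm{Re}\braket{\dot\psi_i|H|\psi}.
\]
The middle equality uses \(\braket{\psi|H|\dot\psi_i}=\overline{\braket{\dot\psi_i|H^\dagger|\psi}}\) together with \(H^\dagger=H\). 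This holds for any differentiable family, real or complex. In the real case, the two terms are already equal real numbers.

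Second, on the regular set \(g_{i,i}(\boldsymbol{\theta})>0\), I substitute \(\ket{\dot\psi_i}=\sqrt{g_{i,i}}\,\ket{e_i}\) from \eqref{eq:normalized_partial_general}. Because \(\sqrt{g_{i,i}}\) is a real positive scalar, it passes through the complex conjugation in the bra and through \(\mathrm{Re}\). This gives \eqref{eq:hopf_grad_short}.

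The only point needing care is that \(\ket{e_i}\) must be a unit vector. This requires \(g_{i,i}=\|\dot\psi_i\|^2\) under the chosen metric convention, and I will check it separately for the two cases of Definition~\ref{def:hopf_metric_convention}:
\begin{itemize}
\item In the complex (round-sphere) convention, \(g^{\mathbb{C}}_{i,i}=\mathrm{Re}\braket{\dot\psi_i|\dot\psi_i}=\|\dot\psi_i\|^2\) holds directly.
\item In the real convention, the projection term vanishes because \(\braket{\psi|\dot\psi_i}=0\), as shown after \eqref{eq:FS_real_def}. Hence \(g^{\mathbb{R}}_{i,i}=\|\dot\psi_i\|^2\) as well.
\end{itemize}
For a general non-Hopf family under a Fubini--Study convention, this identification could fail. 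The theorem is stated for the Hopf metrics, however, so invoking these two facts suffices. Theorem~\ref{thm:hopf_metric_diagonal} is not needed for the identity itself, only for giving \(g_{i,i}\) in closed form.

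Finally, I will remark on the boundary case \(g_{i,i}=0\). There \(\dot\psi_i=0\), so both sides vanish under any convention for \(\ket{e_i}\), or by continuity, consistent with the regular-set caveat stated before the theorem. I expect no step to be a genuine obstacle; the main subtlety is keeping the metric convention consistent with the norm of the raw tangent.
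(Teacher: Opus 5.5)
Your proposal is correct and follows the paper's proof. The product rule with $H=H^\dagger$ gives $\partial_{\theta_i}E=2\,\mathrm{Re}\langle\dot\psi_i|H|\psi\rangle$, and substituting $\ket{\dot\psi_i}=\sqrt{g_{i,i}}\,\ket{e_i}$ on the regular set finishes it. Your extra check that $g_{i,i}=\|\dot\psi_i\|^2$ in both metric conventions is not needed for the identity itself, which is pure substitution; it is what makes $\ket{e_i}$ a unit vector and what justifies your treatment of the boundary case $g_{i,i}=0$, which the paper simply excludes.
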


The ansatz-specific step is exact preparation of these normalized tangents.

\begin{theorem}[Normalized coordinate tangents as quantum states (Hopf)]
\label{thm:synthesis_normalized_direction_hopf}
Let
\[
\ket{\psi(\boldsymbol{\theta})}
=
\begin{cases}
\mathrm{HopfReal}(\boldsymbol{\theta})\ket{0}^{\otimes n},\\
\mathrm{HopfComplex}(\boldsymbol{\theta})\ket{0}^{\otimes n},
\end{cases}
\]
with parameters arranged on the Hopf binary tree as in Definitions~\ref{def:hopf_coordinates} and~\ref{def:hopf_ansatz}.
For any parameter index \(i\) with \(g_{i,i}(\boldsymbol{\theta})>0\), there exists an explicit gate-parameter assignment \(\boldsymbol{\theta}^{(i)}\) on the same ordered Hopf circuit skeleton such that
\begin{equation}
\ket{e_i(\boldsymbol{\theta})}
=
\begin{cases}
\mathrm{HopfReal}\!\big(\boldsymbol{\theta}^{(i)}\big)\ket{0}^{\otimes n},\\
\mathrm{HopfComplex}\!\big(\boldsymbol{\theta}^{(i)}\big)\ket{0}^{\otimes n},
\end{cases}
.
\label{eq:tangent_state_preparation}
\end{equation}
Here \(\mathrm{HopfReal}\) and \(\mathrm{HopfComplex}\) are understood according to the off-chart gate-assignment convention stated above.

Fix a magnitude-node index \(i\in\{1,\ldots,N-1\}\) and define
\(
d_i=\lfloor\log_2 i\rfloor
\).
Let \(s_1,\ldots,s_{d_i}\) be the ancestor indices on the unique path from the root \(s_1=1\) to the parent of node \(i\), and let
\(
q_1\cdots q_{d_i}\in\{0,1\}^{d_i}
\)
be the corresponding left/right bits (MSB first), where \(q_m=0\) for a left branch and \(q_m=1\) for a right branch.
Define
\[
A_L^{(i)} := \{s_m \mid q_m=0\},
\qquad
A_R^{(i)} := \{s_m \mid q_m=1\},
\]
let \(D^{(i)}\) denote the set of strict descendants of node \(i\) (internal nodes only), and let \(\mathcal{L}(i)\) be the set of leaves of the subtree rooted at \(i\).

(A) Magnitude parameters (real and complex cases).
This construction applies identically to the real Hopf gate skeleton and to the magnitude block of the complex Hopf gate skeleton. The resulting assignments \(\boldsymbol{\theta}^{(i)}\) are gate settings for that skeleton and need not belong to the canonical Hopf chart domain.

Define \(\boldsymbol{\theta}^{(i)}\) by
\begin{align}
\theta^{(i)}_i &= \theta_i + \tfrac{\pi}{2}, \label{eq:ti_shift_mag}\\
\theta^{(i)}_a &= 0, \qquad a\in A_L^{(i)}, \label{eq:anc_left_zero_mag}\\
\theta^{(i)}_a &= \tfrac{\pi}{2}, \qquad a\in A_R^{(i)}, \label{eq:anc_right_pi2_mag}\\
\theta^{(i)}_d &= \theta_d, \qquad d\in D^{(i)}, \label{eq:desc_keep_mag}\\
\theta^{(i)}_j &= 0, \qquad \text{otherwise}, \qquad 1\le j\le N-1. \label{eq:else_zero_mag}
\end{align}
In the complex case, the phase block is further set to
\begin{equation}
\theta^{(i)}_{N+\ell} =
\begin{cases}
\theta_{N+\ell}, & \ell\in\mathcal{L}(i),\\
0, & \text{otherwise}.
\end{cases}
\label{eq:phase_keep_on_subtree}
\end{equation}
Then
\begin{equation}
\begin{aligned}
\mathrm{HopfReal}\!\big(\boldsymbol{\theta}^{(i)}\big)\ket{0}^{\otimes n}
&=
\ket{e_i(\boldsymbol{\theta})},\\
\mathrm{HopfComplex}\!\big(\boldsymbol{\theta}^{(i)}\big)\ket{0}^{\otimes n}
&=
\ket{e_i(\boldsymbol{\theta})}.
\end{aligned}
\end{equation}
If node \(i\) lies at depth \(d\) (root at depth \(0\)), the resulting state is supported only on the \(N/2^d\) computational basis states corresponding to the leaves of the subtree rooted at \(i\). Hence its support size is at most \(N/2^d\), with equality whenever its amplitude on every leaf of that subtree is nonzero.

(B) Phase parameters (complex-state Hopf ansatz only).
Let \(i=N+\ell\) with \(0\le \ell\le N-1\).
Write
\[
x_\ell(\boldsymbol{\theta})
=
r_\ell(\boldsymbol{\theta})\,e^{\iu\theta_{N+\ell}},
\qquad
r_\ell(\boldsymbol{\theta})\ge 0.
\]
Let \((t_1,\ldots,t_n)\) be the internal nodes encountered along the root-to-leaf path of leaf \(\ell\), with corresponding bits \((b_1,\ldots,b_n)\) (MSB first).

Define \(\boldsymbol{\theta}^{(i)}\) by clamping the magnitude block so that \(|x_\ell|=1\):
\begin{equation}
\begin{aligned}
\theta^{(i)}_{t_m}
&=
\begin{cases}
0, & b_m=0,\\
\tfrac{\pi}{2}, & b_m=1,
\end{cases}
\\
\theta^{(i)}_j &= 0,
\qquad j\notin\{t_1,\ldots,t_n\},\quad 1\le j\le N-1,
\end{aligned}
\label{eq:phase_case_clamp}
\end{equation}
and in the phase block set
\begin{equation}
\theta^{(i)}_{N+\ell}=\theta_{N+\ell}+\tfrac{\pi}{2},
\qquad
\theta^{(i)}_{N+m}=0 \quad \text{for } m\neq \ell.
\label{eq:phase_case_shift}
\end{equation}
Then
\[
\mathrm{HopfComplex}\!\big(\boldsymbol{\theta}^{(i)}\big)\ket{0}^{\otimes n}
=
\ket{e_{N+\ell}(\boldsymbol{\theta})},
\]
since
\(
\ket{\dot\psi_{N+\ell}(\boldsymbol{\theta})}
=
\iu\,x_\ell(\boldsymbol{\theta})\ket{b_\ell}
\)
and
\(
g_{i,i}(\boldsymbol{\theta})=|x_\ell(\boldsymbol{\theta})|^2
\).

In each case, the ancestor assignments isolate the relevant branch and the \(+\pi/2\) shift produces the normalized tangent state.
\end{theorem}

In what follows, we assume coherent controlled preparation of the relevant Hopf ansatz with a fixed relative phase, so that superpositions of \(\ket{\psi(\boldsymbol{\theta})}\) and \(\ket{e_i(\boldsymbol{\theta})}\) are well defined.

\subsection*{Gradient estimation from signed branch states}

With the normalized tangent states available, each gradient component reduces to
the transition moment
\(
\mathrm{Re}\!\left[
\bra{e_i(\boldsymbol{\theta})}
H
\ket{\psi(\boldsymbol{\theta})}
\right].
\)
Rather than using a separate Hadamard test for each parameter, we introduce the
signed superposition states
\begin{equation}
\ket{\varphi_i^{(s)}(\boldsymbol{\theta})}
:=
\frac{1}{\sqrt{2}}
\left(
\ket{\psi(\boldsymbol{\theta})}
+
s\,\ket{e_i(\boldsymbol{\theta})}
\right),
\; \;
s\in\{+1,-1\},
\label{eq:signed_branch_state}
\end{equation}
together with the energies
\begin{equation}
\begin{aligned}
E_{\psi}
:=
\bra{\psi(\boldsymbol{\theta})}&H\ket{\psi(\boldsymbol{\theta})},
\\
E_i^{\mathrm{tan}}
:=
\bra{e_i(\boldsymbol{\theta})}&H\ket{e_i(\boldsymbol{\theta})},
\\
E_{\varphi_i^{(s)}}
:=
\bra{\varphi_i^{(s)}(\boldsymbol{\theta})}&H\ket{\varphi_i^{(s)}(\boldsymbol{\theta})}.
\end{aligned}
\label{eq:signed_branch_energies}
\end{equation}

A direct expansion yields
\begin{equation}
E_{\varphi_i^{(s)}}
=
\frac{1}{2}\big(E_{\psi}+E_i^{\mathrm{tan}}\big)
+
s\,\mathrm{Re}\!\left[
\bra{e_i(\boldsymbol{\theta})}H\ket{\psi(\boldsymbol{\theta})}
\right].
\label{eq:signed_branch_energy_expansion}
\end{equation}
Rearranging gives, for either sign \(s=\pm1\),
\begin{equation}
\mathrm{Re}\!\left[
\bra{e_i(\boldsymbol{\theta})}H\ket{\psi(\boldsymbol{\theta})}
\right]
=
s\!\left(
E_{\varphi_i^{(s)}}-\tfrac{1}{2}\big(E_{\psi}+E_i^{\mathrm{tan}}\big)
\right).
\label{eq:Re_overlap_single_branch}
\end{equation}

The single-branch identity is useful when \(E_\psi\) and
\(E_i^{\mathrm{tan}}\) are estimated separately: either sign-conditioned branch
average then recovers the desired matrix element, with the observed sign \(s\)
entering as a known prefactor. No postselection is required.

If both branches are combined, one obtains the equivalent symmetric identity
\begin{equation}
\mathrm{Re}\!\left[
\bra{e_i(\boldsymbol{\theta})}H\ket{\psi(\boldsymbol{\theta})}
\right]
=
\frac{1}{2}
\Big(
E_{\varphi_i^{(+)}}-E_{\varphi_i^{(-)}}
\Big).
\label{eq:re_overlap_from_signed_energies}
\end{equation}

Substituting \eqref{eq:Re_overlap_single_branch} into the general gradient formula \eqref{eq:hopf_grad_short} gives the single-branch estimator
\begin{equation}
\partial_{\theta_i} E(\boldsymbol{\theta})
=
2\sqrt{g_{i,i}(\boldsymbol{\theta})}\;
s\!\left(
E_{\varphi_i^{(s)}}-\tfrac{1}{2}\big(E_{\psi}+E_i^{\mathrm{tan}}\big)
\right),
\label{eq:single_branch_gradient_estimator}
\end{equation}
valid for either \(s=+1\) or \(s=-1\).
Equivalently, if both branch energies are estimated, one may write
\begin{equation}
\partial_{\theta_i} E(\boldsymbol{\theta})
=
\sqrt{g_{i,i}(\boldsymbol{\theta})}\;
\Big(
E_{\varphi_i^{(+)}}-E_{\varphi_i^{(-)}}
\Big).
\label{eq:symmetric_branch_gradient_estimator}
\end{equation}

Since \(g_{i,i}(\boldsymbol{\theta})\) is known analytically from
Theorem~\ref{thm:hopf_metric_diagonal}, this gives a
derivative-plus-branch estimator: estimate one baseline energy \(E_\psi\), and
for each \(i\), estimate \(E_i^{\mathrm{tan}}\) together with one
sign-conditioned branch energy. Equivalently, the symmetric identity
\eqref{eq:symmetric_branch_gradient_estimator} trades the separate tangent
energy for the two branch signs. Both forms use the same tangent-state and
branch-state constructions, and the compiled-setting organization below is
unchanged at the level of asymptotic scaling.

\subsection*{Magnitude-layer batching}

The preceding estimators are written coordinate by coordinate. We now organize
their tangent and branch constructions by tree depth, so each compiled setting
serves an entire magnitude layer.

For each depth \(d\in\{0,1,\ldots,n-1\}\), define the internal-node layer
\begin{equation}
\begin{aligned}
\mathcal{L}_d
:&=
\{\, i\in\{1,\ldots,N-1\} : \lfloor\log_2 i\rfloor = d \,\}\\
&=
\{2^d,2^d+1,\ldots,2^{d+1}-1\}.
\end{aligned}
\label{eq:def_layer_i}
\end{equation}
Introduce a reduced index register \(I\) of \(d\) qubits storing the layer offset
\begin{equation}
r := i-2^d \in \{0,\ldots,2^d-1\},
\; \;
i=2^d+r \in \mathcal{L}_d.
\label{eq:reduced_layer_index}
\end{equation}
Thus \(\ket{r}_I\) labels internal node \(i=2^d+r\).

Prepare the uniform superposition over \(\mathcal{L}_d\),
\begin{equation}
\ket{\mathrm{unif}(\mathcal{L}_d)}_I
:=
\frac{1}{\sqrt{|\mathcal{L}_d|}}
\sum_{r=0}^{2^d-1}\ket{r}_I,
\; \;
|\mathcal{L}_d|=2^d.
\label{eq:unif_layer}
\end{equation}
For \(d\ge 1\), this is \(\ket{+}^{\otimes d}\); for \(d=0\), the index register is empty and the layer consists only of the root node.

Using the explicit synthesis maps \(i\mapsto \boldsymbol{\theta}^{(i)}\), we implement a coherent index-controlled preparation on \(I\otimes\mathcal{H}\) that maps
\[
\ket{r}_I\ket{0}^{\otimes n}
\longmapsto
\ket{r}_I\ket{e_{2^d+r}(\boldsymbol{\theta})}
\; \;
\forall\,r\in\{0,\ldots,2^d-1\}.
\]
Concretely, the same Hopf gate skeleton is reused throughout, and the dependence on \(r\) enters only through coherent control logic selecting the shifted or clamped values prescribed by Theorem~\ref{thm:synthesis_normalized_direction_hopf}. Because the synthesis rule \(i\mapsto\boldsymbol{\theta}^{(i)}\) uses only a constant-size set of values at each gate location, each location is implemented by a constant number of index-conditioned multi-controlled rotations, without angle multiplexing.

Acting on \eqref{eq:unif_layer}, this preparation yields the indexed tangent-state preparation
\begin{equation}
\ket{\Phi_d(\boldsymbol{\theta})}
:=
\frac{1}{\sqrt{|\mathcal{L}_d|}}
\sum_{r=0}^{2^d-1}
\ket{r}_I\ket{e_{2^d+r}(\boldsymbol{\theta})}.
\label{eq:indexed_derivative_layer}
\end{equation}
Measuring \(I\) in the computational basis and the system register using standard Pauli measurement schemes yields label-conditioned samples of
\[
E_i^{\mathrm{tan}}
=
\bra{e_i(\boldsymbol{\theta})}H\ket{e_i(\boldsymbol{\theta})},
\qquad
i=2^d+r,
\]
for all \(i\in\mathcal{L}_d\). In the ideal-expectation limit, the
corresponding conditional averages recover these quantities for every label in
the layer.

To estimate the branch energies simultaneously for all \(i\in\mathcal{L}_d\), introduce one ancilla qubit \(a\) and prepare
\begin{equation}
\begin{aligned}
&\ket{\Omega_d(\boldsymbol{\theta})}\\
:=
&\frac{1}{\sqrt{|\mathcal{L}_d|}}
\sum_{r=0}^{2^d-1}
\ket{r}_I
\left(
\frac{
\ket{0}_a\ket{\psi(\boldsymbol{\theta})}
+
\ket{1}_a\ket{e_{2^d+r}(\boldsymbol{\theta})}
}{\sqrt{2}}
\right).
\end{aligned}
\label{eq:indexed_branch_layer}
\end{equation}
Measuring \(a\) in the \(X\) basis yields an outcome \(s\in\{+1,-1\}\), and conditioned on the observed index \(r\), prepares the corresponding state \(\ket{\varphi_i^{(s)}(\boldsymbol{\theta})}\) with \(i=2^d+r\). Because \(\ket{e_i(\boldsymbol{\theta})}\) is normalized and
\(
\mathrm{Re}\braket{\psi(\boldsymbol{\theta})|e_i(\boldsymbol{\theta})}=0
\),
the state \(\ket{\varphi_i^{(s)}(\boldsymbol{\theta})}\) is normalized for both \(s=\pm1\).

Operationally, measuring the ancilla in the \(X\) basis is implemented by a final Hadamard followed by a \(Z\)-basis measurement, yielding a bit \(b\in\{0,1\}\) and hence a known sign \(s=(-1)^b\). The observed pair \((r,s)\), together with the Hamiltonian measurement outcome on the system register, then contributes to the corresponding single-branch estimator \eqref{eq:single_branch_gradient_estimator} for \(i=2^d+r\). Thus, each experimental repetition contributes directly to one gradient component, without postselection.

\subsection*{Single phase-layer batching for the complex ansatz}

The complex Hopf ansatz contains an additional block of leaf-phase parameters
\[
\theta_{N+\ell},
\qquad
\ell=0,\ldots,N-1,
\]
whose normalized tangent states are supported on single computational-basis leaves (Theorem~\ref{thm:synthesis_normalized_direction_hopf}). Unlike the magnitude block, these parameters do not split by tree depth; instead they admit a single leaf-indexed batch.

Introduce an \(n\)-qubit phase-index register \(P\), with the data register's
MSB-to-LSB ordering, and prepare
\begin{equation}
\ket{\mathrm{unif}_{\mathrm{ph}}}_P
:=
\frac{1}{\sqrt{N}}
\sum_{\ell=0}^{N-1}\ket{\ell}_P.
\label{eq:def_phase_uniform}
\end{equation}
Using the explicit synthesis map
\(
\ell\mapsto \boldsymbol{\theta}^{(N+\ell)}
\)
from Theorem~\ref{thm:synthesis_normalized_direction_hopf}(B), we implement a coherent index-controlled preparation on \(P\otimes\mathcal{H}\) such that
\[
\ket{\ell}_P\ket{0}^{\otimes n}
\longmapsto
\ket{\ell}_P\ket{e_{N+\ell}(\boldsymbol{\theta})}
\; \;
\forall\,\ell\in\{0,\ldots,N-1\}.
\]
This yields the indexed phase-derivative state
\begin{equation}
\ket{\Phi_{\mathrm{ph}}(\boldsymbol{\theta})}
:=
\frac{1}{\sqrt{N}}
\sum_{\ell=0}^{N-1}
\ket{\ell}_P
\ket{e_{N+\ell}(\boldsymbol{\theta})}.
\label{eq:def_Phi_phase}
\end{equation}
Measuring \(P\) in the computational basis and the system register with standard Pauli measurement schemes yields label-conditioned samples of
\[
E_{N+\ell}^{\mathrm{tan}}
=
\bra{e_{N+\ell}(\boldsymbol{\theta})}
H
\ket{e_{N+\ell}(\boldsymbol{\theta})}
\]
for all phase parameters. In the ideal-expectation limit, the corresponding
conditional averages recover these phase-block tangent energies.

To estimate the phase-block branch energies simultaneously, introduce one ancilla qubit \(a\) and prepare
\begin{equation}
\begin{aligned}
&\ket{\Omega_{\mathrm{ph}}(\boldsymbol{\theta})}\\
:=
&\frac{1}{\sqrt{N}}
\sum_{\ell=0}^{N-1}
\ket{\ell}_P
\left(
\frac{
\ket{0}_a\ket{\psi(\boldsymbol{\theta})}
+
\ket{1}_a\ket{e_{N+\ell}(\boldsymbol{\theta})}
}{\sqrt{2}}
\right).
\end{aligned}
\label{eq:def_Omega_phase}
\end{equation}
Measuring \(a\) in the \(X\) basis and \(P\) in the computational basis yields, for the observed label \(\ell\) and sign \(s\in\{+1,-1\}\), label- and sign-conditioned samples of
\[
E_{\varphi_{N+\ell}^{(s)}}
=
\bra{\varphi_{N+\ell}^{(s)}(\boldsymbol{\theta})}
H
\ket{\varphi_{N+\ell}^{(s)}(\boldsymbol{\theta})}.
\]
Exactly the same single-branch estimator \eqref{eq:single_branch_gradient_estimator} applies to the phase block, with the observed pair \((\ell,s)\), together with the Hamiltonian measurement outcome on the system register, selecting the corresponding gradient component. No additional postselection step is required.

\begin{theorem}[Uniform per-setting cost of Hopf gradient configurations]
\label{thm:hopf_gradient_setting_cost}
Under the CNOT counting model of Lemma~\ref{lem:cnot_model}, every compiled circuit setting used in the Hopf gradient-access protocol has compiled execution cost \(O(nN)\). More precisely:

(A) Real Hopf ansatz. For every depth
\(
d\in\{0,1,\ldots,n-1\},
\)
the indexed derivative setting
\(
\ket{\Phi_d(\boldsymbol{\theta})}
\)
defined in \eqref{eq:indexed_derivative_layer} and the indexed branch setting
\(
\ket{\Omega_d(\boldsymbol{\theta})}
\)
defined in \eqref{eq:indexed_branch_layer} each have compiled execution cost
$O(nN)$.

(B) Complex Hopf ansatz. For every magnitude depth
\(
d\in\{0,1,\ldots,n-1\},
\)
the magnitude-block settings
\(
\ket{\Phi_d(\boldsymbol{\theta})}
\)
and
\(
\ket{\Omega_d(\boldsymbol{\theta})}
\)
again each have compiled execution cost
$O(nN)$.

In addition, the single phase-block derivative setting
\(
\ket{\Phi_{\mathrm{ph}}(\boldsymbol{\theta})}
\)
defined in \eqref{eq:def_Phi_phase} and the single phase-block branch setting
\(
\ket{\Omega_{\mathrm{ph}}(\boldsymbol{\theta})}
\)
defined in \eqref{eq:def_Omega_phase} each have compiled execution cost
$O(nN)$.

Consequently, for both versions of the Hopf ansatz, every gradient-access configuration, whether derivative or branch, magnitude or phase, has the same asymptotic compiled execution cost as a forward Hopf energy-evaluation setting:
\[
C_{\mathrm{cfg}}(\text{one gradient setting})=O(nN).
\]
\end{theorem}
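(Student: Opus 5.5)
The plan is to bound each compiled setting gate location by gate location. The point is that every setting reuses the same ordered Hopf skeleton, with a constant number of index-conditioned variants per location. I will take from Lemma~\ref{lem:cnot_model} that a rotation with \(k\) controls costs \(O(k)\) CNOTs, possibly with a fixed ancilla overhead. The first step is the forward circuit. The skeleton has \(N-1\) gate locations, each a rotation with at most \(n-1\) data-register controls. This gives \(O(nN)\), which is the baseline cost of an energy-evaluation setting.

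For a magnitude layer \(d\), the index register \(I\) has \(d\le n-1\) qubits. Its uniform preparation costs \(d\) Hadamards and no CNOTs. Next I read off from Theorem~\ref{thm:synthesis_normalized_direction_hopf}(A) what the map \(r\mapsto\boldsymbol{\theta}^{(2^d+r)}\) does at each location \(j\) of the skeleton.
\begin{itemize}
\item If \(j\) lies at depth \(<d\), it is an ancestor of exactly one node in the layer, or of none. Its value is \(0\) or \(\pi/2\), selected by a prefix of the bits of \(r\).
\item If \(j\) lies at depth \(d\), its value is \(\theta_j+\pi/2\) when \(r=j-2^d\) and \(0\) otherwise.
\item If \(j\) lies at depth \(>d\), its value is \(\theta_j\) when \(j\) descends from \(2^d+r\) and \(0\) otherwise. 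The descent condition is again equality of the top \(d\) bits of \(j\) with \(r\).
\end{itemize}
In every case the location carries at most two candidate values: one nonzero rotation plus the identity. The value is selected by an equality test between \(r\) and a fixed bit pattern of length at most \(d\). Each location therefore becomes one rotation controlled by its original data controls together with at most \(d\) index-register controls. That is at most \(n-1+d=O(n)\) controls, so \(O(n)\) CNOTs by Lemma~\ref{lem:cnot_model}. Summing over the \(N-1\) locations gives \(O(nN)\) for \(\ket{\Phi_d}\). In the complex case, the promoted final-layer \(R_{\mathbb{C}}\) gates receive the same treatment. Their phase values follow \eqref{eq:phase_keep_on_subtree}, so they are either kept or zeroed according to the same descendant test. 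This still gives a constant number of variants per location.

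For \(\ket{\Omega_d}\) I add the ancilla \(a\) in \(\ket{+}\). I then run the forward skeleton controlled on \(a=0\) and the indexed tangent skeleton controlled on \(a=1\). Adding one more control per gate keeps every gate at \(O(n)\) controls, so the total remains \(O(nN)\). A final Hadamard on \(a\) is free. Two points need care. First, the controlled preparations must carry a consistent relative phase, which is the coherence assumption already stated. Second, merging the two branches location by location is legitimate because both use the same ordered skeleton.

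The phase block is where I expect the real obstacle, because the index register \(P\) has \(n\) qubits and \(N\) labels instead of \(2^d\). From Theorem~\ref{thm:synthesis_normalized_direction_hopf}(B), the clamped magnitude assignment for label \(\ell\) prepares \(\ket{b_\ell}\) up to phase. A naive multiplexed rotation at each location would be too expensive. The cheaper route, which I would try first, is to observe that the clamped path is simply a copy of the label: it can be realized by \(n\) CNOTs from \(P\) to the data register, giving \(\ket{\ell}_P\ket{b_\ell}\). The phase \(e^{\iu(\theta_{N+\ell}+\pi/2)}\) is then applied by \(N\) rotations, each conditioned on one full value of \(\ell\) with \(n\) controls. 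That costs \(O(nN)\) in total. Together with the controlled forward preparation for \(\ket{\Omega_{\mathrm{ph}}}\), this yields the claimed bound, and taking the maximum over all settings completes the proof.
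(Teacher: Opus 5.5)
Your proof is correct. For the magnitude layers it is essentially the paper's argument. At every skeleton location, the map $r\mapsto\theta^{(2^d+r)}_j$ takes one $r$-independent nonzero value ($\pi/2$, $\theta_j+\pi/2$, or $\theta_j$) under a mixed-polarity prefix condition on at most $d$ index qubits, and is the identity otherwise. So each location costs $O(n)$ CNOTs, and $\ket{\Omega_d}$ only adds the ancilla control. One small slip: a node at depth $h<d$ is an ancestor of $2^{d-h}$ layer nodes, not of at most one. What you actually use, correctly, is that its $\pi/2$ value is triggered by the top $h+1$ bits of $r$.

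For the phase block you take a genuinely different route.

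\textbf{The paper's route.} It stays on the Hopf skeleton. Under the clamped assignments of Theorem~\ref{thm:synthesis_normalized_direction_hopf}(B), each internal node has at most one nontrivial index-conditioned case (angle $\pi/2$, triggered by a prefix of $\ell$). Each promoted $R_{\mathbb C}$ gate has at most two (left or right leaf of its sibling pair). Each case carries at most $n$ extra controls from $P$. So the obstacle you anticipated does not arise, and no multiplexing is needed.

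\textbf{Your route.} You use the fact that the clamped tangent is the $\boldsymbol{\theta}$-independent basis state $\ket{b_\ell}$ times the exact phase $\iu e^{\iu\theta_{N+\ell}}$. You copy $\ell$ into the data register with $n$ CNOTs and place all of the $O(nN)$ cost in $N$ label-conditioned phase gates. This is a valid bound for preparing the state in \eqref{eq:def_Phi_phase}. It is also easier to check, since it needs only the closed form of $\ket{e_{N+\ell}}$ rather than the skeleton realization. What it gives up is the paper's uniform picture, in which every setting is the same ordered Hopf skeleton with index-conditioned variants.

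If you keep your route, make two points explicit. First, each phase must be applied exactly, not up to a global phase; for example, use an $n$-controlled $R_{\mathbb C}(0,\phi,0)$ acting on a data qubit of known value. Second, in $\ket{\Omega_{\mathrm{ph}}}$ the copy CNOTs become Toffolis controlled on $a=1$. Both matter because the relative phase to the $a=0$ branch is exactly what the branch estimator reads out.
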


\subsection*{Complete estimator and scaling analysis}

Since \(N=2^n\), the magnitude tree has exactly $n=\log_2 N$ layers. The real ansatz therefore requires one baseline setting, one tangent
setting per magnitude layer, and one branch setting per magnitude layer:
\begin{equation}
\boxed{
N_{\mathrm{set}}^{\mathbb{R}}
=
1+2n
=
O(\log N).
}
\label{eq:setting_count_real}
\end{equation}
Here a setting means a fixed compiled gate pattern and indexed preparation
choice; the angles \(\theta_j\) are streamed as physical rotation parameters
within that setting.

For the complex ansatz, the magnitude block contributes the same \(2n\)
settings and the phase block adds one tangent and one branch setting:
\[
\boxed{
N_{\mathrm{set}}^{\mathbb{C}}
=
1+2(n+1)
=
O(\log N).
}
\]
Since the number of magnitude parameters is \(N-1\), these setting counts are
also logarithmic in the magnitude-parameter count.

In the hardware-faithful model of Lemma~\ref{lem:cnot_model}, let
\(C_{\mathrm{cfg}}\) denote the compiled execution cost of one circuit setting,
using CNOT count and excluding repeated measurements. A baseline
energy-evaluation setting costs, by Theorem~\ref{thm:hopf_cnot},
\begin{equation}
\boxed{
C_{\mathrm{cfg}}(E)
=
O(nN)
=
O(N\log N).
}
\label{eq:cfg_energy_scaling}
\end{equation}

The indexed magnitude and phase preparations reuse the forward Hopf structure
through subtree-local tangent synthesis. By
Theorem~\ref{thm:hopf_gradient_setting_cost}, each gradient-access setting has
the same asymptotic compiled cost as a forward setting. Let
\(C_{\mathrm{grad}}(\nabla E)\) denote the total compiled cost of the
gradient-access settings needed for one full gradient, excluding repeated
measurements. Then
\begin{equation}
\boxed{
C_{\mathrm{grad}}(\nabla E)
=
O(n)\,C_{\mathrm{cfg}}(E)
=
O(\log N)\,C_{\mathrm{cfg}}(E).
}
\label{eq:cfg_gradient_scaling}
\end{equation}

The indexed preparations organize gradient access into \(O(\log N)\) compiled
circuit families, with the measured index specifying which gradient component
each repetition informs. 
Accordingly, the total number of measurement repetitions required for one full
gradient at a prescribed accuracy depends on the observable decomposition,
label--sign sampling probabilities, estimator variances, and chosen error
criterion, and is not determined by the compiled-setting count alone.

Related approaches reduce gradient overhead through algebraic circuit
structure, density-based models, or expressivity--measurement tradeoffs
\cite{abbas2023quantum,bowles2025backprop,coyle2025density,chinzei2025tradeoff}.
The Hopf construction organizes the required gradient settings within the same
circuit skeleton through exact tangent-state synthesis and analytic binary-tree
geometry, with the attainable precision of each component set by the associated
measurement statistics.

Although stated for the VQE objective
\(E(\boldsymbol{\theta})=\langle H\rangle_{\boldsymbol{\theta}}\), the
construction is not tied to Hamiltonian minimization. It only requires that the
objective derivative reduce to transition elements of Hermitian observables,
\begin{equation}
\operatorname{Re}\!\left[
\bra{e_i(\boldsymbol{\theta})}
O
\ket{\psi(\boldsymbol{\theta})}
\right].
\end{equation}
The same framework applies, after local linearization, to metrological costs
built from finitely many expectation values, probabilities, or probability
phase-derivatives, including local QFI and fixed-readout CFI objectives
\cite{braunstein1994statistical,paris2009quantum,giovannetti2011advances,degen2017quantum,pezze2018quantum}.
Appendix~\ref{app:metrology} gives two local examples: pure-state QFI as an
ultimate probe-state benchmark and fixed-readout Ramsey CFI
objective~\cite{ramsey1950molecular} for a specified parity readout at a
calibrated operating point. In both, Hopf tangent preparation and
compiled-setting organization are unchanged after replacing \(H\) by the
metrological chain-rule observable. The optimal measurement, estimator, noise
model, and adaptive strategy remain problem dependent.

\section{Numerical simulations}
\label{sec:numerical_simulation}

This section evaluates how Hopf geometry affects optimizer performance. All runs
are deterministic real-Hopf state-vector simulations: each step evaluates the
objective and first derivatives exactly from the current state. This matches the
analytic optimizer layer of Appendix~\ref{app:optimizers} and tests updates
using the diagonal Hopf metric, state-sphere geodesics, and vector transport.
Convergence panels use optimizer step as the horizontal axis; aggregate panels
group task-size-seed traces by application class and optimizer.

Hopf coordinates inherit spherical-coordinate boundary cases: zero subtree
weight makes some angles nonunique, and \(g_{i,i}=0\) marks directions outside
the regular tangent chart. Appendix~\ref{app:algorithms} fixes the inverse-map
convention. In the real-state simulations, accepted states are mapped back
through this inverse with angle clipping (\(10^{-6}\) for non-final angles and
\(10^{-9}\) near final-layer sign singularities); gradients use a boundary-safe
tree contraction and a \(10^{-12}\)-floored state-gradient lift.

The main comparison is between three Hopf-geometric optimizers from
Appendix~\ref{app:optimizers} and two coordinate-Adam baselines.  All five
methods use the same stress instances and deterministic cost-and-gradient
information.  The distinction is how that information is used: EGT-CG, R-BB, and R-LBFGS update through the
state-sphere geometry induced by the analytic diagonal metric, while the Adam
baselines update coordinates directly.

Hopf-EGT-CG (exact geodesic transport with conjugate gradients) implements the
update of Ref.~\cite{ferreira2025quantum} in the Hopf circuit chart.
Hopf-Riemannian-BB uses a Riemannian Barzilai--Borwein spectral-gradient step
\cite{barzilai1988two}, with transported two-point secant information setting
the step scale. Hopf-Riemannian-LBFGS applies a limited-memory quasi-Newton
update on the state sphere \cite{liu1989limited,absil2008optimization}. Across
all three methods, the analytic diagonal metric maps coordinate gradients to
state-sphere gradients, the sphere exponential map generates normalized trial
states, and the inverse Hopf map returns accepted states to circuit angles.

The two Adam baselines are included as coordinate reference methods~\cite{kingma2014adam}.
Hopf-Adam applies Adam directly to the Hopf coordinates, using the same exact
Hopf-coordinate gradient as EGT-CG, R-BB, and R-LBFGS. It isolates the effect of
using the Hopf chart without the induced-metric and state-sphere update layer.
M\"ott\"onen-ideal-PS-Adam (``PS'' denotes parameter shift) applies Adam to the
physical post-multiplexing \(R_y\) angles of the real M\"ott\"onen
state-preparation circuit~\cite{mottonen2004transformation}.
Its gradient is the exact parameter-shift-equivalent gradient in those physical
angles. Both Adam baselines use adaptive cost-only backtracking for trial steps.
Thus all methods use the same task instances, iteration budget, and
cost-plus-gradient information interface, while differing in how the coordinate
information is converted into update directions.

\subsection*{Stress-test design}

The benchmarks are designed to emphasize optimizer behavior rather than
problem-specific structure. Each task is scrambled by a fixed real orthogonal
circuit, so the optimum is a generic real state in the computational basis. 
We run five system sizes,
\(n=6,7,8,9,10\), and record \(201\) states per run, including the initial state
at step zero.  For each fixed task instance we use ten deterministic
pseudo-random initial-state seeds.  Thus, for each application class and each
optimizer, the aggregate contains  \(5 \text{ (system sizes)} \times 3 \text{ (tasks)} \times 10 \text{ (seeds)} = 150\) optimization traces.
The scrambling circuit and objective are fixed for a given task; the additional
seeds test sensitivity to the initial variational state rather than generating
new Hamiltonians or new metrological objectives.

The VQE stress tests are:
\begin{itemize}
    \item \textbf{Parent Hamiltonian:}
    \(H=I-\ket{\tau}\!\bra{\tau}\), where \(\ket{\tau}\) is a scrambled target
    state.  The optimal energy gap is zero.

    \item \textbf{Scrambled Hamming spectrum:}
    a normalized diagonal Hamming-distance spectrum is conjugated by the real
    scrambling circuit.  The unique ground state is again a scrambled basis
    state.

    \item \textbf{Small-gap scrambled spectrum:}
    a scrambled diagonal Hamiltonian with one ground state, one nearby excited
    state separated by a gap \(10^{-2}\), and all other levels at unit energy.
\end{itemize}

The metrology-inspired stress tests are:
\begin{itemize}
    \item \textbf{Single-target Fisher objective:}
    a fixed-readout proxy with
    \(F=\langle\tau|\rho|\tau\rangle^2\), minimized as \(1-F\).

    \item \textbf{QFI extremal-superposition objective:}
    a scrambled diagonal generator whose optimum is an equal superposition of
    the extremal generator eigenstates.  The cost is one minus the normalized
    pure-state QFI.

    \item \textbf{Balanced two-target Fisher objective:}
    a nonlinear soft-min objective that rewards simultaneous overlap with two
    scrambled target states.  The optimum is reached when the two target Fisher
    contributions are balanced.
\end{itemize}

VQE panels show energy gaps; metrology panels show Fisher or normalized-QFI
optimality gaps, with lower values better throughout. Roundoff-level negative
final values are clipped to a positive floor on logarithmic axes. All scheduled task-size-seed traces were included in the aggregate statistics.

\begin{figure*}[t]
\centering
\includegraphics[width=\textwidth]{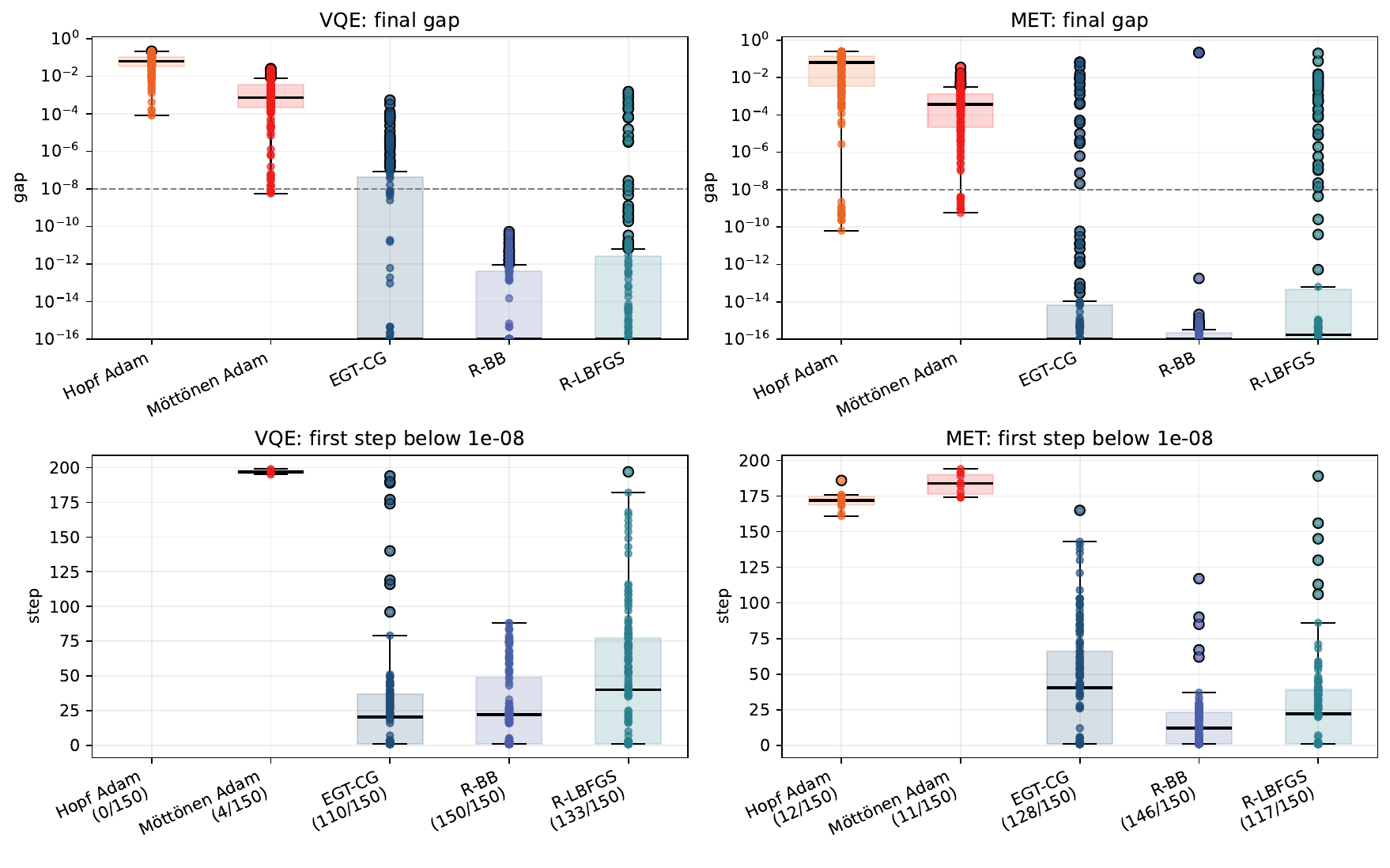}
\caption{
Aggregate final precision and threshold hitting over all stress tests and
system sizes $n=6,\ldots,10$. The top row shows final optimality gaps after
$200$ accepted updates; the gray dashed line marks the $10^{-8}$ reference
threshold. The bottom row shows the first accepted update at which each crossing trace
reaches that threshold; the parenthetical labels report the number crossing
within $200$ updates out of the total number of traces. In every box plot, the box spans the interquartile range, the horizontal line marks the median, and the whiskers extend to the most extreme values within $1.5$ interquartile ranges. Colored circles are the individual
task-size-seed traces. Optimizer colors are used consistently in this figure and
Fig.~\ref{fig:hopf_geometric_n10_convergence}: orange for Hopf-Adam, red for
M\"ott\"onen-ideal-PS-Adam, dark blue for EGT-CG, indigo for R-BB, and teal for
R-LBFGS. Lower values are better. The metric-aware Hopf methods concentrate
near the numerical floor in VQE and have numerical-precision medians in the
metrology-inspired aggregate; their main distinction is the size of the
slow-convergence tail.
}
\label{fig:hopf_geometric_summary}
\end{figure*}

\begin{figure*}[t]
\centering
\includegraphics[width=\textwidth]{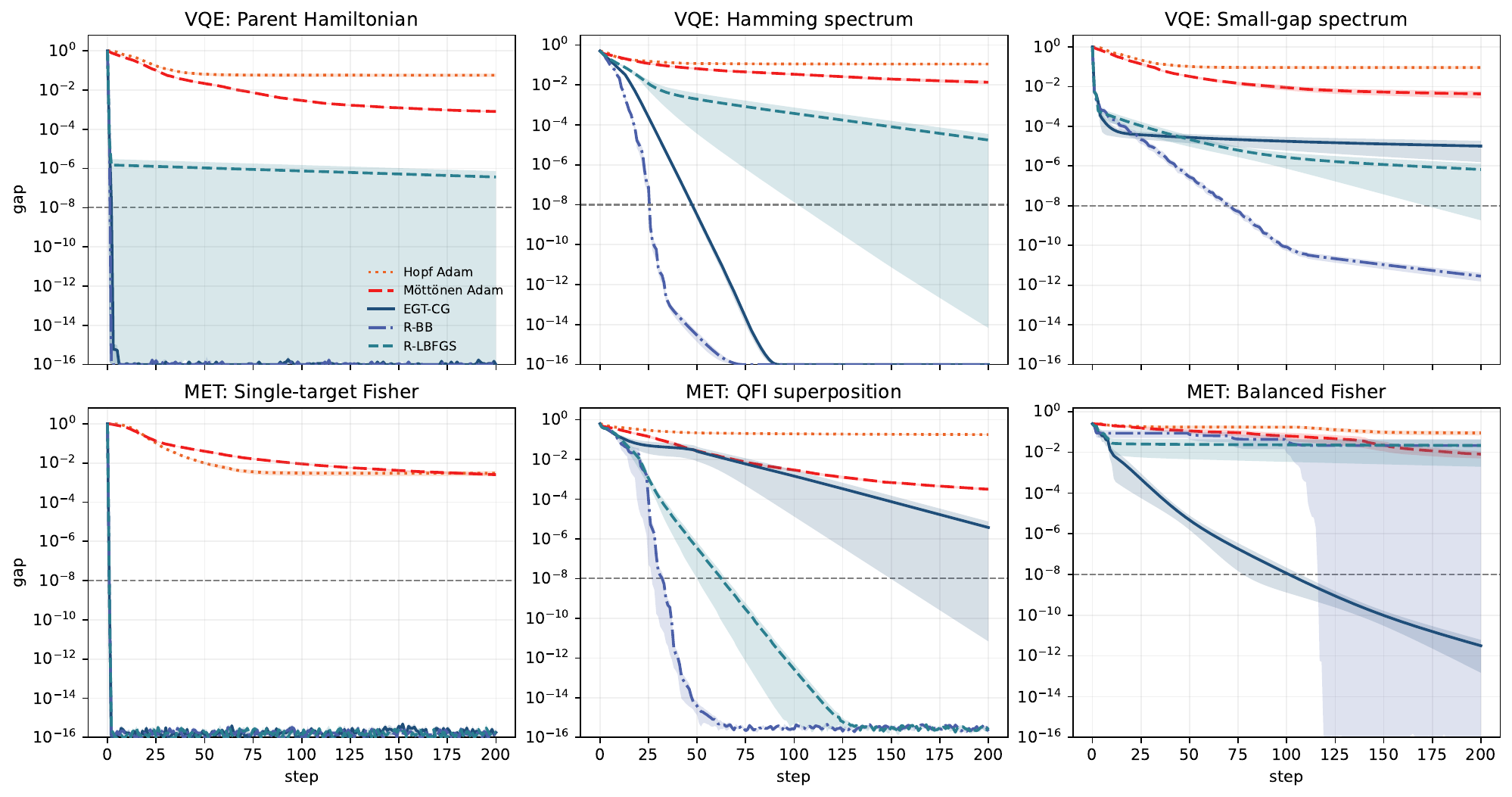}
\caption{
Per-task convergence at $n=10$, with ten deterministic pseudo-random
initial-state seeds per task. The top row shows the parent-Hamiltonian,
scrambled-Hamming-spectrum, and small-gap-spectrum VQE tasks. The bottom row
shows the single-target Fisher, QFI extremal-superposition, and balanced
two-target Fisher tasks. Each curve is the mean optimality gap across the ten
seeds, and each translucent band is the corresponding standard error of the
mean. Optimizer colors match Fig.~\ref{fig:hopf_geometric_summary}, and the
legend's line styles are used consistently in all six panels. The gray dashed
line marks the $10^{-8}$ reference threshold. The vertical scale is logarithmic
and lower values are better. Because several metric-aware traces reach the
plotting floor, these panels are primarily per-task convergence diagnostics;
the final-gap and threshold-hit distributions are summarized in
Fig.~\ref{fig:hopf_geometric_summary}.
}
\label{fig:hopf_geometric_n10_convergence}
\end{figure*}

\subsection*{Aggregate results}

We first summarize aggregate final precision, then turn to per-task convergence. Figure~\ref{fig:hopf_geometric_summary} summarizes the aggregate behavior over
all \(150\) traces in each application class. For the three metric-aware Hopf optimizers, we also report pairwise comparisons: a ``final win'' means a method achieved the smallest final gap on a given trace, a ``best-gap win'' means it achieved the smallest gap at any step, and the mean final rank is the average rank (1--3) of its final gap across all traces.
Across both VQE and
metrology-inspired tests, EGT-CG, R-BB, and R-LBFGS have numerical-precision
median final gaps.  The meaningful distinctions therefore appear in the mean
final gap, final-rank statistics, threshold-hitting behavior, and the size of
the slow-convergence tail.

For VQE, the metric-aware Hopf optimizers reach much smaller final gaps than
the coordinate Adam baselines.  Hopf-Riemannian-BB gives the smallest mean final
VQE gap, \(1.61\times10^{-12}\), with median final gap
\(1.02\times10^{-25}\), \(66\) final wins, and mean final rank \(1.63\) among
EGT-CG, R-BB, and R-LBFGS.  Hopf-EGT-CG is close in median performance,
with median final gap \(1.02\times10^{-25}\) and \(55\) final wins.  It also
obtains the largest number of best-gap wins, \(75\), but its mean final gap,
\(9.23\times10^{-6}\), is larger because a small tail of runs converges more
slowly.  Hopf-Riemannian-LBFGS also reaches near-machine-precision median
accuracy, with median final gap \(5.30\times10^{-17}\), but has the weakest
mean final gap, \(3.21\times10^{-5}\), and the largest mean rank among the
three VQE Hopf-geometric optimizers. For comparison, the stronger VQE coordinate baseline,
M\"ott\"onen-ideal-PS-Adam, has mean final gap \(3.47\times10^{-3}\)
and median \(7.35\times10^{-4}\); Hopf-Adam has mean final gap
\(7.01\times10^{-2}\). These coordinate baselines therefore remain several
orders of magnitude above the best Hopf-geometric mean final gap. Hopf-Adam
separates the effect of the Hopf coordinates themselves from the effect of the
metric-aware state-sphere update.

The metrology-inspired tests show a different geometry-sensitive pattern from
the VQE aggregate. The three metric-aware Hopf optimizers have roundoff-scale
median final gaps: \(0\) for Hopf-EGT-CG, \(1.67\times10^{-16}\) for
Hopf-Riemannian-LBFGS, and \(0\) for Hopf-Riemannian-BB. Their mean final gaps
are instead set by slow or stalled tails: \(1.56\times10^{-3}\),
\(2.36\times10^{-3}\), and \(5.74\times10^{-3}\), respectively. On the same
aggregate, M\"ott\"onen-ideal-PS-Adam has a comparable mean final gap,
\(1.53\times10^{-3}\), but a larger median, \(3.67\times10^{-4}\). Thus the
mean gaps are similar on these tests, while the distributions differ:
EGT-CG, R-BB, and R-LBFGS place more traces at numerical precision, whereas
M\"ott\"onen-ideal-PS-Adam retains a larger typical residual. Hopf-Adam remains
the weakest coordinate baseline, with mean final metrology gap
\(7.96\times10^{-2}\).

Within the metric-aware Hopf family, the metrology-inspired tests do not identify
a single uniformly dominant method.  Hopf-EGT-CG gives the smallest mean final
metrological gap, \(1.56\times10^{-3}\), and the largest number of best-gap
wins, \(66\).  Hopf-Riemannian-BB has the strongest rank-based behavior, with
the best mean final rank, \(1.81\), but its mean final gap is
\(5.74\times10^{-3}\) because of a larger outlier tail. Hopf-Riemannian-LBFGS
has median final gap \(1.67\times10^{-16}\) and mean final gap
\(2.36\times10^{-3}\), but it has the fewest final wins, the fewest best-gap
wins, and the weakest mean final rank within the metrology aggregate.

\subsection*{Per-task convergence}

Figure~\ref{fig:hopf_geometric_n10_convergence} shows where the aggregate
differences arise at \(n=10\). The parent-Hamiltonian and single-target Fisher
objectives are the easiest tasks: most Hopf-geometric traces rapidly reach the
target scale. On the Hamming, small-gap, QFI, and balanced Fisher tasks, the
mean curves often overlap after the fast initial descent, so the important
differences are the slow tails and the traces that fail to cross the
\(10^{-8}\) threshold within \(200\) accepted iterations.  The balanced
two-target Fisher objective is the most diagnostic nonlinear test, especially
for identifying such slow-tail behavior.

These results attribute the observed advantage to optimizer-side use of the
Hopf-induced geometry. The real Hopf ansatz supplies the state family, while the
per-update behavior reflects the diagonal Hopf metric, state-sphere geodesics,
and vector transport. Hopf-Adam provides the corresponding coordinate-space
reference by using the same coordinates and coordinate gradients with a generic
coordinate update. The remaining differences among BB, EGT-CG, and L-BFGS are
optimizer-level robustness differences within a common Hopf coordinate-and-metric
framework.

\section{Conclusion}

The Hopf ansatz gives arbitrary pure-state optimization an executable geometry.
A single binary tree prepares normalized real and complex state vectors, recovers circuit angles from amplitudes, and equips those angles with local geometric
data for metric-aware updates. In deterministic real-state benchmarks with exact
cost and gradient evaluations, metric-aware Hopf optimizers reach
numerical-precision median gaps; relative to coordinate-update baselines, the
clearest gains are smaller VQE mean final gaps and stronger concentration of
metrology-inspired traces at numerical precision.

At the optimizer interface, local calculus becomes circuit operations.
Coordinate derivatives appear as known metric factors times tangent-state
transition moments, and branch-state constructions express them through
measurable expectation values. The tree organizes the compiled gradient-access circuit families: magnitude
tangents are grouped by layer, and complex phase tangents by a leaf-indexed
construction. Consequently, the number of distinct compiled gradient-access
circuit families grows only logarithmically with Hilbert-space dimension, while
the measurement budget required for a chosen precision remains a separate
statistical cost.

The framework applies most directly when the search object is an arbitrary
normalized quantum state vector and the quantity being optimized admits a usable
local derivative formula. The same geometric role can be explored in state
discrimination \cite{bae2015quantum}, variational quantum hypothesis testing
\cite{subramanian2024shallow}, variational sensing and metrology
\cite{koczor2020variational,kaubruegger2019variational,kaubruegger2021quantum,meyer2021variational,le2023variational,yang2021hybrid,maclellan2024end},
quantum control \cite{rabitz2000whither,glaser2015training}, and variational
compilation \cite{khatri2019quantumassisted}. Across these settings,
objectives, readouts, and performance gains remain problem-specific, but the
need for an executable bridge between state-space geometry and circuit
parameters is shared.

The main message is structural. When arbitrary-state freedom is part of the
model, the circuit can carry its own coordinate map, metric, and differential
probes. The Hopf ansatz turns universal state preparation into a navigable circuit
framework, a compass for moving on the quantum state sphere.

\begin{acknowledgments}
The authors are sincerely grateful to Allan Tosta, Andr\'e J. Ferreira-Martins,
Thiago O. Maciel, and Tobias Haug for thoughtful discussions and perspectives
that helped shape this work, and to Leandro Aolita and the Technology Innovation
Institute (TII) for their valued support of the project.

Special thanks are due to Giancarlo Camilo and Renato M. S. Farias for sustained
guidance, encouragement, and perspective, and to Ranyiliu Chen and Xin Wang for
help with the final revision.

Guangxi Li was supported by National Natural Science Foundation of China (Grant No. 62402206). This work was partially supported by the Guangdong Provincial Quantum Science Strategic Initiative (Grant Nos.~GDZX2403008, GDZX2503001, and GDZX2403001).
\end{acknowledgments}

\vspace{0.15cm}

\section*{Data and code availability}

The manuscript uses analytic derivations and deterministic state-vector
simulations; no experimental data were used. The accompanying public
repository~\cite{lin2026hopfansatzcode} provides documented commands to
regenerate the reported real-Hopf stress-test data, diagnostics, and figures.
It also includes auxiliary checks: an \(n=6\) complex-Hopf stress test using the
same three VQE and three metrology-inspired task families, finite-shot
signed-branch gradient tests, CNOT-count verification, and Qibo~\cite{efthymiou2022qibo}
circuit-level gradient-access examples for local four-qubit real and complex
VQE toys. 

Generated CSV files are omitted from source control because they are derived
artifacts and can be large, but they can be regenerated from the documented
commands. Selected figures and diagnostic panels are included as release
artifacts for direct comparison with the manuscript results.

\bibliographystyle{apsrev4-2}
\bibliography{references}

@article{farias2025quantum,
   title={{\href{https://doi.org/10.1103/PhysRevApplied.23.044014}{Quantum encoder for fixed-Hamming-weight subspaces}}},
   journal={Physical Review Applied},
   author={Farias, Renato MS and Maciel, Thiago O and Camilo, Giancarlo and Lin, Ruge and Ramos-Calderer, Sergi and Aolita, Leandro},
   year={2025},
   DOI={10.1103/PhysRevApplied.23.044014},
   url={https://doi.org/10.1103/PhysRevApplied.23.044014}
}

@misc{ferreira2025quantum,
  author={Ferreira-Martins, Andr{\'e} J. and Farias, Renato M. S. and Camilo, Giancarlo and Maciel, Thiago O. and Tosta, Allan and Lin, Ruge and Alhajri, Abdulla and Haug, Tobias and Aolita, Leandro},
  archivePrefix={arXiv},
  eprint={2506.17395},
  primaryClass={quant-ph},
  year={2025}
}

@article{schobel2016orthogonal,
   title={{\href{https://doi.org/10.3842/SIGMA.2016.041}{Are orthogonal separable coordinates really classified?}}},
   journal={SIGMA. Symmetry, Integrability and Geometry: Methods and Applications},
   author={Sch{\"o}bel, Konrad},
   year={2016},
   DOI={10.3842/SIGMA.2016.041},
   url={https://doi.org/10.3842/SIGMA.2016.041}
}

@article{cohl2013fourier,
   title={{\href{https://doi.org/10.3842/SIGMA.2013.042}{Fourier, Gegenbauer and Jacobi expansions for a power-law fundamental solution of the polyharmonic equation and polyspherical addition theorems}}},
   journal={SIGMA. Symmetry, Integrability and Geometry: Methods and Applications},
   author={Cohl, Howard S},
   year={2013},
   DOI={10.3842/SIGMA.2013.042},
   url={https://doi.org/10.3842/SIGMA.2013.042}
}

@article{sun2023asymptotically,
   title={{\href{https://doi.org/10.1109/TCAD.2023.3244885}{Asymptotically optimal circuit depth for quantum state preparation and general unitary synthesis}}},
   journal={IEEE Transactions on Computer-Aided Design of Integrated Circuits and Systems},
   author={Sun, Xiaoming and Tian, Guojing and Yang, Shuai and Yuan, Pei and Zhang, Shengyu},
   year={2023},
   DOI={10.1109/TCAD.2023.3244885},
   url={https://doi.org/10.1109/TCAD.2023.3244885}
}

@article{stokes2020quantum,
   title={{\href{https://doi.org/10.22331/q-2020-05-25-269}{Quantum natural gradient}}},
   journal={Quantum},
   author={Stokes, James and Izaac, Josh and Killoran, Nathan and Carleo, Giuseppe},
   year={2020},
   DOI={10.22331/q-2020-05-25-269},
   url={https://doi.org/10.22331/q-2020-05-25-269}
}

@article{abbas2023quantum,
  title={{\href{https://dl.acm.org/doi/10.5555/3666122.3668062}{On quantum backpropagation, information reuse, and cheating measurement collapse}}},
  journal={Advances in Neural Information Processing Systems},
  author={Abbas, Amira and King, Robbie and Huang, Hsin-Yuan and Huggins, William J and Movassagh, Ramis and Gilboa, Dar and McClean, Jarrod},
  year={2023},
  DOI={10.5555/3666122.3668062},
  url={https://dl.acm.org/doi/10.5555/3666122.3668062}
}

@article{vale2023circuit,
  title={{\href{https://dl.acm.org/doi/10.1109/TCAD.2023.3327102}{Circuit decomposition of multicontrolled special unitary single-qubit gates}}},
  journal={IEEE Transactions on Computer-Aided Design of Integrated Circuits and Systems},
  author={Vale, Rafaella and Azevedo, Thiago Melo D and Ara{\'u}jo, Ismael CS and Araujo, Israel F and Da Silva, Adenilton J},
  year={2023},
  DOI={10.1109/TCAD.2023.3327102},
  url={https://dl.acm.org/doi/10.1109/TCAD.2023.3327102}
}

@misc{shende2008cnot,
  author={Shende, Vivek V. and Markov, Igor L.},
  archivePrefix={arXiv},
  eprint={0803.2316},
  primaryClass={quant-ph},
  year={2008}
}

@article{barenco1995elementary,
  title={{\href{https://doi.org/10.1103/PhysRevA.52.3457}{Elementary gates for quantum computation}}},
  journal={Physical review A},
  author={Barenco, Adriano and Bennett, Charles H and Cleve, Richard and DiVincenzo, David P and Margolus, Norman and Shor, Peter and Sleator, Tycho and Smolin, John A and Weinfurter, Harald},
  year={1995},
  DOI={10.1103/PhysRevA.52.3457},
  url={https://doi.org/10.1103/PhysRevA.52.3457}
}

@misc{mottonen2004transformation,
  author={M{\"o}tt{\"o}nen, Mikko and Vartiainen, Juha J. and Bergholm, Ville and Salomaa, Martti M.},
  archivePrefix={arXiv},
  eprint={quant-ph/0407010},
  year={2004}
}

@article{shende2005synthesis,
  title={{\href{https://doi.org/10.1109/TCAD.2005.855930}{Synthesis of quantum-logic circuits}}},
  journal={IEEE Transactions on Computer-Aided Design of Integrated Circuits and Systems},
  author={Shende, Vivek V and Bullock, Stephen S and Markov, Igor L},
  year={2005},
  DOI={10.1109/TCAD.2005.855930},
  url={https://doi.org/10.1109/TCAD.2005.855930}
}

@article{plesch2011quantum,
  title={{\href{https://doi.org/10.1103/PhysRevA.83.032302}{Quantum-state preparation with universal gate decompositions}}},
  journal={Physical Review A},
  author={Plesch, Martin and Brukner, {\v{C}}aslav},
  year={2011},
  DOI={10.1103/PhysRevA.83.032302},
  url={https://doi.org/10.1103/PhysRevA.83.032302}
}

@misc{knill1995approximation,
  author={Knill, Emanuel},
  archivePrefix={arXiv},
  eprint={quant-ph/9508006},
  year={1995}
}

@article{zhang2022quantum,
  title={{\href{https://doi.org/10.1103/PhysRevLett.129.230504}{Quantum state preparation with optimal circuit depth: Implementations and applications}}},
  journal={Physical Review Letters},
  author={Zhang, Xiao-Ming and Li, Tongyang and Yuan, Xiao},
  year={2022},
  DOI={10.1103/PhysRevLett.129.230504},
  url={https://doi.org/10.1103/PhysRevLett.129.230504}
}

@article{peruzzo2014variational,
  title={{\href{https://doi.org/10.1038/ncomms5213}{A variational eigenvalue solver on a photonic quantum processor}}},
  journal={Nature communications},
  author={Peruzzo, Alberto and McClean, Jarrod and Shadbolt, Peter and Yung, Man-Hong and Zhou, Xiao-Qi and Love, Peter J and Aspuru-Guzik, Al{\'a}n and O’brien, Jeremy L},
  year={2014},
  DOI={10.1038/ncomms5213},
  url={https://doi.org/10.1038/ncomms5213}
}

@article{mcclean2016theory,
  title={{\href{https://doi.org/10.1088/1367-2630/18/2/023023}{The theory of variational hybrid quantum-classical algorithms}}},
  journal={New Journal of Physics},
  author={McClean, Jarrod R and Romero, Jonathan and Babbush, Ryan and Aspuru-Guzik, Al{\'a}n},
  year={2016},
  DOI={10.1088/1367-2630/18/2/023023},
  url={https://doi.org/10.1088/1367-2630/18/2/023023}
}

@article{cerezo2021variational,
  title={{\href{https://doi.org/10.1038/s42254-021-00348-9}{Variational quantum algorithms}}},
  journal={Nature Reviews Physics},
  author={Cerezo, Marco and Arrasmith, Andrew and Babbush, Ryan and Benjamin, Simon C and Endo, Suguru and Fujii, Keisuke and McClean, Jarrod R and Mitarai, Kosuke and Yuan, Xiao and Cincio, Lukasz and others},
  year={2021},
  DOI={10.1038/s42254-021-00348-9},
  url={https://doi.org/10.1038/s42254-021-00348-9}
}

@article{schuld2019evaluating,
  title={{\href{https://doi.org/10.1103/PhysRevA.99.032331}{Evaluating analytic gradients on quantum hardware}}},
  journal={Physical Review A},
  author={Schuld, Maria and Bergholm, Ville and Gogolin, Christian and Izaac, Josh and Killoran, Nathan},
  year={2019},
  DOI={10.1103/PhysRevA.99.032331},
  url={https://doi.org/10.1103/PhysRevA.99.032331}
}

@article{banchi2021measuring,
  title={{\href{https://doi.org/10.22331/q-2021-01-25-386}{Measuring analytic gradients of general quantum evolution with the stochastic parameter shift rule}}},
  journal={Quantum},
  author={Banchi, Leonardo and Crooks, Gavin E},
  year={2021},
  DOI={10.22331/q-2021-01-25-386},
  url={https://doi.org/10.22331/q-2021-01-25-386}
}

@article{wierichs2022general,
  title={{\href{https://doi.org/10.22331/q-2022-03-30-677}{General parameter-shift rules for quantum gradients}}},
  journal={Quantum},
  author={Wierichs, David and Izaac, Josh and Wang, Cody and Lin, Cedric Yen-Yu},
  year={2022},
  DOI={10.22331/q-2022-03-30-677},
  url={https://doi.org/10.22331/q-2022-03-30-677}
}

@article{bowles2025backprop,
  title={{\href{https://doi.org/10.22331/q-2025-10-02-1873}{Backpropagation scaling in parameterised quantum circuits}}},
  journal={Quantum},
  author={Bowles, Joseph and Wierichs, David and Park, Chae-Yeun},
  year={2025},
  DOI={10.22331/q-2025-10-02-1873},
  url={https://doi.org/10.22331/q-2025-10-02-1873}
}

@article{chinzei2025tradeoff,
  title={{\href{https://doi.org/10.1038/s41534-025-01036-7}{Trade-off between gradient measurement efficiency and expressivity in deep quantum neural networks}}},
  journal={npj Quantum Information},
  author={Chinzei, Koki and Yamano, Shinichiro and Tran, Quoc Hoan and Endo, Yasuhiro and Oshima, Hirotaka},
  year={2025},
  DOI={10.1038/s41534-025-01036-7},
  url={https://doi.org/10.1038/s41534-025-01036-7}
}

@article{coyle2025density,
  title={{\href{https://doi.org/10.1038/s41534-025-01099-6}{Training-efficient density quantum machine learning}}},
  journal={npj Quantum Information},
  author={Coyle, Brian and Raj, Snehal and Mathur, Natansh and Cherrat, El Amine and Jain, Nishant and Kazdaghli, Skander and Kerenidis, Iordanis},
  year={2025},
  DOI={10.1038/s41534-025-01099-6},
  url={https://doi.org/10.1038/s41534-025-01099-6}
}

@article{meyer2021variational,
  title={{\href{https://doi.org/10.1038/s41534-021-00425-y}{A variational toolbox for quantum multi-parameter estimation}}},
  journal={npj Quantum Information},
  author={Meyer, Johannes Jakob and Borregaard, Johannes and Eisert, Jens},
  year={2021},
  DOI={10.1038/s41534-021-00425-y},
  url={https://doi.org/10.1038/s41534-021-00425-y}
}

@article{kaubruegger2023optimal,
  title={{\href{https://doi.org/10.1103/PRXQuantum.4.020333}{Optimal and variational multiparameter quantum metrology and vector-field sensing}}},
  journal={PRX Quantum},
  author={Kaubruegger, Raphael and Shankar, Athreya and Vasilyev, Denis V and Zoller, Peter},
  year={2023},
  DOI={10.1103/PRXQuantum.4.020333},
  url={https://doi.org/10.1103/PRXQuantum.4.020333}
}

@misc{vasilyev2024optimal,
  author={Vasilyev, Denis V. and Shankar, Athreya and Kaubruegger, Raphael and Zoller, Peter},
  archivePrefix={arXiv},
  eprint={2404.14194},
  primaryClass={quant-ph},
  year={2024}
}

@article{le2023variational,
   title={{\href{https://doi.org/10.1038/s41598-023-44786-0}{Variational quantum metrology for multiparameter estimation under dephasing noise}}},
   journal={Scientific Reports},
   author={Le, Trung Kien and Nguyen, Hung Q and Ho, Le Bin},
   year={2023},
   DOI={10.1038/s41598-023-44786-0},
   url={https://doi.org/10.1038/s41598-023-44786-0}
}

@article{maclellan2024end,
   title={{\href{https://doi.org/10.1038/s41534-024-00914-w}{End-to-end variational quantum sensing}}},
   journal={npj Quantum Information},
   author={MacLellan, Benjamin and Roztocki, Piotr and Czischek, Stefanie and Melko, Roger G},
   year={2024},
   DOI={10.1038/s41534-024-00914-w},
   url={https://doi.org/10.1038/s41534-024-00914-w}
}

@article{subramanian2024shallow,
   title={{\href{https://doi.org/10.1103/PhysRevA.110.032424}{Shallow-depth variational quantum hypothesis testing}}},
   journal={Physical Review A},
   author={Subramanian, Mahadevan and Vinjanampathy, Sai},
   year={2024},
   DOI={10.1103/PhysRevA.110.032424},
   url={https://doi.org/10.1103/PhysRevA.110.032424}
}

@article{kaubruegger2021quantum,
   title={{\href{https://doi.org/10.1103/PhysRevX.11.041045}{Quantum variational optimization of Ramsey interferometry and atomic clocks}}},
   journal={Physical review X},
   author={Kaubruegger, Raphael and Vasilyev, Denis V and Schulte, Marius and Hammerer, Klemens and Zoller, Peter},
   year={2021},
   DOI={10.1103/PhysRevX.11.041045},
   url={https://doi.org/10.1103/PhysRevX.11.041045}
}

@article{araujo2023configurable,
   title={{\href{https://doi.org/10.1007/s11128-023-03869-7}{Configurable sublinear circuits for quantum state preparation}}},
   journal={Quantum Information Processing},
   author={Araujo, Israel F and Park, Daniel K and Ludermir, Teresa B and Oliveira, Wilson R and Petruccione, Francesco and Da Silva, Adenilton J},
   year={2023},
   DOI={10.1007/s11128-023-03869-7},
   url={https://doi.org/10.1007/s11128-023-03869-7}
}

@article{braunstein1994statistical,
  title={{\href{https://doi.org/10.1103/PhysRevLett.72.3439}{Statistical distance and the geometry of quantum states}}},
  author={Braunstein, Samuel L. and Caves, Carlton M.},
  journal={Physical Review Letters},
  year={1994},
  doi={10.1103/PhysRevLett.72.3439},
  url={https://doi.org/10.1103/PhysRevLett.72.3439}
}

@article{paris2009quantum,
  title={{\href{https://doi.org/10.1142/S0219749909004839}{Quantum estimation for quantum technology}}},
  author={Paris, Matteo G. A.},
  journal={International Journal of Quantum Information},
  year={2009},
  doi={10.1142/S0219749909004839},
  url={https://doi.org/10.1142/S0219749909004839}
}

@article{giovannetti2011advances,
  title={{\href{https://doi.org/10.1038/nphoton.2011.35}{Advances in quantum metrology}}},
  author={Giovannetti, Vittorio and Lloyd, Seth and Maccone, Lorenzo},
  journal={Nature Photonics},
  year={2011},
  doi={10.1038/nphoton.2011.35},
  url={https://doi.org/10.1038/nphoton.2011.35}
}

@article{degen2017quantum,
  title={{\href{https://doi.org/10.1103/RevModPhys.89.035002}{Quantum sensing}}},
  author={Degen, Christian L. and Reinhard, Friedemann and Cappellaro, Paola},
  journal={Reviews of Modern Physics},
  year={2017},
  doi={10.1103/RevModPhys.89.035002},
  url={https://doi.org/10.1103/RevModPhys.89.035002}
}

@article{pezze2018quantum,
  title={{\href{https://doi.org/10.1103/RevModPhys.90.035005}{Quantum metrology with nonclassical states of atomic ensembles}}},
  author={Pezz{\`e}, Luca and Smerzi, Augusto and Oberthaler, Markus K. and Schmied, Roman and Treutlein, Philipp},
  journal={Reviews of Modern Physics},
  year={2018},
  doi={10.1103/RevModPhys.90.035005},
  url={https://doi.org/10.1103/RevModPhys.90.035005}
}

@article{ramsey1950molecular,
  title={{\href{https://doi.org/10.1103/PhysRev.78.695}{A molecular beam resonance method with separated oscillating fields}}},
  author={Ramsey, Norman F.},
  journal={Physical Review},
  year={1950},
  doi={10.1103/PhysRev.78.695},
  url={https://doi.org/10.1103/PhysRev.78.695}
}

@article{koczor2020variational,
  title={{\href{https://doi.org/10.1088/1367-2630/ab965e}{Variational-state quantum metrology}}},
  author={Koczor, B{\'a}lint and Endo, Suguru and Jones, Tyson and Matsuzaki, Yuichiro and Benjamin, Simon C.},
  journal={New Journal of Physics},
  year={2020},
  doi={10.1088/1367-2630/ab965e},
  url={https://doi.org/10.1088/1367-2630/ab965e}
}

@article{yang2021hybrid,
  title={{\href{https://doi.org/10.1038/s41598-020-80070-1}{Hybrid quantum-classical approach to enhanced quantum metrology}}},
  author={Yang, Xiaodong and Chen, Xi and Li, Jun and Peng, Xinhua and Laflamme, Raymond},
  journal={Scientific Reports},
  year={2021},
  doi={10.1038/s41598-020-80070-1},
  url={https://doi.org/10.1038/s41598-020-80070-1}
}

@article{kaubruegger2019variational,
  title={{\href{https://doi.org/10.1103/PhysRevLett.123.260505}{Variational spin-squeezing algorithms on programmable quantum sensors}}},
  author={Kaubruegger, Raphael and Silvi, Pietro and Kokail, Christian and van Bijnen, Rick and Rey, Ana Maria and Ye, Jun and Kaufman, Adam M. and Zoller, Peter},
  journal={Physical Review Letters},
  year={2019},
  doi={10.1103/PhysRevLett.123.260505},
  url={https://doi.org/10.1103/PhysRevLett.123.260505}
}

@article{bae2015quantum,
  title={{\href{https://doi.org/10.1088/1751-8113/48/8/083001}{Quantum state discrimination and its applications}}},
  author={Bae, Joonwoo and Kwek, Leong-Chuan},
  journal={Journal of Physics A: Mathematical and Theoretical},
  year={2015},
  doi={10.1088/1751-8113/48/8/083001},
  url={https://doi.org/10.1088/1751-8113/48/8/083001}
}

@article{khatri2019quantumassisted,
  title={{\href{https://doi.org/10.22331/q-2019-05-13-140}{Quantum-assisted quantum compiling}}},
  author={Khatri, Sumeet and LaRose, Ryan and Poremba, Alexander and Cincio, Lukasz and Sornborger, Andrew T. and Coles, Patrick J.},
  journal={Quantum},
  year={2019},
  doi={10.22331/q-2019-05-13-140},
  url={https://doi.org/10.22331/q-2019-05-13-140}
}

@article{rabitz2000whither,
  title={{\href{https://doi.org/10.1126/science.288.5467.824}{Whither the future of controlling quantum phenomena?}}},
  author={Rabitz, Herschel and de Vivie-Riedle, Regina and Motzkus, Marcus and Kompa, Karl},
  journal={Science},
  year={2000},
  doi={10.1126/science.288.5467.824},
  url={https://doi.org/10.1126/science.288.5467.824}
}

@article{glaser2015training,
  title={{\href{https://doi.org/10.1140/epjd/e2015-60464-1}{Training Schr{\"o}dinger's cat: quantum optimal control}}},
  author={Glaser, Steffen J. and Boscain, Ugo and Calarco, Tommaso and Koch, Christiane P. and K{\"o}ckenberger, Walter and Kosloff, Ronnie and Kuprov, Ilya and Luy, Burkhard and Schirmer, Sophie and Schulte-Herbr{\"u}ggen, Thomas and others},
  journal={The European Physical Journal D},
  year={2015},
  doi={10.1140/epjd/e2015-60464-1},
  url={https://doi.org/10.1140/epjd/e2015-60464-1}
}

@misc{lin2026hopfansatzcode,
  author = {Lin, Ruge and Li, Guangxi},
  title = {{\href{https://github.com/GoGoKo699/Hopf-ansatz}{Hopf ansatz GitHub repository}}},
  year = {2026}
}

@misc{wilkens2023quantumcircuitlibrary,
  author = {Wilkens, Jadwiga},
  title = {{\href{https://github.com/wilkensJ/quantum-circuit-drawio-library}{Quantum Circuit Library}}},
  year = {2023}
}

@article{dai1999nonlinear,
  title={{\href{https://doi.org/10.1137/S1052623497318992}{A nonlinear conjugate gradient method with a strong global convergence property}}},
  author={Dai, Yu-Hong and Yuan, Yaxiang},
  journal={SIAM Journal on optimization},
  year={1999},
  doi={10.1137/S1052623497318992},
  url={https://doi.org/10.1137/S1052623497318992}
}

@article{liu1989limited,
  title={{\href{https://doi.org/10.1007/BF01589116}{On the limited memory BFGS method for large scale optimization}}},
  author={Liu, Dong C and Nocedal, Jorge},
  journal={Mathematical programming},
  year={1989},
  doi={10.1007/BF01589116},
  url={https://doi.org/10.1007/BF01589116}
}

@book{absil2008optimization,
  title={{\href{https://dl.acm.org/doi/10.5555/1557548}{Optimization algorithms on matrix manifolds}}},
  author={Absil, P-A and Mahony, Robert and Sepulchre, Rodolphe},
  publisher={Princeton University Press},
  year={2008},
  doi={10.5555/1557548},
  url={https://dl.acm.org/doi/10.5555/1557548}
}

@article{barzilai1988two,
  title={{\href{https://doi.org/10.1093/imanum/8.1.141}{Two-point step size gradient methods}}},
  author={Barzilai, Jonathan and Borwein, Jonathan M},
  journal={IMA journal of numerical analysis},
  year={1988},
  doi={10.1093/imanum/8.1.141},
  url={https://doi.org/10.1093/imanum/8.1.141}
}

@article{hestenes1952methods,
  title={{\href{https://doi.org/10.6028/JRES.049.044}{Methods of conjugate gradients for solving linear systems}}},
  author={Hestenes, Magnus R and Stiefel, Eduard and others},
  journal={Journal of research of the National Bureau of Standards},
  year={1952},
  doi={10.6028/JRES.049.044},
  url={https://doi.org/10.6028/JRES.049.044}
}

@article{efthymiou2022qibo,
  title={{\href{https://doi.org/10.1088/2058-9565/ac39f5}{Qibo: a framework for quantum simulation with hardware acceleration}}},
  author={Efthymiou, Stavros and Ramos-Calderer, Sergi and Bravo-Prieto, Carlos and P{\'e}rez-Salinas, Adri{\'a}n and Garc{\'\i}a-Mart{\'\i}n, Diego and Garcia-Saez, Artur and Latorre, Jos{\'e} Ignacio and Carrazza, Stefano},
  journal={Quantum Science \& Technology},
  year={2022},
  doi={10.1088/2058-9565/ac39f5},
  url={https://doi.org/10.1088/2058-9565/ac39f5}
}

@misc{kingma2014adam,
  author={Kingma, Diederik P. and Ba, Jimmy},
  archivePrefix={arXiv},
  eprint={1412.6980},
  primaryClass={cs.LG},
  year={2014}
}

@article{schuld2021effect,
  title={{\href{https://doi.org/10.1103/PhysRevX.11.041011}{Effect of data encoding on the expressive power of variational quantum-machine-learning models}}},
  author={Schuld, Maria and Sweke, Ryan and Meyer, Johannes Jakob},
  journal={Physical Review X},
  volume={11},
  pages={041011},
  year={2021},
  url={https://doi.org/10.1103/PhysRevX.11.041011}
}

@article{li2022concentration,
  title={{\href{https://dl.acm.org/doi/10.5555/3600270.3601684}{Concentration of data encoding in parameterized quantum circuits}}},
  author={Li, Guangxi and Ye, Ruilin and Zhao, Xuanqiang and Wang, Xin},
  journal={Advances in Neural Information Processing Systems},
  volume={35},
  pages={19456--19469},
  year={2022},
  url={https://dl.acm.org/doi/10.5555/3600270.3601684}
}

\clearpage

\onecolumngrid
\appendix

\FloatBarrier

\section{Algorithms and subroutines}
\label{app:algorithms}

\subsection*{Inverse Hopf map}
\label{app:inverse_hopf_map}

\begin{lemma}[Inverse Hopf map and classical cost]
\label{lem:inverse_hopf_map}
Let $n\in\mathbb{N}$ and $N=2^n$.

(A) Real case. Let $\mathbf{x}=(x_0,\ldots,x_{N-1})\in\mathbb{R}^{N}$ satisfy
$\sum_{\ell=0}^{N-1}x_\ell^2=1$. For each internal node
$j\in\{1,\ldots,N-1\}$, define its left- and right-subtree leaf sets by
\[
\mathcal{L}_L(j):=\{\ell: \ket{b_\ell}\text{ is a leaf in the left subtree of }j\},
\qquad
\mathcal{L}_R(j):=\{\ell: \ket{b_\ell}\text{ is a leaf in the right subtree of }j\},
\]
and the associated subtree norms
\begin{equation}
\label{eq:inverse_hopf_subtree_norms_real}
S_L(j):=\sqrt{\sum_{\ell\in\mathcal{L}_L(j)}x_\ell^2},
\qquad
S_R(j):=\sqrt{\sum_{\ell\in\mathcal{L}_R(j)}x_\ell^2},
\qquad
S(j):=\sqrt{S_L(j)^2+S_R(j)^2}.
\end{equation}
For the non-final internal nodes, namely $1\le j\le N/2-1$, set
\begin{equation}
\label{eq:inverse_hopf_theta_internal_real}
\theta_j:=
\begin{cases}
\operatorname{atan2}\!\big(S_R(j),S_L(j)\big), & S(j)>0,\\[2pt]
0, & S(j)=0,
\end{cases}
\qquad
\theta_j\in[0,\pi/2].
\end{equation}
For the final-layer internal nodes, write $j=N/2+k$ with
$k=0,\ldots,N/2-1$. These nodes split the sibling leaf pair
$(2k,2k+1)$, and their angles are chosen with the signed quadrant convention
\begin{equation}
\label{eq:inverse_hopf_theta_final_real}
\theta_{N/2+k}:=
\begin{cases}
\operatorname{atan2}\!\big(x_{2k+1},x_{2k}\big) \pmod{2\pi},
& x_{2k}^2+x_{2k+1}^2>0,\\[2pt]
0, & x_{2k}=x_{2k+1}=0.
\end{cases}
\end{equation}
With the ranges of Definition~\ref{def:hopf_coordinates}(A), the forward Hopf
recursion then reproduces the signed vector $\mathbf{x}$ exactly on the regular
set. At zero-subtree and zero-leaf-pair points, the inverse is not unique; the
choices above fix one convention.

(B) Complex case. Let $\mathbf{x}=(x_0,\ldots,x_{N-1})\in\mathbb{C}^{N}$ satisfy
$\sum_\ell |x_\ell|^2=1$. Write
\[
x_\ell=r_\ell e^{\iu\phi_\ell},
\qquad
r_\ell=|x_\ell|\ge0,
\]
where
\[
\phi_\ell := \arg(x_\ell)\pmod{2\pi}\in[0,2\pi),
\]
with the convention \(\phi_\ell:=0\) if \(x_\ell=0\).
For each internal node $j\in\{1,\ldots,N-1\}$, define
\[
S_{L,r}(j):=\sqrt{\sum_{\ell\in\mathcal{L}_L(j)}r_\ell^2},
\qquad
S_{R,r}(j):=\sqrt{\sum_{\ell\in\mathcal{L}_R(j)}r_\ell^2},
\qquad
S_r(j):=\sqrt{S_{L,r}(j)^2+S_{R,r}(j)^2}.
\]
Set
\begin{equation}
\label{eq:inverse_hopf_theta_complex_magnitudes}
\theta_j:=
\begin{cases}
\operatorname{atan2}\!\big(S_{R,r}(j),S_{L,r}(j)\big), & S_r(j)>0,\\[2pt]
0, & S_r(j)=0,
\end{cases}
\qquad
j=1,\ldots,N-1.
\end{equation}
The leaf phases are
\begin{equation}
\label{eq:inverse_hopf_leaf_phases_complex}
\theta_{N+\ell}:=\phi_\ell\in[0,2\pi),
\qquad
\ell=0,\ldots,N-1.
\end{equation}
Then the complex Hopf recursion of Definition~\ref{def:hopf_coordinates}(B)
reproduces $\mathbf{x}$ up to the chosen phase convention at zero entries. The
inverse is not unique whenever some magnitude subtree or leaf amplitude
vanishes.

We denote by $\mathrm{Hopf}^{-1}$ any choice of the inverse Hopf map constructed
above on the regular set. There is an $O(N)$-time, $O(N)$-memory procedure to
compute it. Use a tree-indexed array of weights with leaves
$w_{N+\ell}=x_\ell^2$ in the real case and $w_{N+\ell}=|x_\ell|^2$ in the
complex case. A single bottom-up pass gives
\[
w_j=w_{2j}+w_{2j+1},
\qquad
j=N-1,N-2,\ldots,1.
\]
The subtree norms are $S_L(j)=\sqrt{w_{2j}}$ and
$S_R(j)=\sqrt{w_{2j+1}}$. The real final-layer angles are then obtained directly
from the signed pairs by Eq.~\eqref{eq:inverse_hopf_theta_final_real}; the
complex leaf phases require one additional linear pass. Thus,
\[
\boxed{
\text{Time}_{\mathrm{Hopf}^{-1}}(N)=O(N),
\qquad
\text{Memory}_{\mathrm{Hopf}^{-1}}(N)=O(N).
}
\]
\end{lemma}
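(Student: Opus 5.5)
The plan is to verify the inverse by induction on the tree, from the leaves up, using a single invariant. Fix the real case first. For each node $j$ let $W_j$ denote the squared norm of the target amplitudes on the leaves below $j$, so $W_j=S(j)^2$. Let $P_j(\boldsymbol{\theta})$ be the product of sine/cosine factors along the root-to-$j$ path in the forward recursion of Definition~\ref{def:hopf_coordinates}.

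The claim to prove top-down is
\[
|P_j(\boldsymbol{\theta})|=\sqrt{W_j}
\]
for every node $j$ at depth at most $n-1$. At the root, $P_1=1=\sqrt{W_1}$ by normalization. For a non-final internal node with $S(j)>0$, $\theta_j=\operatorname{atan2}(S_R(j),S_L(j))\in[0,\pi/2]$ gives $\cos\theta_j=S_L(j)/S(j)$ and $\sin\theta_j=S_R(j)/S(j)$. Hence the children satisfy $P_{2j}=P_j\,S_L(j)/S(j)=S_L(j)$ and $P_{2j+1}=S_R(j)$; these are nonnegative because all non-final factors are nonnegative on $[0,\pi/2]$. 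If $S(j)=0$, both children have zero weight and the choice $\theta_j=0$ gives $P_{2j}=P_{2j+1}=0$, consistent with the claim.

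At a final-layer node $j=N/2+k$, the incoming factor is $P_j=\sqrt{x_{2k}^2+x_{2k+1}^2}=:\rho_k\ge 0$. The signed atan2 convention gives $\rho_k\cos\theta_j=x_{2k}$ and $\rho_k\sin\theta_j=x_{2k+1}$ exactly, signs included. The case $\rho_k=0$ is trivial. Thus the forward recursion reproduces $\mathbf{x}$, including at degenerate points; the nonuniqueness remark follows because any angle works when a weight vanishes.

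The complex case runs the same induction on the magnitudes $r_\ell$, now with every magnitude angle in $[0,\pi/2]$. This yields magnitude $r_\ell$ at each leaf. Multiplying by $e^{\iu\phi_\ell}$ gives $r_\ell e^{\iu\phi_\ell}=x_\ell$, with the phase convention mattering only where $r_\ell=0$. I should also check that the circuit realization matches the recursion: the $R_{\mathbb{C}}$ action on $\ket{0}$ attaches $e^{\iu\theta_b}$ and $e^{\iu\theta_c}$ to the two sibling leaves. This holds by Definition~\ref{def:hopf_ansatz}, with $\theta_b=\theta_{N+2k}$ and $\theta_c=\theta_{N+2k+1}$.

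For the cost, the bottom-up pass $w_j=w_{2j}+w_{2j+1}$ touches each of the $N-1$ internal nodes once with constant work. Each angle then takes $O(1)$ work: two square roots and one atan2. Computing the leaf weights and phases is a linear pass over $N$ entries, and the storage is a $(2N-1)$-entry array. This gives $O(N)$ time and memory.

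The main obstacle is bookkeeping rather than analysis. The step that needs care is the sign handling at the final layer: the non-final factors must be nonnegative, so that $P_j=\rho_k$ exactly rather than up to sign. That is precisely what the $[0,\pi/2]$ range and the nonnegativity of $S_L,S_R$ guarantee. I would state this explicitly as the key point of the induction.
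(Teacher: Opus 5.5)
Your proposal is correct and follows the paper's proof essentially step for step. Both arguments run a top-down induction showing that the nonnegative incoming magnitude at each non-final node equals $\sqrt{w_j}$, recover the signs with the signed $\operatorname{atan2}$ at the final sibling pairs, reduce the complex case to the magnitude recursion followed by leaf phases, and bound the cost by one bottom-up pass with $O(1)$ work per node. Your explicit remark that the $[0,\pi/2]$ range forces the final-layer incoming factor to be exactly $\rho_k$ (not merely up to sign) states outright what the paper leaves implicit.
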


\subsection*{Gate-schedule routines}
\label{app:gate_schedule_routines}

The gate-ordering routines output four parallel lists
\[
\mathtt{Ctrl},\qquad
\mathtt{Anti},\qquad
\mathtt{Targ},\qquad
\mathtt{Index}.
\]
All entries are read position-wise.  The masks \(\mathtt{Ctrl}\) and
\(\mathtt{Anti}\) specify, respectively, controls on \(\ket{1}\) and controls
on \(\ket{0}\).  The mask \(\mathtt{Targ}\) is a power of two specifying the
single target qubit.  We use the bitmask convention
\[
(q_n,\ldots,q_1)
\]
from left to right, so the integer bit \(2^{t-1}\) corresponds to qubit \(q_t\).
Thus, for example, for \(n=4\), the mask \(1000\) targets \(q_4\), while
\(0001\) targets \(q_1\).

In the real ansatz, every \(\mathtt{Index}\) entry is a single integer \(j\)
and the corresponding gate is \(R_y(\theta_j)\). In the complex ansatz,
\(\mathtt{Index}\) entries have the following convention:
\[
\mathtt{Index}_r
\in
\{1,\ldots,N-1\}
\quad\text{or}\quad
\mathtt{Index}_r=[j,N+\ell_L,N+\ell_R].
\]
An integer entry denotes a magnitude-only \(R_y(\theta_j)\) gate.  A length-three
entry denotes a promoted final-layer gate
\[
R_{\mathbb C}(\theta_j,\theta_{N+\ell_L},\theta_{N+\ell_R}).
\]
We do not use length-one lists for magnitude-only complex gates.

{
\floatname{algorithm}{Algorithm }
\begin{algorithm}[H]
    \algrenewcommand{\algorithmicrequire}{\textbf{Input}}
    \algrenewcommand{\algorithmicensure}{\textbf{Output}}
    \algnewcommand{\algorithmiccomplexity}{\textbf{Complexity}}
    \newcommand{\Complexity}{\item[\algorithmiccomplexity]}
    \caption{Generate the order of quantum gates (real)}
    \label{alg:order_real}
    \begin{algorithmic}[1]
    \Require Integer $n$ (number of qubits).
    \Ensure Lists $\mathtt{Ctrl}$, $\mathtt{Anti}$, $\mathtt{Targ}$, $\mathtt{Index}$.
    \Complexity $O\!\left(nN\right)$.    
    \Procedure{\color{OliveGreen}{HopfReal}}{$n$}
    \State $N \gets 2^n$
    \State $HW \gets \textsc{\color{OliveGreen}{HammingWeight}}(n)$
    \State $\mathtt{Ctrl} \gets [0,\dots,0]$ \Comment{$n$ zeros}
    \State $\mathtt{Anti} \gets [0, N-2^{n-1},\dots, N-2]$
    \Comment{$0$; then $N-2^{n-i}$ for $i=1,\ldots,n-1$}
    \State $\mathtt{Targ} \gets [2^{n-1},2^{n-2},\dots,2^0]$ \Comment{$2^{n-i}$ for $i=1,\ldots,n$}
    \State $\mathtt{Index} \gets [1,2,4,\dots,2^{n-1}]$ \Comment{$n$ entries: $\mathtt{Index}[i]=2^{i}$ for $i=0,\dots,n-1$ (one per level on the initial root-to-leaf chain)}
    \For{$k=1$ \textbf{to} $n-1$}
        \State $A,B \gets \textsc{\color{OliveGreen}{FindPairs}}(HW[k], HW[k+1])$
        \State $\mathtt{Ctrl}.\text{extend}(A)$
        \State $\mathtt{Anti}.\text{extend}(\textsc{\color{OliveGreen}{Anti}}(A))$
        \State $\mathtt{Targ}.\text{extend}(B-A)$ \Comment{elementwise $B_i-A_i$}
        \State $L \gets |A|$
        \For{$i=0$ \textbf{to} $L-1$}
            \State $\mathtt{Index}.\text{append}(\textsc{\color{OliveGreen}{ThetaReal}}(n,A[i],B[i]))$
        \EndFor
    \EndFor
    \State \Return $\mathtt{Ctrl}$, $\mathtt{Anti}$, $\mathtt{Targ}$, $\mathtt{Index}$
    \EndProcedure
    \end{algorithmic}
\end{algorithm}
}

\setcounter{algorithm}{0}
{
\floatname{algorithm}{Subroutine }
\begin{algorithm}[h]
    \caption{Group numbers by Hamming weight}
    \algrenewcommand{\algorithmicrequire}{\textbf{Input}}
    \algrenewcommand{\algorithmicensure}{\textbf{Output}}
    \algnewcommand{\algorithmiccomplexity}{\textbf{Complexity}}
    \newcommand{\Complexity}{\item[\algorithmiccomplexity]}
    
    \label{alg:hamming_weight}
    \begin{algorithmic}[1]
    
    \Require Integer $n$ for the bit length.
    \Ensure A list $HW$ of length $n+1$, where $HW[k]$ contains all integers with Hamming weight $k$.
    \Complexity $O\!\left(nN\right)$.
    
    \Procedure{\color{OliveGreen}{HammingWeight}}{$n$}
        \State $N \gets 2^n$
        \State Initialize $HW$ as $[\,[\,],\,\hdots,\,[\,]\,]$ \Comment{$n+1$ empty lists}
        \For{$i = 0$ \textbf{to} $N-1$}
            \State $k \gets \text{count\_ones}(i)$
            \State $HW[k].\text{append}(i)$
        \EndFor
        \State \Return $HW$
    \EndProcedure  
\end{algorithmic}
\end{algorithm}
}

{
\floatname{algorithm}{Subroutine }
\begin{algorithm}[h]
    \algrenewcommand{\algorithmicrequire}{\textbf{Input}}
    \algrenewcommand{\algorithmicensure}{\textbf{Output}}
    \algnewcommand{\algorithmiccomplexity}{\textbf{Complexity}}
    \newcommand{\Complexity}{\item[\algorithmiccomplexity]}
    \caption{Find pairs between elements of two lists}
    \label{alg:find_pairs}
    \begin{algorithmic}[1]
    
    \Require Sorted lists $A$ and $B$ (ascending).
    \Ensure Two lists $A'$ and $B'$ (of equal length) obtained by scanning $B$ from largest to smallest and pairing each $b$ with the current largest $a\in A$ such that $a<b$ (with $a$ reusable), and emitting the resulting pairs ordered by increasing $a$, with $b$ decreasing within each group.
    \Complexity $O(|A| + |B|)$.
    
    \Procedure{\color{OliveGreen}{FindPairs}}{$A$, $B$}
    \State $\widetilde A \gets [\,]$, \ $\widetilde B \gets [\,]$
    \State $i \gets |A|-1$
    \For{$b$ \textbf{in} $B$ \textbf{traversed in reverse order}}
        \While{$i \ge 0$ \textbf{and} $A[i] \ge b$}
           \State $i \gets i-1$
        \EndWhile
        \If{$i < 0$}
           \State \textbf{break}
        \EndIf
        \State append $A[i]$ to $\widetilde A$
        \State append $b$ to $\widetilde B$
    \EndFor

    \State $A' \gets [\,]$, \ $B' \gets [\,]$
    \State $r \gets |\widetilde A|-1$
    \While{$r \ge 0$}
        \State $a \gets \widetilde A[r]$
        \State $\ell \gets r$
        \While{$\ell>0$ \textbf{and} $\widetilde A[\ell-1]=a$}
            \State $\ell \gets \ell-1$
        \EndWhile
        \For{$t=\ell$ \textbf{to} $r$}
            \State append $\widetilde A[t]$ to $A'$
            \State append $\widetilde B[t]$ to $B'$
        \EndFor
        \State $r \gets \ell-1$
    \EndWhile

    \State \Return $A', B'$
    \EndProcedure
    \end{algorithmic}
\end{algorithm}
}

{
\floatname{algorithm}{Subroutine }
\begin{algorithm}[h]
    \algrenewcommand{\algorithmicrequire}{\textbf{Input}}
    \algrenewcommand{\algorithmicensure}{\textbf{Output}}
    \algnewcommand{\algorithmiccomplexity}{\textbf{Complexity}}
    \newcommand{\Complexity}{\item[\algorithmiccomplexity]}
    \caption{Find anti-controls based on control multiplicities}
    \label{alg:anti_controls}
    \begin{algorithmic}[1]

    \Require A list $A$ of integers representing control bitmasks.
    \Ensure A list $\mathtt{Anti}$ of integers, constructed as follows:
    for each distinct value $a\in A$ with multiplicity $c$,
    the list contains the values $2^{c}-2^{i}$ for $i=c,c-1,\dots,1$.
    The output depends only on multiplicities in $A$ and is ordered
    deterministically by ascending distinct values of $a$.
    \Complexity $O\!\left(|A|\log|A|\right)$.

    \Procedure{\color{OliveGreen}{Anti}}{$A$}
        \State $\mathtt{Anti} \gets [\,]$
        \State $\mathtt{Dict} \gets$ frequency map of elements in $A$
        \State $\mathtt{Keys} \gets$ sorted list of distinct elements of $A$ (ascending)
        \For{each $a$ \textbf{in} $\mathtt{Keys}$}
            \State $c \gets \mathtt{Dict}[a]$
            \For{$i = c$ \textbf{down to} $1$}
                \State append $(2^{c} - 2^{i})$ to $\mathtt{Anti}$
            \EndFor
        \EndFor
        \State \Return $\mathtt{Anti}$
    \EndProcedure
    \end{algorithmic}
\end{algorithm}
}

{
\floatname{algorithm}{Subroutine }
\begin{algorithm}[h]
    \algrenewcommand{\algorithmicrequire}{\textbf{Input}}
    \algrenewcommand{\algorithmicensure}{\textbf{Output}}
    \algnewcommand{\algorithmiccomplexity}{\textbf{Complexity}}
    \newcommand{\Complexity}{\item[\algorithmiccomplexity]}
    \caption{Find $\theta$ index for the gate (real)}
    \label{alg:theta_real}
    \begin{algorithmic}[1]
    
    \Require Integer $n$ for the bit length, integer $a$ encoding initial basis state, integer $b$ encoding target basis state.
    \Ensure An integer giving the $\theta$ index.
    \Complexity $O(1)$.
    
    \Procedure{\color{OliveGreen}{ThetaReal}}{$n$, $a$, $b$}
    \State $N \gets 2^n$
    \State $d \gets b - a$
    \State \Return $\dfrac{N+a}{2d}$
    \EndProcedure
    \Statex
    \Statex \textbf{Remark (integrality).} For every $(a,b)$ produced by $\textsc{FindPairs}(HW[k],HW[k+1])$, the difference $d=b-a$ is a power of two, hence $(N+a)/(2d)\in\mathbb{Z}$.                                     
    \end{algorithmic}
\end{algorithm}
}

{
\floatname{algorithm}{Subroutine }
\begin{algorithm}[h]
    \algrenewcommand{\algorithmicrequire}{\textbf{Input}}
    \algrenewcommand{\algorithmicensure}{\textbf{Output}}
    \algnewcommand{\algorithmiccomplexity}{\textbf{Complexity}}
    \newcommand{\Complexity}{\item[\algorithmiccomplexity]}
    \caption{Find $\theta$ index for the gate (complex)}
    \label{alg:theta_complex}
    \begin{algorithmic}[1]
    
    \Require Integer $n$, integer $a$ (initial state), integer $b$ (target state).
    \Ensure Either one integer index, or a length-three list of indices.
    \Complexity $O(1)$.
    
    \Procedure{\color{OliveGreen}{ThetaComplex}}{$n$, $a$, $b$}
    \State $N \gets 2^n$
    \State $d \gets b - a$
    \State $j \gets \dfrac{N+a}{2d}$
    \If{$d=1$}
       \State \Return $[j,\,N+a,\,N+b]$ \Comment{leaf-level transition, requiring phase angles}
    \Else
       \State \Return $j$
    \EndIf
    \EndProcedure
    \Statex
    \Statex \textbf{Remark (integrality).} For every $(a,b)$ produced by $\textsc{FindPairs}(HW[k],HW[k+1])$, the difference $d=b-a$ is a power of two, hence $(N+a)/(2d)\in\mathbb{Z}$.   \end{algorithmic}
\end{algorithm}
}

\setcounter{algorithm}{1}
{
\floatname{algorithm}{Algorithm }
\begin{algorithm}[h]
    \algrenewcommand{\algorithmicrequire}{\textbf{Input}}
    \algrenewcommand{\algorithmicensure}{\textbf{Output}}
    \algnewcommand{\algorithmiccomplexity}{\textbf{Complexity}}
    \newcommand{\Complexity}{\item[\algorithmiccomplexity]}
    \caption{Generate the order of quantum gates}
    \label{alg:order_complex}
    \begin{algorithmic}[1]
    
    \Require Integer $n$.
    \Ensure Lists $\mathtt{Ctrl}$, $\mathtt{Anti}$, $\mathtt{Targ}$, $\mathtt{Index}$, where each $\mathtt{Index}$ entry is either an integer or a length-$3$ list.
    \Complexity $O\!\left(nN\right)$.
    
    \Procedure{\color{OliveGreen}{HopfComplex}}{$n$}
    \If{$n=1$}
        \State \Return $[0]$, $[0]$, $[1]$, $[[1,2,3]]$
    \EndIf
    \State $N \gets 2^n$
    \State $HW \gets \textsc{\color{OliveGreen}{HammingWeight}}(n)$
    \State $\mathtt{Ctrl} \gets [0,\dots,0]$ 
    \State $\mathtt{Anti} \gets [0, N-2^{n-1},\dots, N-2]$
    \State $\mathtt{Targ} \gets [2^{n-1},2^{n-2},\dots,2^0]$
    \State $\mathtt{Index} \gets [1,2,4,\dots,2^{n-2},\,[2^{n-1},\,N,\,N+1]]$ \Comment{$n$ initial gates: the first $n-1$ are magnitude-only; the last is promoted to $R_{\mathbb{C}}$ to attach the first sibling-pair phases}

    \For{$k=1$ \textbf{to} $n-1$}
        \State $A,B \gets \textsc{\color{OliveGreen}{FindPairs}}(HW[k], HW[k+1])$
        \State $\mathtt{Ctrl}.\text{extend}(A)$
        \State $\mathtt{Anti}.\text{extend}(\textsc{\color{OliveGreen}{Anti}}(A))$
        \State $\mathtt{Targ}.\text{extend}(B-A)$
        \State $L \gets |A|$
        \For{$i=0$ \textbf{to} $L-1$}
            \State $\mathtt{Index}.\text{append}(\textsc{\color{OliveGreen}{ThetaComplex}}(n,A[i], B[i]))$
        \EndFor
    \EndFor
    \State \Return $\mathtt{Ctrl}$, $\mathtt{Anti}$, $\mathtt{Targ}$, $\mathtt{Index}$
    \EndProcedure
                                         
    \end{algorithmic}
\end{algorithm}
}

\subsection*{CNOT counting model}
\label{app:cnot_model}

\begin{lemma}[CNOT counting model for mixed-sign multi-controls]
\label{lem:cnot_model}
We count only CNOT gates and assume no clean ancilla.

Consider a single-qubit gate acting on one target and conditioned on a set of control qubits,
where each control is specified either as a positive control (trigger on $\ket{1}$) or a negative control (trigger on $\ket{0}$).
Let $m$ denote the total number of controls (positive plus negative).
In the implementation, controls are specified by two disjoint bitmasks $\mathtt{Ctrl}$ and $\mathtt{Anti}$,
and
\[
m=\operatorname{popcount}(\mathtt{Ctrl}\lor \mathtt{Anti}).
\]

\begin{itemize}
\item \textbf{Sign-independence.}
Negative controls are implemented by conjugating the corresponding control qubits by Pauli-$X$
(before and after the controlled operation). Since $X$ can be taken as a CNOT-free primitive
in our accounting, the CNOT cost depends only on $m$ and not on how many controls are negative.

\item \textbf{Controlled-$R_y$.}
Let $C_R(m)$ be the CNOT cost of an $m$-controlled $R_y$ gate (no ancilla).
We use the piecewise model
\[
C_R(m)=
\begin{cases}
0, & m=0,\\
2^{m+1}-2, & 1\le m\le 4,\\
16(m+1)-40, & m\ge 5,
\end{cases}
\]
where the small-$m$ values reflect best-known no-ancilla decompositions
\cite{barenco1995elementary,shende2008cnot},
and the linear regime for $m\ge 5$ follows the explicit constructions of \cite{vale2023circuit}
(with $n_g=m+1$ total qubits involved).

\item \textbf{Controlled-$R_{\mathbb C}$ (special unitary single-qubit gate).}
Let $C_U(m)$ be the CNOT cost of an $m$-controlled special unitary single-qubit gate
(no ancilla), which in our application is the promoted $R_{\mathbb C}$ gate.
We use
\[
C_U(m)=
\begin{cases}
0, & m=0,\\
2^{m+1}-2, & 1\le m\le 4,\\
\begin{cases}
20(m+1)-38, & m+1\ \mathrm{odd},\\
20(m+1)-42, & m+1\ \mathrm{even},
\end{cases}
& m\ge 5,
\end{cases}
\]
again using standard small-control decompositions \cite{barenco1995elementary,shende2008cnot}
and the linear no-ancilla constructions of \cite{vale2023circuit}.
\end{itemize}

All rotation parameters are implemented directly as physical gate parameters; no multiplexing
or angle recombination is used.
\end{lemma}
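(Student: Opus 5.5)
The statement is partly a counting convention and partly a collection of concrete circuit identities, so I would prove it item by item: verify the structural identities directly, and import the large-$m$ constants from \cite{vale2023circuit}.

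\textbf{Sign-independence.} Let $X_c$ act on a control $c$. Conjugating a controlled gate that triggers on $\ket{1}_c$ by $X_c\,(\cdot)\,X_c$ gives the same gate triggered on $\ket{0}_c$, since $X_c$ swaps the two control projectors $\ket{0}\!\bra{0}_c$ and $\ket{1}\!\bra{1}_c$. Applying this to every negative control converts any mixed-sign pattern into an all-positive one using only single-qubit $X$ gates. These contribute zero CNOTs, so the cost depends only on $m=\operatorname{popcount}(\mathtt{Ctrl}\lor\mathtt{Anti})$.

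\textbf{Controlled-$R_y$, small $m$.} I would give an explicit recursion. With controls $c_1,\dots,c_m$ and target $t$, consider the product
\[
\mathrm{CNOT}_{c_m\to t}\; C^{m-1}R_y(-\theta/2)\;\mathrm{CNOT}_{c_m\to t}\; C^{m-1}R_y(\theta/2),
\]
where $C^{m-1}$ is controlled on $c_1,\dots,c_{m-1}$. Check the cases:
\begin{itemize}
\item If $c_1,\dots,c_{m-1}$ are not all $1$, only the two CNOTs act, and they cancel.
\item If $c_1,\dots,c_{m-1}$ are all $1$ and $c_m=0$, the target sees $R_y(-\theta/2)R_y(\theta/2)=I$.
\item If all controls are $1$, the target sees $X R_y(-\theta/2) X R_y(\theta/2)=R_y(\theta)$, using $XR_y(\varphi)X=R_y(-\varphi)$ for the paper's real $R_y$.
\end{itemize}
This gives $C_R(m)=2C_R(m-1)+2$ with $C_R(0)=0$, so $C_R(m)=2^{m+1}-2$. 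For $1\le m\le 4$ this matches the cited best-known no-ancilla values \cite{barenco1995elementary,shende2008cnot}.

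\textbf{Controlled-$R_{\mathbb C}$, small $m$.} The gate $R_{\mathbb C}$ has unit determinant, so it lies in $SU(2)$. Any $W\in SU(2)$ can be written $W=K\,R_z(\phi)\,K^\dagger$ for some single-qubit $K$. Because $K$ acts on the target alone, $C^m(W)=K\,C^m(R_z(\phi))\,K^\dagger$ at zero extra CNOT cost. Since $XR_z(\phi)X=R_z(-\phi)$, the same half-angle recursion applies verbatim. This yields $C_U(m)=2^{m+1}-2$ for $1\le m\le 4$.

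\textbf{Large $m$.} For $m\ge5$ the recursion is exponential and is no longer the relevant bound. Here I would invoke the linear no-ancilla decompositions of \cite{vale2023circuit} for multicontrolled $SU(2)$ gates on $n_g=m+1$ qubits:
\begin{itemize}
\item the real-rotation case gives $16n_g-40$;
\item the general special-unitary case gives $20n_g-38$ or $20n_g-42$, depending on the parity of $n_g$.
\end{itemize}
Substituting $n_g=m+1$ reproduces both piecewise formulas. For each $m$ one takes the cheaper of the two constructions, which is how the breakpoint at $m=5$ arises.

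The final sentence of the lemma is definitional: each rotation angle is a physical gate parameter, so no multiplexing cost is incurred.

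The main obstacle is the large-$m$ regime. Those constants are not derived here but imported, so the proof must check carefully that the hypotheses of \cite{vale2023circuit} hold for our gates: no ancilla, $SU(2)$ target, and (for the $16$-coefficient case) the real $R_y$ structure. Both of our gate types satisfy these hypotheses, as shown above.
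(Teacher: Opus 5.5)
You correctly show that $X$-conjugation makes the cost sign-independent. Your half-angle recursion $C(m)=2C(m-1)+2$ for $C^mR_y$ also holds, as does the reduction $R_{\mathbb C}=K R_z(\phi)K^\dagger$ for the special-unitary gate; together these show that $2^{m+1}-2$ is achievable for both gate types. The step that fails is your sentence ``For each $m$ one takes the cheaper of the two constructions, which is how the breakpoint at $m=5$ arises.''

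That rule works for $C_R$: at $m=4$ you have $40>30$, and at $m=5$ you have $56<62$. It fails for $C_U$. At $m=5$, $m+1=6$ is even, so the lemma assigns $C_U(5)=20\cdot 6-42=78$, while your own recursion gives $2^{6}-2=62$. Your minimum rule would therefore give $C_U(5)=62$, with the $C_U$ crossover at $m=6$ (where $20\cdot 7-38=102<126$). So your argument does not reproduce the stated formula. Taken literally, the rule is also ill-posed at small $m$, where the linear formulas go negative (e.g.\ $16\cdot 2-40=-8$); they only count something where the construction of Vale et al.\ \cite{vale2023circuit} exists.

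The fix is to stop presenting the piecewise formula as a pointwise minimum. The paper gives no derivation of this lemma at all. It states a counting convention, argues negative controls inline, and simply attributes the two regimes to Barenco/Shende--Markov and to Vale et al. Its later uses, Theorem~\ref{thm:hopf_cnot} and the $O(nN)$ bounds of Theorem~\ref{thm:hopf_gradient_setting_cost}, only need every branch to be an achievable CNOT count, i.e.\ an upper bound. The common breakpoint $m=5$ is adopted for both gate families: it is optimal for $C_R$ and slightly pessimistic for $C_U$ at $m=5$.

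Your explicit recursion is a more self-contained justification of the small-$m$ branch than the paper's bare citation, so keep it. Keep importing the $m\ge 5$ branch from Vale et al., and replace the minimization sentence with this convention. Also, the recursion shows achievability, not that $2^{m+1}-2$ is best possible, so don't present it as confirming the ``best-known'' remark.
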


\begin{theorem}[CNOT counts for Hopf ansatze under the piecewise no-ancilla model]
\label{thm:hopf_cnot}
Let $n$ be the number of data qubits, and let $m$ denote the number of controls on a multi-controlled single-qubit gate.
Under the explicit no-ancilla decompositions used in this work, we count CNOTs using the piecewise model
\[
c_{\mathrm{small}}(m):=
\begin{cases}
0, & m=0,\\
2^{m+1}-2, & 1\le m\le 4,
\end{cases}
\]
and for $m\ge 5$ the linear-cost formulas of Lemma~\ref{lem:cnot_model},
\[
c_{R_y}(m):=16(m+1)-40,\qquad
c_{R_{\mathbb C}}(m):=
\begin{cases}
20(m+1)-38, & m+1\ \mathrm{odd},\\
20(m+1)-42, & m+1\ \mathrm{even}.
\end{cases}
\]
(Equivalently, $c_{R_y}(m)=c_{\mathrm{small}}(m)$ for $m\le 4$ and
$c_{R_{\mathbb C}}(m)=c_{\mathrm{small}}(m)$ for $m\le 4$.)

For \(n\in\{2,3,4,5\}\), the real-state and complex-state versions of the Hopf
ansatz generally differ at the gate level, because in the complex ansatz the
depth-\((n-1)\) rotations are promoted to \(R_{\mathbb C}\) gates that inject
leaf phases (Definition~\ref{def:hopf_ansatz} and
Algorithm~\ref{alg:order_complex}).  Nevertheless, under the counting model
above, their total CNOT counts coincide for these sizes:
\[
\begin{array}{c|c|c}
n & G_{\mathbb R}(n) & G_{\mathbb C}(n) \\
\hline
2 & 4   & 4   \\
3 & 24  & 24  \\
4 & 100 & 100 \\
5 & 360 & 360
\end{array}
\]
These values are obtained by applying the CNOT-counting model above to the full
generated gate lists, including the promoted \(R_{\mathbb C}\) gates in the
complex case.

For general $n\ge 2$, the control-number distribution induced by the Hopf ordering is as follows.

For the real ansatz, for each $m\in\{0,1,\dots,n-1\}$, there are exactly
$\binom{n}{m}$ $m$-controlled $R_y$ gates. Hence
\[
G_{\mathbb R}(n)
=
\sum_{m=0}^{n-1}\binom{n}{m}\,c_{R_y}(m).
\]

For the complex ansatz, the gate types split as follows:
\begin{itemize}
\item for each $m\in\{0,1,\dots,n-2\}$, there are exactly
$\binom{n-1}{m}$ $m$-controlled $R_y$ gates;
\item for each $m\in\{1,2,\dots,n-2\}$, there are exactly
$\binom{n-1}{m-1}$ $m$-controlled $R_{\mathbb C}$ gates;
\item at the top control number $m=n-1$, there are exactly $n$
$(n-1)$-controlled $R_{\mathbb C}$ gates and no $(n-1)$-controlled $R_y$ gate.
\end{itemize}
Therefore the exact complex CNOT count is
\[
G_{\mathbb C}(n)
=
\sum_{m=0}^{n-1}
\left[
\binom{n-1}{m}c_{R_y}(m)
+
\binom{n-1}{m-1}c_{R_{\mathbb C}}(m)
\right]
+
c_{R_{\mathbb C}}(n-1)-c_{R_y}(n-1),
\]
where binomial coefficients with an invalid lower index are interpreted as zero.

In particular, $\mathrm{HopfComplex}$ promotes exactly
\[
\sum_{m=1}^{n-2}\binom{n-1}{m-1}+n
=
\sum_{r=0}^{n-3}\binom{n-1}{r}+n
=
2^{n-1}=N/2
\]
gates to $R_{\mathbb C}$ gates, in agreement with the circuit construction.

Consequently, both ansatze satisfy the asymptotic bounds
\[
G_{\mathbb R}(n)=O(nN),\qquad
G_{\mathbb C}(n)=O(nN).
\]
Moreover, under the no-ancilla counting model above,
\[
G_{\mathbb R}(n)<8nN,\qquad
G_{\mathbb C}(n)<9nN
\]
for all sufficiently large $n$.
\end{theorem}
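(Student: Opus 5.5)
The plan is to derive everything from one combinatorial fact: the control-number distribution of the gate lists produced by Algorithm~\ref{alg:order_real} and Algorithm~\ref{alg:order_complex}. Once that distribution is known, the totals $G_{\mathbb R}(n)$ and $G_{\mathbb C}(n)$ follow by summing the piecewise costs of Lemma~\ref{lem:cnot_model}, and the bounds follow by a short asymptotic estimate.

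\textbf{Step 1: the initial chain.} For a gate at position $r$, its control number is $m_r=\operatorname{popcount}(\mathtt{Ctrl}_r)+\operatorname{popcount}(\mathtt{Anti}_r)$, since the two masks are disjoint. The first $n$ gates have $\mathtt{Ctrl}=0$ and $\mathtt{Anti}=N-2^{n-i}$. Because $N-2^{n-i}=2^{n-1}+\cdots+2^{n-i}$, this mask has popcount $i$. So the chain contributes exactly one gate with each $m\in\{0,\ldots,n-1\}$.

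\textbf{Step 2: the FindPairs gates.} For round $k$, every pair $(a,b)$ returned by $\textsc{FindPairs}(HW[k],HW[k+1])$ has $\operatorname{popcount}(a)=k$ and $b-a=2^t$, which is the remark in Subroutine~\ref{alg:theta_real}. The positive controls therefore contribute $k$. The anti-controls attached to the $p$-th pair sharing a given $a$ (with multiplicity $c$) are $2^c-2^{c-p+1}$, which has popcount $p-1$. Hence $m=k+p-1$.

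I would then reparametrize each gate by the state $b$ it populates. Each nonzero $b$ is hit exactly once: once by the chain, and otherwise through the largest $a<b$ of weight one less. This explains the total of $N-1$ gates. The remaining task is to count, for each $m$, how many $b$ give that control number, and to show the count is $\binom{n}{m}$. I would do this by induction on $n$, splitting on the top qubit $q_n$: the gates with $q_n$ untouched reproduce the $(n-1)$-qubit schedule shifted by one anti-control, and the rest reproduce it without the shift. That gives Pascal's rule $\binom{n-1}{m-1}+\binom{n-1}{m}=\binom{n}{m}$.

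\textbf{Step 3: the complex split.} The same recursion separates the $d=1$ gates, i.e.\ those with $b=a+1$, which are the promoted $R_{\mathbb C}$ gates, from the magnitude-only gates. Tracking this split through the induction should yield $\binom{n-1}{m}$ plain $R_y$ gates, $\binom{n-1}{m-1}$ promoted gates for $m\le n-2$, and the exceptional $n$ promoted gates at $m=n-1$. The exceptional block comes from the final chain gate being promoted. A consistency check is that the promoted count sums to $N/2$, which must equal the number of final-layer nodes.

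I expect pinning down this exact anti-control/multiplicity bookkeeping to be the main obstacle. If the induction gets messy, I would first verify the distribution mechanically for $n\le 6$ to confirm the pattern, and then argue through the bijection with $b$.

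\textbf{Step 4: totals and asymptotics.} With the distribution fixed, the formula for $G_{\mathbb C}(n)$ is the sum of the three blocks. The correction term $c_{R_{\mathbb C}}(n-1)-c_{R_y}(n-1)$ accounts for replacing the would-be $(n-1)$-controlled $R_y$ by an $R_{\mathbb C}$. The table for $n=2,\ldots,5$ is a direct evaluation; there only the small-$m$ costs $2^{m+1}-2$ enter, except where $n-1\ge5$ would apply, which does not occur here.

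For the bounds, only the terms with $m\le4$ use the small-$m$ costs. They contribute $O(n^4)=o(nN)$, and otherwise costs are linear in $m$. Using $\sum_m\binom{n}{m}m=nN/2$, the real total is $16\cdot nN/2$ minus a positive $\Theta(N)$ term. This gives $G_{\mathbb R}(n)<8nN$ for large $n$.

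For the complex case, the $R_y$ block has weighted sum $\sum_m\binom{n-1}{m}m=(n-1)N/4$, which gives $\approx4nN$. The $R_{\mathbb C}$ block has $\sum_m\binom{n-1}{m-1}m\approx(n+1)N/4$, which gives $\approx5nN$. The total is $9nN-\Theta(N)$. The final check is that this negative $\Theta(N)$ slack, coming from the $-38/-42$ and $-40$ offsets and the $-16$ or $-20$ shifts, dominates the $O(n^4)$ small-$m$ corrections and the boundary term. That yields the strict inequality $G_{\mathbb C}(n)<9nN$ for all sufficiently large $n$.
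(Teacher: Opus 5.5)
Your Steps 1--2 are correct, and they already go beyond the paper. The paper only asserts the control-number distribution and reports the table from evaluating the generated gate lists.

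Your characterization can be written as follows. With bit positions counted from $0$, the gate that populates $b$ has $m(b)=(\operatorname{popcount}(b)-1)+(L-t-1)$, where $t$ is the lowest set bit of $b$ and $L$ is the next set bit above it ($L=n$ for chain gates). Equivalently, the tree node at depth $h$ with path prefix $p\in\{0,1\}^h$ has $m=|p|_1+\mathrm{tz}(p)$, with the convention $\mathrm{tz}(0^h):=h$.

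\textbf{The gap is in the count itself.} The induction you sketch is misdescribed and, as stated, cannot close. Splitting on $q_n$:
\begin{itemize}
\item It is the right subtree that is uniformly shifted. Every gate whose $b$ has top bit $1$, other than the root, carries a positive control on $q_n$.
\item In the left subtree, only the $n-1$ chain gates ($b$ a power of two) acquire an extra anti-control on $q_n$. Every other gate keeps its $(n-1)$-qubit control number.
\end{itemize}
For example, with $n=3$, the gate populating $b=011$ has one control in both schedules, while the gate populating $b=110$ goes from $0$ to $1$ control.

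More fundamentally, any count of the form ``one copy shifted by one, one copy unshifted, plus the root'' gives, via Pascal's rule, two gates at $m=0$ and only $n-1$ gates at $m=n-1$. That contradicts $\binom{n}{0}=1$ and $\binom{n}{n-1}=n$.

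What is missing is the chain correction $(x-1)(1+x+\dots+x^{n-2})=x^{n-1}-1$, which moves exactly one gate from $m=0$ to $m=n-1$. The correct recursion for the generating function is $F_n(x)=(1+x)F_{n-1}(x)+x^{n-1}$ with $F_1=1$, which gives $F_n=(1+x)^n-x^n$. Checking $n\le 6$ by computer does not substitute for this argument.

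Step 3 is only asserted (``should yield'') and has the same boundary issue. A route that handles both ansatze at once is to sum directly over depths:
\begin{itemize}
\item At depth $h$, the prefix $0^h$ contributes $x^h$.
\item The prefixes whose last $1$ is at position $r$ together contribute $(1+x)^{r-1}x^{h-r+1}$.
\item So depth $h$ contributes $x^h+x\bigl[(1+x)^h-x^h\bigr]$.
\item The sum over $h=0,\dots,n-1$ telescopes to $(1+x)^n-x^n$.
\end{itemize}

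The promoted gates are exactly the depth-$(n-1)$ nodes: target $q_1$, i.e.\ odd $b$, including the last chain gate. Their generating function is $x^{n-1}+x(1+x)^{n-1}-x^n$. This gives $\binom{n-1}{m-1}$ gates for $1\le m\le n-2$ and $(n-1)+1=n$ gates at $m=n-1$. Depths $\le n-2$ give $(1+x)^{n-1}-x^{n-1}$, which are the $\binom{n-1}{m}$ plain $R_y$ gates.

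Once the distribution is established, your Step 4 is fine. The per-$m$ gate totals of the two ansatze coincide, and $c_{R_y}=c_{R_{\mathbb C}}$ for $m\le 4$; this explains the equal table entries for $n\le 5$. The sums give $G_{\mathbb R}(n)=8nN-24N+O(n^4)$ and $G_{\mathbb C}(n)=9nN-21N+O(n^4)$, as you anticipated.
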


\FloatBarrier
\section{A four-qubit example of the Hopf ansatz}
\label{app:hopf4_example}

This appendix gives the \(n=4\) real and complex Hopf ansatze as reference
examples for the notation, tables, and circuit diagrams. We write
\[
N=2^4=16,
\qquad
\ell=(q_4q_3q_2q_1)_2\in\{0,\ldots,15\},
\]
with qubits ordered from most significant \(q_4\) to least significant \(q_1\).
The real ansatz has internal-node parameters
\[
\theta_1,\ldots,\theta_{15},
\]
and the complex ansatz adds leaf phases
\[
\theta_{16+\ell},
\qquad
\ell=0,\ldots,15.
\]
The tree, gate schedule, metric table, and layerwise derivative circuits use
the same indexing convention; Fig.~\ref{fig:hopf4_tree} is the reference
diagram.

\subsection*{Hopf tree and leaf-phase indexing}
\label{app:hopf4_tree_indexing}

Figure~\ref{fig:hopf4_tree} shows the complete \(n=4\) Hopf tree. At each
internal node \(j\), a bit-\(0\) branch contributes \(\cos\theta_j\), while a
bit-\(1\) branch contributes \(\sin\theta_j\). Define
\[
f_0(\theta):=\cos\theta,
\qquad
f_1(\theta):=\sin\theta.
\]
For a leaf \(\ell=(q_4q_3q_2q_1)_2\), let
\[
s_1(\ell)=1,
\qquad
s_{t+1}(\ell)=2s_t(\ell)+q_{5-t},
\quad t=1,2,3.
\]
Its path amplitude is
\begin{equation}
a_\ell
=
\prod_{t=1}^{4}
f_{q_{5-t}}\!\left(\theta_{s_t(\ell)}\right),
\qquad
x_\ell^{\mathbb R}=a_\ell,
\qquad
x_\ell^{\mathbb C}=a_\ell e^{\iu\theta_{16+\ell}},
\label{eq:hopf4_path_product}
\end{equation}
with the corresponding real or complex angle ranges.

For example, the first leaf follows the all-left path, the last leaf follows
the all-right path, and the intermediate leaf
\(10=(1010)_2\) follows the right--left--right--left path:
\begin{align}
x_0^{\mathbb R}
&=
\cos\theta_1\cos\theta_2\cos\theta_4\cos\theta_8,
&
x_0^{\mathbb C}
&=
x_0^{\mathbb R}e^{\iu\theta_{16}},
\nonumber\\
x_{10}^{\mathbb R}
&=
\sin\theta_1\cos\theta_3\sin\theta_6\cos\theta_{13},
&
x_{10}^{\mathbb C}
&=
x_{10}^{\mathbb R}e^{\iu\theta_{26}},
\label{eq:hopf4_amplitude_examples}\\
x_{15}^{\mathbb R}
&=
\sin\theta_1\sin\theta_3\sin\theta_7\sin\theta_{15},
&
x_{15}^{\mathbb C}
&=
x_{15}^{\mathbb R}e^{\iu\theta_{31}}.
\nonumber
\end{align}
The general \(n\)-qubit formula is given in
Lemma~\ref{lem:hopf_product_formula}.

\begin{figure*}[h]
\resizebox{15cm}{!}{
\begin{tikzpicture}[
  level distance=3.2cm,
  level 1/.style = {sibling distance=5cm},
  level 2/.style = {sibling distance=8cm},
  level 3/.style = {sibling distance=2cm},
  level 4/.style = {sibling distance=1cm},
  every node/.style = {draw=none, fill=OliveGreen!20, circle, minimum size=1cm, font=\bfseries},
  edge from parent/.style = {draw, ->, thick},
  label1/.style = {below=1.2cm, font=\normalsize, draw=none,fill=none},
  label2/.style = {below=1.4cm,circle,inner sep=0.6pt,minimum size=0.6pt, font=\normalsize, draw,fill=none},
  label3/.style = {below=1.4cm,circle,inner sep=0.6pt,minimum size=0.6pt, font=\scriptsize, draw,fill=none},
  label4/.style = {below=1.4cm,circle,inner sep=0.6pt,minimum size=0.6pt, line width=0.8pt,font=\normalsize, draw,fill=none},
  label5/.style = {below=0.6cm, font=\normalsize, draw=none,fill=none, color=NavyBlue}
]

\node (root) {0000}
  child {node (node0) {0000}
    child  [shift={(0.375cm,0)}] {node (node00) {0000}
      child [shift={(-1cm,0)}] {node (node000) {0000}
        child [shift={(-0.75cm,0)}] {node (node0000) {0000} }
        child [shift={(-0.5cm,0)}] {node (node0001) {0001} }
        [shift={(-0.375cm,0)}]  node[label1] {$\theta_{8}$}
        node[label2, below=0.6cm] {4}
      }
      child [shift={(-1cm,0)}] {node (node001) {0010}
        child [shift={(-0.25cm,0)}] {node (node0010) {0010} }
        child {node (node0011) {0011} }
        node[label1] {$\theta_{9}$}
        node[label2, below=0.6cm] {5}
      }
      [shift={(-0.5cm,0)}] node[label1] {$\theta_{4}$}
      node[label2, below=0.6cm] {3}
    }
    child  [shift={(-4cm,0)}] {node (node01) {0100}
      child [shift={(0.375cm,0)}] {node (node010) {0100}
        child [shift={(-0.75cm,0)}] {node (node0100) {0100} }
        child [shift={(-0.5cm,0)}] {node (node0101) {0101} }
        [shift={(-0.375cm,0)}] node[label1] {$\theta_{10}$}
        node[label2, below=0.6cm] {7}
      }
      child [shift={(-0.375cm,0)}] {node (node011) {0110}
        child [shift={(0.5cm,0)}] {node (node0110) {0110}}
        child [shift={(0.75cm,0)}] {node (node0111) {0111}}
        [shift={(0.375cm,0)}] node[label1] {$\theta_{11}$}
        node[label3, below=0.6cm] {11}
      }
      node[label1] {$\theta_{5}$}
      node[label2, below=0.6cm] {6}
    }
    [shift={(-1cm,0)}] node[label1] {$\theta_{2}$}
    node[label2, below=0.6cm] {2}
  }
  child  {node (node1) {1000}
    child [shift={(4cm,0)}]  {node (node10) {1000}
      child [shift={(0.375cm,0)}]{node (node100) {1000}
        child [shift={(-0.75cm,0)}] {node (node1000) {1000}}
        child [shift={(-0.5cm,0)}]{node (node1001) {1001}}
        [shift={(-0.375cm,0)}] node[label1] {$\theta_{12}$}
        node[label3, below=0.6cm] {10}
      }
      child [shift={(-0.375cm,0)}] {node (node101) {1010}
        child [shift={(0.5cm,0)}]{node (node1010) {1010}}
        child [shift={(0.75cm,0)}] {node (node1011) {1011}}
        [shift={(0.375cm,0)}] node[label1] {$\theta_{13}$}
        node[label3, below=0.6cm] {12}
      }
      node[label1] {$\theta_{6}$}
      node[label2, below=0.6cm] {9}
    }
    child [shift={(-0.375cm,0)}] {node (node11) {1100}
      child [shift={(1cm,0)}] {node (node110) {1100}
        child {node (node1100) {1100}}
        child [shift={(0.25cm,0)}] {node (node1101) {1101}}
        [shift={(0.125cm,0)}] node[label1] {$\theta_{14}$}
        node[label3, below=0.6cm] {14}
      }
      child [shift={(1cm,0)}]{node (node111) {1110}
        child [shift={(0.5cm,0)}] {node (node1110) {1110}}
        child [shift={(0.75cm,0)}] {node (node1111) {1111}}
        [shift={(0.375cm,0)}] node[label1] {$\theta_{15}$}
        node[label3, below=0.6cm] {15}
      }
      [shift={(0.5cm,0)}] node[label1] {$\theta_{7}$}
      node[label3, below=0.6cm] {13}
    }
    [shift={(1cm,0)}] node[label1] {$\theta_{3}$}
    node[label2, below=0.6cm] {8}
  }  node[label1] {$\theta_{1}$}
  node[label4, below=0.6cm] {1};

\path (node0000) node[label5] {$\theta_{16}$};
\path (node0001) node[label5] {$\theta_{17}$};
\path (node0010) node[label5] {$\theta_{18}$};
\path (node0011) node[label5] {$\theta_{19}$};
\path (node0100) node[label5] {$\theta_{20}$};
\path (node0101) node[label5] {$\theta_{21}$};
\path (node0110) node[label5] {$\theta_{22}$};
\path (node0111) node[label5] {$\theta_{23}$};
\path (node1000) node[label5] {$\theta_{24}$};
\path (node1001) node[label5] {$\theta_{25}$};
\path (node1010) node[label5] {$\theta_{26}$};
\path (node1011) node[label5] {$\theta_{27}$};
\path (node1100) node[label5] {$\theta_{28}$};
\path (node1101) node[label5] {$\theta_{29}$};
\path (node1110) node[label5] {$\theta_{30}$};
\path (node1111) node[label5] {$\theta_{31}$};

\end{tikzpicture}
}
\caption{Binary tree for preparing a $4$-qubit state in Hopf coordinates. A bit-$0$ branch multiplies the amplitude by $\cos\theta_j$, and a bit-$1$ branch multiplies it by $\sin\theta_j$. The displayed $\theta_j$ labels are the Hopf magnitude parameters associated with internal nodes, while the blue labels $\theta_{16},\ldots,\theta_{31}$ are the complex leaf phases. Circled numbers indicate the gate-application order generated by Algorithms~\ref{alg:order_real} and~\ref{alg:order_complex}. The binary labels inside the nodes are padded path labels identifying the corresponding subtree or leaf, rather than the breadth-first internal-node indices.}
\label{fig:hopf4_tree}
\end{figure*}

\subsection*{Ansatz circuit}
\label{app:hopf4_ansatz_schedule}

The tree in Fig.~\ref{fig:hopf4_tree} determines both the amplitude products and the physical gate schedule. Each internal node becomes one controlled single-qubit rotation. In the real ansatz, this is an $R_y$ gate. In the complex ansatz, only the final-layer rotations are promoted to $R_{\mathbb C}$ gates, because they split sibling leaves and can therefore attach the two corresponding leaf phases.

\begin{figure*}[h]
\centering
\includegraphics[scale=0.25]{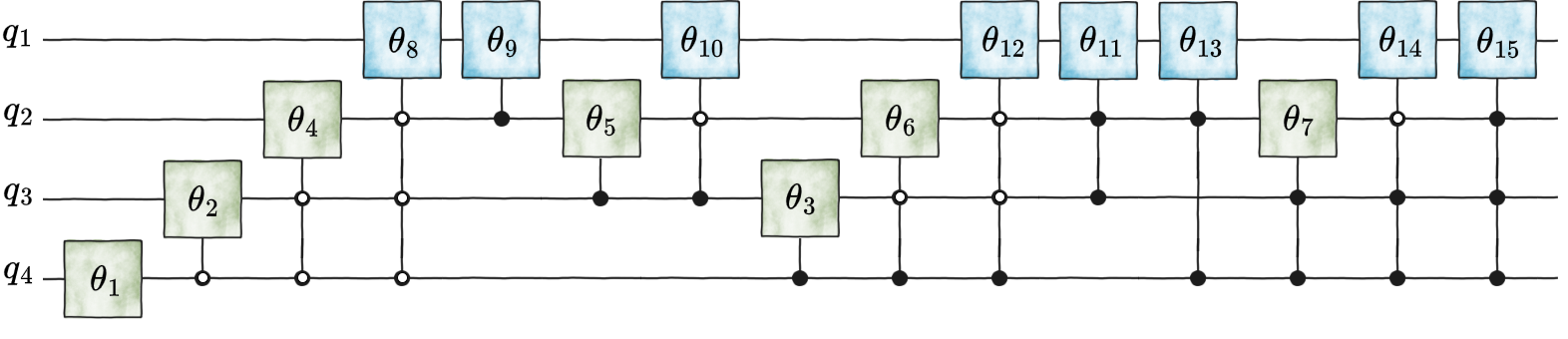}
\caption{Four-qubit Hopf ansatz circuit derived from the tree in Fig.~\ref{fig:hopf4_tree}. The green gates are the magnitude rotations shared by the real-state and complex-state versions of the Hopf ansatz. In the complex ansatz, each final-layer gate $\theta_j$ with $8\le j\le 15$ is promoted to an $R_{\mathbb C}$ gate carrying the magnitude angle $\theta_j$ and the two sibling leaf phases listed in Table~\ref{tab:hopf4_gate_schedule}. All quantum circuit diagrams in this work were produced using the quantum-circuit-drawio-library \cite{wilkens2023quantumcircuitlibrary}.}
\label{fig:hopf4_ansatz_circuit}
\end{figure*}

\begin{table*}[h]
\centering
\scriptsize
\begin{tabular}{c c c c  | c c | c c }
\hline
Order & $\mathtt{Ctrl}$ & $\mathtt{Anti}$  & $\mathtt{Targ}$  & $\mathtt{Index}$ (real) & Gates (real) & $\mathtt{Index}$ (complex) & Gates (complex)\\
\hline
\Circled{1}  & $0000$ & $0000$ & $1000$ & $1$ & $R_y^4(\theta_{1})$ & $1$ & $R_y^4(\theta_{1})$\\
\Circled{2}  & $0000$ & $1000$ & $0100$ & $2$ & $a_4 R_y^3(\theta_{2})$ & $2$ & $a_4 R_y^3(\theta_{2})$\\
\Circled{3}  & $0000$ & $1100$ & $0010$ & $4$ & $a_{3,4}R_y^2(\theta_{4})$ & $4$ & $a_{3,4}R_y^2(\theta_{4})$\\
\Circled{4}  & $0000$ & $1110$ & $0001$ & $8$ & $a_{2,3,4}R_y^1(\theta_{8})$ & $[8,16,17]$ & $a_{2,3,4}R_{\mathbb{C}}^1(\theta_{8},\theta_{16},\theta_{17})$\\
\Circled{5}  & $0010$ & $0000$ & $0001$ & $9$ & $c_2 R_y^1 (\theta_{9})$ & $[9,18,19]$ & $c_2 R_{\mathbb{C}}^1 (\theta_{9},\theta_{18},\theta_{19})$ \\
\Circled{6}  & $0100$ & $0000$ & $0010$ & $5$ & $c_3 R_y^2(\theta_{5})$ & $5$ & $c_3 R_y^2(\theta_{5})$\\
\Circled{7}  & $0100$ & $0010$ & $0001$ & $10$ & $c_3 a_2 R_y^1(\theta_{10})$ & $[10,20,21]$ & $c_3 a_2 R_{\mathbb{C}}^1(\theta_{10},\theta_{20},\theta_{21})$\\
\Circled{8}  & $1000$ & $0000$ & $0100$ & $3$ & $c_4 R_y^3(\theta_{3})$ & $3$ & $c_4 R_y^3(\theta_{3})$\\
\Circled{9}  & $1000$ & $0100$ & $0010$ & $6$ & $c_4 a_3 R_y^2(\theta_{6})$ & $6$ & $c_4 a_3 R_y^2(\theta_{6})$\\
\Circled{10} & $1000$ & $0110$ & $0001$ & $12$ & $c_4 a_{2,3} R_y^1(\theta_{12})$ & $[12,24,25]$ & $c_4 a_{2,3} R_{\mathbb{C}}^1(\theta_{12},\theta_{24},\theta_{25})$\\
\Circled{11} & $0110$ & $0000$ & $0001$ & $11$ & $c_{2,3} R_y^1(\theta_{11})$ & $[11,22,23]$ & $c_{2,3} R_{\mathbb{C}}^1(\theta_{11},\theta_{22},\theta_{23})$\\
\Circled{12} & $1010$ & $0000$ & $0001$ & $13$ & $c_{2,4} R_y^1(\theta_{13})$ & $[13,26,27]$ & $c_{2,4} R_{\mathbb{C}}^1(\theta_{13},\theta_{26},\theta_{27})$\\
\Circled{13} & $1100$ & $0000$ & $0010$ & $7$ & $c_{3,4} R_y^2(\theta_{7})$ & $7$ & $c_{3,4} R_y^2(\theta_{7})$\\
\Circled{14} & $1100$ & $0010$ & $0001$ & $14$ & $c_{3,4} a_2 R_y^1(\theta_{14})$ & $[14,28,29]$ & $c_{3,4} a_2 R_{\mathbb{C}}^1(\theta_{14},\theta_{28},\theta_{29})$\\
\Circled{15} & $1110$ & $0000$ & $0001$ & $15$ & $c_{2,3,4} R_y^1(\theta_{15})$ & $[15,30,31]$ & $c_{2,3,4} R_{\mathbb{C}}^1(\theta_{15},\theta_{30},\theta_{31})$\\
\hline
\end{tabular}
\caption{Circuit synthesis from Algorithms~\ref{alg:order_real} and \ref{alg:order_complex}. 
Control, anti-control, and target qubit locations are encoded as bitmasks in $\mathtt{Ctrl}$, $\mathtt{Anti}$, and $\mathtt{Targ}$ (a `$1$' denotes an active qubit). Throughout, we order qubits as $(q_n,\ldots,q_1)$ from left to right in bitmasks. Thus, for $n=4$, $\mathtt{Targ}=1000$ targets qubit $4$ and $\mathtt{Targ}=0001$ targets qubit $1$. For the complex ansatz, an integer index denotes a magnitude-only \(R_y\) gate,
while a length-three index denotes a promoted \(R_{\mathbb C}\) gate carrying
one magnitude angle and two leaf phases.}
\label{tab:hopf4_gate_schedule}
\end{table*}

With the gate order and index convention fixed, we return to the same tree as a coordinate chart. A single root-to-leaf path gives an amplitude product; differentiating that product gives the sparse Jacobian row; and the squared probability flowing through each internal node gives the corresponding diagonal metric entry.

\subsection*{Metric and Jacobian entries}
\label{app:hopf4_metric_jacobian}

As a representative path calculation, consider the leaf $\ell=10=(1010)_2$. Reading the bits from $q_4$ to $q_1$, the path takes the bit sequence $1,0,1,0$. Hence the path uses the factors $\sin\theta_1$, $\cos\theta_3$, $\sin\theta_6$, and $\cos\theta_{13}$, giving
\begin{equation}
\label{eq:hopf4_x10}
x_{10}
=
\sin\theta_1\cos\theta_3\sin\theta_6\cos\theta_{13}.
\end{equation}
The corresponding nonzero real Jacobian entries in the row for $x_{10}$ are
\begin{equation}
\frac{\partial x_{10}}{\partial\theta_1} = \cot\theta_1\,x_{10}, \quad
\frac{\partial x_{10}}{\partial\theta_3} = -\tan\theta_3\,x_{10}, \quad
\frac{\partial x_{10}}{\partial\theta_6} = \cot\theta_6\,x_{10}, \quad
\frac{\partial x_{10}}{\partial\theta_{13}} = -\tan\theta_{13}\,x_{10}.
\label{eq:hopf4_x10_jacobian}
\end{equation}
In the complex ansatz,
\begin{equation}
\label{eq:hopf4_x10_complex_phase}
x_{10}^{\mathbb C}=x_{10}e^{\iu\theta_{26}},
\qquad
\frac{\partial x_{10}^{\mathbb C}}{\partial\theta_{26}}
=
\iu\,x_{10}^{\mathbb C}.
\end{equation}

The same tree also gives the metric entries. For an internal node $j$, the real metric entry $g^{\mathbb R}_{j,j}$ is the squared amplitude weight accumulated along the path from the root to the parent of $j$. Let
\[
c_j:=\cos^2\theta_j,
\qquad
s_j:=\sin^2\theta_j.
\]
Then the diagonal real Hopf metric for $n=4$ is listed in Table~\ref{tab:hopf4_metric_entries}. In the complex ansatz, the magnitude block has the same entries, while each phase direction is weighted by the occupation probability of its leaf:
\[
g^{\mathbb C}_{j,j}=g^{\mathbb R}_{j,j}\quad(1\le j\le 15),
\qquad
g^{\mathbb C}_{16+\ell,16+\ell}=|x_\ell|^2.
\]

\begin{table*}[h]
\centering
\scriptsize
\begin{tabular}{c c c | c c c}
\hline
$j$ & Prefix & $g^{\mathbb R}_{j,j}$ & $j$ & Prefix & $g^{\mathbb R}_{j,j}$ \\
\hline
1 & $\emptyset$ & $1$             & 9  & $001$ & $c_1c_2s_4$ \\
2 & $0$          & $c_1$          & 10 & $010$ & $c_1s_2c_5$ \\
3 & $1$          & $s_1$          & 11 & $011$ & $c_1s_2s_5$ \\
4 & $00$         & $c_1c_2$       & 12 & $100$ & $s_1c_3c_6$ \\
5 & $01$         & $c_1s_2$       & 13 & $101$ & $s_1c_3s_6$ \\
6 & $10$         & $s_1c_3$       & 14 & $110$ & $s_1s_3c_7$ \\
7 & $11$         & $s_1s_3$       & 15 & $111$ & $s_1s_3s_7$ \\
8 & $000$        & $c_1c_2c_4$    &    &       &             \\
\hline
\end{tabular}
\caption{Diagonal metric entries for the real $n=4$ Hopf ansatz. The prefix is the length-$d_j$ binary expansion of $j-2^{d_j}$, padded with leading zeros.}
\label{tab:hopf4_metric_entries}
\end{table*}

The metric entries in Table~\ref{tab:hopf4_metric_entries} are precisely the normalization factors for coordinate derivatives. Dividing a raw derivative by $\sqrt{g_{j,j}}$ removes the amplitude weight accumulated before node $j$, leaving a normalized state supported only on the subtree rooted at $j$. We now spell this out for two magnitude parameters and one phase parameter.

\subsection*{Representative normalized coordinate tangents}
\label{app:hopf4_normalized_partials}

First, consider the magnitude parameter $\theta_5$. Its subtree contains the
leaves $\{4,5,6,7\}$. To distinguish the two ansatze in these examples, write
\(\ket{\psi_{\mathbb R}(\boldsymbol{\theta})}\) and
\(\ket{\psi_{\mathbb C}(\boldsymbol{\theta})}\) for the real and complex Hopf
states, respectively.

For the real ansatz, after normalization by
\(\sqrt{g^{\mathbb R}_{5,5}}=\sqrt{c_1s_2}\), all factors from the path above
node \(5\) cancel, and the normalized coordinate tangent is supported only on
that subtree:
\begin{equation}
\ket{e^{\mathbb R}_5(\boldsymbol{\theta})}
=
\sum_{\ell=4}^{7}v^{(5,\mathbb R)}_\ell\ket{b_\ell},
\end{equation}
where
\begin{equation}
(v^{(5,\mathbb R)}_4,v^{(5,\mathbb R)}_5,
 v^{(5,\mathbb R)}_6,v^{(5,\mathbb R)}_7)
=
(-\sin\theta_5\cos\theta_{10},
 -\sin\theta_5\sin\theta_{10},
  \cos\theta_5\cos\theta_{11},
  \cos\theta_5\sin\theta_{11}).
\end{equation}

For the complex ansatz, the magnitude factors are the same, but the inherited
leaf phases on the subtree must be kept:
\begin{equation}
\ket{e^{\mathbb C}_5(\boldsymbol{\theta})}
=
\sum_{\ell=4}^{7}v^{(5,\mathbb C)}_\ell\ket{b_\ell},
\end{equation}
where
\begin{equation}
\begin{aligned}
(v^{(5,\mathbb C)}_4,v^{(5,\mathbb C)}_5,
 v^{(5,\mathbb C)}_6,v^{(5,\mathbb C)}_7)
=
(-\sin\theta_5\cos\theta_{10}e^{\iu\theta_{20}},
  -\sin\theta_5\sin\theta_{10}e^{\iu\theta_{21}},
\ \ \cos\theta_5\cos\theta_{11}e^{\iu\theta_{22}},
   \ \cos\theta_5\sin\theta_{11}e^{\iu\theta_{23}}).
\end{aligned}
\end{equation}

As a final-layer magnitude example, the parameter \(\theta_{13}\) splits only
the sibling leaves \(\ell=10\) and \(\ell=11\). For the real ansatz, after
normalization by
\(\sqrt{g^{\mathbb R}_{13,13}}=\sqrt{s_1c_3s_6}\), the normalized tangent is
\begin{equation}
\ket{e^{\mathbb R}_{13}(\boldsymbol{\theta})}
=
-\sin\theta_{13}\ket{b_{10}}
+
\cos\theta_{13}\ket{b_{11}}.
\end{equation}
For the complex ansatz, the corresponding magnitude tangent keeps the two leaf
phases:
\begin{equation}
\ket{e^{\mathbb C}_{13}(\boldsymbol{\theta})}
=
-\sin\theta_{13}e^{\iu\theta_{26}}\ket{b_{10}}
+
\cos\theta_{13}e^{\iu\theta_{27}}\ket{b_{11}}.
\end{equation}

For a phase parameter, the tangent is supported on a single leaf. In particular,
\(\theta_{26}=\theta_{16+10}\) is attached to \(\ell=10\), so on the regular set
where \(g^{\mathbb C}_{26,26}=|x_{10}|^2>0\),
\begin{equation}
\ket{e^{\mathbb C}_{26}(\boldsymbol{\theta})}
=
\iu e^{\iu\theta_{26}}\ket{b_{10}}.
\end{equation}

Table~\ref{tab:hopf4_tangent_assignments} lists representative gate assignments
that synthesize these normalized directions using the same Hopf gate skeleton.
These assignments are physical gate settings and may lie outside the canonical
coordinate ranges. All unspecified magnitude angles are set to zero; in complex
rows, all unspecified phase angles are also set to zero.

\begin{table*}[h]
\centering
\scriptsize
\begin{tabular}{c | p{0.64\textwidth} | c}
\hline
Target direction & Nonzero or kept assignments & Support \\
\hline
$\ket{e^{\mathbb R}_5}$
&
$\theta^{(5)}_1=0$,
$\theta^{(5)}_2=\pi/2$,
$\theta^{(5)}_5=\theta_5+\pi/2$;
$\theta^{(5)}_{10}=\theta_{10}$,
$\theta^{(5)}_{11}=\theta_{11}$
& $\ell=4,5,6,7$ \\
\hline
$\ket{e^{\mathbb C}_5}$
&
$\theta^{(5)}_1=0$,
$\theta^{(5)}_2=\pi/2$,
$\theta^{(5)}_5=\theta_5+\pi/2$;
$\theta^{(5)}_{10}=\theta_{10}$,
$\theta^{(5)}_{11}=\theta_{11}$;
$\theta^{(5)}_{20}=\theta_{20}$,
$\theta^{(5)}_{21}=\theta_{21}$,
$\theta^{(5)}_{22}=\theta_{22}$,
$\theta^{(5)}_{23}=\theta_{23}$
& $\ell=4,5,6,7$ \\
\hline
$\ket{e^{\mathbb R}_{13}}$
&
$\theta^{(13)}_1=\pi/2$,
$\theta^{(13)}_3=0$,
$\theta^{(13)}_6=\pi/2$,
$\theta^{(13)}_{13}=\theta_{13}+\pi/2$
& $\ell=10,11$ \\
\hline
$\ket{e^{\mathbb C}_{13}}$
&
$\theta^{(13)}_1=\pi/2$,
$\theta^{(13)}_3=0$,
$\theta^{(13)}_6=\pi/2$,
$\theta^{(13)}_{13}=\theta_{13}+\pi/2$;
$\theta^{(13)}_{26}=\theta_{26}$,
$\theta^{(13)}_{27}=\theta_{27}$
& $\ell=10,11$ \\
\hline
$\ket{e^{\mathbb C}_{26}}$
&
$\theta^{(26)}_1=\pi/2$,
$\theta^{(26)}_3=0$,
$\theta^{(26)}_6=\pi/2$,
$\theta^{(26)}_{13}=0$,
$\theta^{(26)}_{26}=\theta_{26}+\pi/2$
& $\ell=10$ \\
\hline
\end{tabular}
\caption{Representative gate assignments that synthesize normalized coordinate
tangents in the $n=4$ example. The rows distinguish real magnitude tangents
from complex magnitude tangents; the latter keep the original leaf phases on the
relevant subtree, as required by Theorem~\ref{thm:synthesis_normalized_direction_hopf}.}
\label{tab:hopf4_tangent_assignments}
\end{table*}

The preceding examples prepare one tangent direction at a time. The layerwise construction batches all tangent directions at the same tree depth by adding an index register. The $d=2$ layer is the first nontrivial four-label example in the $n=4$ ansatz, so it is the natural case to display explicitly.

\subsection*{Layerwise derivative and branch circuits}
\label{app:hopf4_layerwise_branch}

For depth $d=2$, the magnitude layer is $\mathcal{L}_2=\{4,5,6,7\}$. A two-qubit index register labels the offset within this layer. We write the register as $I_1I_2$, with $I_1$ the most significant label bit, so the four computational-basis labels select the four tangent assignments as follows:
\begin{equation}
00\mapsto\theta_4,
\qquad
01\mapsto\theta_5,
\qquad
10\mapsto\theta_6,
\qquad
11\mapsto\theta_7.
\label{eq:hopf4_layer2_label_map}
\end{equation}

\begin{figure*}[h]
\centering
\includegraphics[scale=0.25]{layerwise_circuit.png}
\caption{Indexed derivative preparation for the $d=2$ magnitude layer of the four-qubit real Hopf ansatz. The index register $I_1I_2$ selects one of the tangent-state assignments $\boldsymbol{\theta}^{(j)}$ with $j\in\{4,5,6,7\}$ according to Eq.~\eqref{eq:hopf4_layer2_label_map}. Conditioned on the observed index, the system register is in the corresponding normalized coordinate tangent.}
\label{fig:hopf4_layerwise_circuit}
\end{figure*}

\begin{figure*}[h]
\centering
\includegraphics[scale=0.25]{branch_circuit.png}
\caption{Indexed signed-branch preparation for the same $d=2$ magnitude layer. The branch $a=0$ prepares the baseline state $\mathrm{HopfReal}(\boldsymbol{\theta})\ket{0}^{\otimes4}$, while the branch $a=1$ prepares the index-selected tangent state $\mathrm{HopfReal}(\boldsymbol{\theta}^{(j)})\ket{0}^{\otimes4}$. Measuring $a$ in the $X$ basis, equivalently by adding a Hadamard before $Z$-basis readout, yields the signed branch states used in the estimator.}
\label{fig:hopf4_branch_circuit}
\end{figure*}

\FloatBarrier
\section{Geometry-native optimizers for the Hopf ansatz}
\label{app:optimizers}

This appendix records the optimizer layer used with the Hopf ansatz, focusing on
methods whose quantum interface consists of objective evaluations and
Hopf-coordinate gradient evaluations. All geometry-dependent operations below are
classical post-processing: diagonal metric rescaling, Jacobian lifting, exact
sphere exponential maps, exact vector transport, line searches, and the inverse
Hopf map. Adam-type coordinate updates are used in the main text as baselines;
the appendix focuses on optimizers that explicitly use the Hopf metric and
state-sphere geometry.

The starting point is the same cost-and-gradient loop used in standard
variational algorithms.  At a parameter value \(\boldsymbol{\theta}\), the
quantum device supplies an estimate of
\begin{equation}
C(\boldsymbol{\theta})
=
C\big(\ket{\psi(\boldsymbol{\theta})}\big)
\end{equation}
and of the coordinate-gradient components
\begin{equation}
c_i
:=
\partial_{\theta_i} C(\boldsymbol{\theta}).
\end{equation}
The Hopf structure then converts these coordinate data into state-sphere data
using analytic information from the chart: the inverse map, diagonal metric, and
raw Jacobian directions are all known explicitly.

\subsection*{Jacobian formulas and inverse differential map}

The main text uses only the diagonal metric, but the optimizer layer also uses
the differential of the Hopf chart.  The Jacobian maps parameter variations to
state-space tangent vectors; its columns are the raw coordinate tangents in the
ambient state-vector basis.

\begin{theorem}[Jacobian formulas and Gram forms]
\label{thm:hopf_jacobian_both}
Let \(N=2^n\), and let
\[
\mathbf{x}(\boldsymbol{\theta})
=
(x_0,\ldots,x_{N-1})
\]
be the Hopf coordinate map.

In the real case define
\begin{equation}
J^{\mathbb R}(\boldsymbol{\theta})
\in \mathbb{R}^{N\times(N-1)},
\qquad
(J^{\mathbb R})_{a j}
:=
\frac{\partial x_a}{\partial\theta_j},
\end{equation}
with \(a=0,\ldots,N-1\) and \(j=1,\ldots,N-1\).  In the complex case define
\begin{equation}
J^{\mathbb C}(\boldsymbol{\theta})
\in \mathbb{C}^{N\times(2N-1)},
\qquad
(J^{\mathbb C})_{a j}
:=
\frac{\partial x_a}{\partial\theta_j},
\end{equation}
with \(a=0,\ldots,N-1\) and \(j=1,\ldots,2N-1\).

For a fixed amplitude index \(a\), let \(P(a)\) be the set of internal nodes on
the unique root-to-leaf path that generates \(x_a\).  For \(j\in P(a)\), define
\(\varepsilon_{a j}\in\{0,1\}\) by
\[
\varepsilon_{a j} =
\begin{cases}
0, & \text{if the path contributes a }\cos\theta_j\text{ factor},\\
1, & \text{if the path contributes a }\sin\theta_j\text{ factor}.
\end{cases}
\]
The following logarithmic-derivative formulas hold literally on the regular set
where the displayed \(\tan\) and \(\cot\) factors are finite.  At zeros of the
corresponding sine or cosine factor, they are understood through the direct
product derivative, equivalently by continuous extension.

(A) Real case.  For \(1\le j\le N-1\),
\begin{equation}
\frac{\partial x_a}{\partial\theta_j}=
\begin{cases}
-\tan\theta_j\,x_a, & j\in P(a),\ \varepsilon_{a j}=0,\\[2pt]
\ \ \cot\theta_j\,x_a, & j\in P(a),\ \varepsilon_{a j}=1,\\[2pt]
0, & j\notin P(a).
\end{cases}
\label{eq:jac_real_entries}
\end{equation}
Moreover, the real pullback metric is the Gram matrix
\begin{equation}
g^{\mathbb R}(\boldsymbol{\theta})
=
\big(J^{\mathbb R}(\boldsymbol{\theta})\big)^\top
J^{\mathbb R}(\boldsymbol{\theta}).
\label{eq:g_real_gram}
\end{equation}

(B) Complex case.  Write
\[
x_a(\boldsymbol{\theta})
=
r_a(\boldsymbol{\theta})e^{\iu\theta_{N+a}},
\qquad
r_a\ge0,
\]
where \(r_a\) is produced by the internal-node magnitude angles.  For
\(1\le j\le N-1\), the same path-derivative rule holds, and for the phase block
there is one additional nonzero derivative:
\begin{equation}
\frac{\partial x_a}{\partial\theta_j}=
\begin{cases}
-\tan\theta_j\,x_a, & j\in P(a),\ \varepsilon_{a j}=0,\\[2pt]
\ \ \cot\theta_j\,x_a, & j\in P(a),\ \varepsilon_{a j}=1,\\[2pt]
\iu\,x_a, & j=N+a,\\[2pt]
0, & \text{otherwise}.
\end{cases}
\label{eq:jac_complex_entries}
\end{equation}
Moreover, under the ambient round-sphere convention of
Definition~\ref{def:hopf_metric_convention}, the complex pullback metric is
\begin{equation}
g^{\mathbb C}(\boldsymbol{\theta})
=
\operatorname{Re}\!\left[
\big(J^{\mathbb C}(\boldsymbol{\theta})\big)^\dagger
J^{\mathbb C}(\boldsymbol{\theta})
\right].
\label{eq:g_complex_gram}
\end{equation}
\end{theorem}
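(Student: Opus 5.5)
The plan is to work directly from the explicit product formula for the amplitudes given by the Hopf recursion of Definition~\ref{def:hopf_coordinates}. Each amplitude has the form $x_a=\prod_{j\in P(a)} f_{\varepsilon_{aj}}(\theta_j)$, with $f_0=\cos$ and $f_1=\sin$; in the complex case there is an extra factor $e^{\iu\theta_{N+a}}$. The structural fact doing the work is that the root-to-leaf path is unique, so each internal-node angle $\theta_j$ appears in $x_a$ at most once, and only when $j\in P(a)$. In that case it contributes exactly one factor, $\cos\theta_j$ or $\sin\theta_j$ according to $\varepsilon_{aj}$. The phase $\theta_{N+a}$ appears only in $x_a$. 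Once the product form is fixed, which is essentially Lemma~\ref{lem:hopf_product_formula}, the rest is bookkeeping. I would record that form first.

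The Jacobian entries then follow by differentiating one factor of a product. If $j\notin P(a)$, then $x_a$ does not depend on $\theta_j$, so the derivative is $0$. If $j\in P(a)$ with $\varepsilon_{aj}=0$, then $\partial_{\theta_j}x_a=-\sin\theta_j\prod_{k\ne j}f(\theta_k)$. On the regular set where $\cos\theta_j\neq0$ this equals $-\tan\theta_j\,x_a$. The case $\varepsilon_{aj}=1$ gives $\cot\theta_j\,x_a$ in the same way. At zeros of the factor, the statement already interprets the formula as the direct product derivative, so no further argument is needed there. In the complex case the magnitude factor $r_a$ multiplies $e^{\iu\theta_{N+a}}$, which does not depend on magnitude angles, so the same rule holds verbatim. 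Differentiating in $\theta_{N+a}$ gives $\iu x_a$, and $x_a$ does not depend on $\theta_{N+b}$ for $b\ne a$.

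For the Gram forms I would simply unfold the definitions. In the real case, \eqref{eq:real_metric_JTJ} already gives $g^{\mathbb R}_{i\ell}=\sum_k \partial_i x_k\,\partial_\ell x_k$, which is the $(i,\ell)$ entry of $J^\top J$. This relies on the vanishing projection term established in Definition~\ref{def:hopf_metric_convention}(A) from normalization. In the complex case, $\braket{\dot\psi_i|\dot\psi_\ell}=\sum_k \overline{\partial_i x_k}\,\partial_\ell x_k=(J^\dagger J)_{i\ell}$, and taking real parts matches \eqref{eq:round_complex_def}.

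There is no real obstacle here. The only delicate point is making sure the product representation is stated and used with a consistent identification of $P(a)$ and $\varepsilon_{aj}$ with the bits of $a$ and the breadth-first node indices of Definition~\ref{def:binary_tree}, namely $s_{t+1}=2s_t+q$. I would check this indexing against the $n=4$ examples such as \eqref{eq:hopf4_x10_jacobian} before writing the general case. A secondary care point is the boundary convention for $\tan$ and $\cot$.
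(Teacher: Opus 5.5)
Your proposal is correct and follows essentially the same route as the paper. The paper's proof also factors $x_a$ via Lemma~\ref{lem:hopf_product_formula} into $f(\theta_j)$ times a $\theta_j$-independent remainder, takes the direct product derivative, rewrites it as $-\tan\theta_j\,x_a$ or $\cot\theta_j\,x_a$ where the factor is nonzero, treats the phase block by $\partial_{\theta_{N+a}}x_a=\iu\,x_a$, and reads off $J^\top J$ and $\operatorname{Re}(J^\dagger J)$ entrywise from Eq.~\eqref{eq:real_metric_JTJ} and Definition~\ref{def:hopf_metric_convention}(B).
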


Thus the Hopf Jacobian is explicit, and its Gram matrix is exactly the pullback
metric.  Since the metric is diagonal, the differential map can be inverted
componentwise on the regular set, rather than by solving a dense linear system.

Define the regular parameter regions
\[
\mathcal{U}_{\mathbb R}
:=
\left\{
\boldsymbol{\theta}\mid
g^{\mathbb R}_{i,i}(\boldsymbol{\theta})>0\ \forall i
\right\},
\qquad
\mathcal{U}_{\mathbb C}
:=
\left\{
\boldsymbol{\theta}\mid
g^{\mathbb C}_{i,i}(\boldsymbol{\theta})>0\ \forall i
\right\}.
\]
On \(\mathcal{U}_{\mathbb R}\), the Jacobian \(J^{\mathbb R}\) has full column
rank and its Moore--Penrose left inverse is
\[
(J^{\mathbb R})^+
=
(g^{\mathbb R})^{-1}(J^{\mathbb R})^\top,
\qquad
(J^{\mathbb R})^+J^{\mathbb R}=I_{N-1}.
\]
For the complex ansatz, after realification
\[
J_{\mathbb R}
:=
\begin{bmatrix}
\operatorname{Re}J^{\mathbb C}\\
\operatorname{Im}J^{\mathbb C}
\end{bmatrix},
\]
the corresponding left inverse on \(\mathcal{U}_{\mathbb C}\) is
\[
(J_{\mathbb R})^+
=
(g^{\mathbb C})^{-1}(J_{\mathbb R})^\top,
\qquad
(J_{\mathbb R})^+J_{\mathbb R}=I_{2N-1}.
\]
Consequently, on the regular set, the diagonal metric not only gives
componentwise Riemannian rescaling, but also gives an explicit inverse
differential map from state-space tangent vectors back to Hopf-coordinate
variations.

\subsection*{State-sphere conventions}

Let \(\mathbb{S}_\star\) denote the normalized state-vector sphere on which the
optimizer moves.  In the real ansatz,
\(\mathbb{S}_\star=\mathbb{S}^{N-1}\subset\mathbb{R}^{N}\).  In the complex
ansatz, \(\mathbb{C}^{N}\) is regarded as a real Euclidean space and
\(\mathbb{S}_\star=\mathbb{S}^{2N-1}\).  The real pairing is
\begin{equation}
\langle u,v\rangle_\star
=
\begin{cases}
\braket{u|v}, & \text{real case},\\[2pt]
\operatorname{Re}\braket{u|v}, & \text{complex round-sphere case}.
\end{cases}
\label{eq:optimizer_star_pairing}
\end{equation}
The tangent projection at \(\ket{\psi}\in\mathbb{S}_\star\) is
\begin{equation}
\Pi_\psi(w)
=
w-
\langle \psi,w\rangle_\star\psi .
\label{eq:optimizer_tangent_projection}
\end{equation}
For \(u\in T_\psi\mathbb{S}_\star\), the exact sphere exponential map is
\begin{equation}
\operatorname{Exp}_{\psi}(u)
=
\begin{cases}
\psi, & \|u\|_\star=0,\\[4pt]
\cos(\|u\|_\star)\psi
+
\dfrac{\sin(\|u\|_\star)}{\|u\|_\star}u,
& \|u\|_\star>0.
\end{cases}
\label{eq:optimizer_sphere_exp}
\end{equation}
Equivalently, if the geodesic is parameterized as
\(\gamma(\eta)=\operatorname{Exp}_{\psi}(\eta u)\), then its angular length is
\(\eta\|u\|_\star\).  This is the exact-geodesic update used in
Ref.~\cite{ferreira2025quantum}.

The exact parallel transport of a tangent vector
\(\zeta\in T_\psi\mathbb{S}_\star\) along the same geodesic
\(\gamma(\eta)=\operatorname{Exp}_{\psi}(\eta u)\) is
\begin{equation}
\mathcal{T}_{\eta u}(\zeta)
=
\zeta
-
\sin(\eta\|u\|_\star)
\frac{\langle u,\zeta\rangle_\star}{\|u\|_\star}\,
\psi
+
\big[\cos(\eta\|u\|_\star)-1\big]
\frac{\langle u,\zeta\rangle_\star}{\|u\|_\star^2}\,
u,
\label{eq:optimizer_exact_transport}
\end{equation}
with the continuous identity limit when \(\|u\|_\star=0\).  This is the
transport used by Hopf-EGT-CG and by the Riemannian memory methods below.  It
is a vector-transport rule along the accepted geodesic; it is not a second
state-update geodesic.

\subsection*{Lifting Hopf coordinate gradients to the state sphere}

For brevity in this appendix, write
\begin{equation}
\ket{\dot\psi_i}
:=
\partial_{\theta_i}\ket{\psi(\boldsymbol{\theta})},
\qquad
\ket{e_i}
:=
\frac{\ket{\dot\psi_i}}{\sqrt{g_{i,i}(\boldsymbol{\theta})}}
\end{equation}
where \(g_{i,i}\) is the diagonal Hopf metric from
Theorem~\ref{thm:hopf_metric_diagonal}.  On the regular set,
\begin{equation}
\langle e_i,e_j\rangle_\star=\delta_{ij}.
\end{equation}
Thus the Hopf differential has the orthonormal-frame form
\begin{equation}
D_{\boldsymbol{\theta}}\psi[\delta\boldsymbol{\theta}]
=
\sum_i
\delta\theta_i\sqrt{g_{i,i}(\boldsymbol{\theta})}\,e_i .
\label{eq:optimizer_hopf_differential}
\end{equation}

If the coordinate gradient components
\(c_i=\partial_{\theta_i} C\) are known, the corresponding state-sphere gradient
is
\begin{equation}
G(\boldsymbol{\theta})
:=
\operatorname{grad}_{\mathbb{S}_\star} C
=
\sum_i
\frac{c_i}{\sqrt{g_{i,i}(\boldsymbol{\theta})}}e_i
=
\sum_i
\frac{c_i}{g_{i,i}(\boldsymbol{\theta})}\dot\psi_i.
\label{eq:optimizer_state_gradient_from_hopf}
\end{equation}
In matrix notation this is \(G=Jg^{-1}\nabla_{\theta}C\), where \(J\) is the
Hopf Jacobian and \(g=J^{\top}J\) in the real case, or the corresponding
realified Gram matrix in the complex round-sphere convention.

Equation~\eqref{eq:optimizer_state_gradient_from_hopf} is the optimizer-side use
of the Hopf metric.  Since \(g\) is diagonal, no dense metric inversion is
performed.  In the real tree, the state-gradient lift can be evaluated without
materializing \(J\): for each leaf, only the \(n\) internal nodes on its
root-to-leaf path contribute.  Thus the lift costs \(O(nN)\) classical time
and \(O(N)\) memory for an \(n\)-qubit real Hopf state, after the coordinate
gradient has been supplied.

After a state-sphere update produces \(\psi_{k+1}\), the next circuit parameters
are obtained by the inverse Hopf map,
\begin{equation}
\boldsymbol{\theta}_{k+1}
=
\operatorname{Hopf}^{-1}(\psi_{k+1}),
\label{eq:optimizer_inverse_return}
\end{equation}
using Lemma~\ref{lem:inverse_hopf_map}.  This is the step that lets a classical
state-sphere optimizer remain an executable Hopf-circuit optimizer.

\subsection*{Hopf exact geodesic transport with conjugate gradients}

The basic exact-geodesic-transport descent direction is
\begin{equation}
v_k
:=
-G_k,
\label{eq:optimizer_egt_descent}
\end{equation}
where \(G_k=G(\boldsymbol{\theta}_k)\).  The first-order Hopf-EGT update is
\begin{equation}
\psi_{k+1}
=
\operatorname{Exp}_{\psi_k}(\eta_k v_k),
\qquad
\boldsymbol{\theta}_{k+1}
=
\operatorname{Hopf}^{-1}(\psi_{k+1}).
\label{eq:optimizer_egt_update}
\end{equation}
Infinitesimally, this reduces to the diagonal natural-gradient parameter step
\begin{equation}
\delta\theta_i
=
-
\eta_k
\frac{\partial_{\theta_i} C}{g_{i,i}}
+
O(\eta_k^2),
\end{equation}
but Eq.~\eqref{eq:optimizer_egt_update} keeps the state normalized exactly and
uses the full sphere geodesic rather than a flat coordinate step.

Hopf-EGT-CG adds a transported conjugate-gradient memory direction.  Let
\(u_k\in T_{\psi_k}\mathbb{S}_\star\) denote the search direction.  Initialize
\begin{equation}
u_0=v_0=-G_0.
\end{equation}
Given \(u_k\), choose \(\eta_k>0\) and update
\begin{equation}
\psi_{k+1}
=
\operatorname{Exp}_{\psi_k}(\eta_k u_k).
\end{equation}
Then transport the previous direction and previous gradient to
\(T_{\psi_{k+1}}\mathbb{S}_\star\):
\begin{equation}
\widetilde u_k
:=
\mathcal{T}_{\eta_k u_k}(u_k),
\qquad
\widetilde G_k
:=
\mathcal{T}_{\eta_k u_k}(G_k).
\end{equation}
Using the hybrid Dai--Yuan/Hestenes--Stiefel coefficient in gradient notation~\cite{dai1999nonlinear,hestenes1952methods},
define
\begin{equation}
D_k
:=
\langle G_{k+1},\widetilde u_k\rangle_\star
-
\langle G_k,u_k\rangle_\star,
\end{equation}
\begin{equation}
\beta_{k+1}^{\rm DY}
:=
\frac{
\langle G_{k+1},G_{k+1}\rangle_\star
}{D_k},
\qquad
\beta_{k+1}^{\rm HS}
:=
\frac{
\langle G_{k+1},G_{k+1}-\widetilde G_k\rangle_\star
}{D_k},
\end{equation}
and
\begin{equation}
\beta_{k+1}
:=
\max\{0,\min(\beta_{k+1}^{\rm DY},\beta_{k+1}^{\rm HS})\}.
\label{eq:optimizer_hybrid_beta}
\end{equation}
If \(D_k\) is numerically singular or the resulting direction is not a descent
direction, we restart with \(u_{k+1}=-G_{k+1}\).  Otherwise,
\begin{equation}
u_{k+1}
=
-G_{k+1}
+
\beta_{k+1}\widetilde u_k .
\label{eq:optimizer_egtcg_direction}
\end{equation}

The line search is performed on the one-dimensional geodesic
\begin{equation}
\gamma_k(\eta)
=
\operatorname{Exp}_{\psi_k}(\eta u_k).
\end{equation}
The strong-Wolfe conditions take the sphere form
\begin{align}
C(\gamma_k(\eta_k))-C(\psi_k)
&\le
c_1\eta_k\langle G_k,u_k\rangle_\star,
\label{eq:optimizer_wolfe_armijo}\\
\left|
\left\langle
G_{k+1},
\mathcal{T}_{\eta_k u_k}(u_k)
\right\rangle_\star
\right|
&\le
c_2
\left|\langle G_k,u_k\rangle_\star\right|,
\label{eq:optimizer_wolfe_curvature}
\end{align}
with \(0<c_1<c_2<1\).  These are the cost-and-gradient tests used by the EGT-CG
optimizer of Ref.~\cite{ferreira2025quantum}.  They require additional
objective and, when the curvature condition is checked, gradient evaluations;
they do not require any new quantum primitive beyond the cost and Hopf-gradient
queries.

\subsection*{Hopf Riemannian L-BFGS}

The Riemannian L-BFGS variant~\cite{liu1989limited,absil2008optimization} keeps a limited memory of curvature pairs in
state-sphere tangent spaces.  At iteration \(k\), assume the accepted update was
\begin{equation}
\psi_{k+1}
=
\operatorname{Exp}_{\psi_k}(\eta_k p_k),
\end{equation}
where \(p_k\in T_{\psi_k}\mathbb{S}_\star\) is the L-BFGS search direction.  The
step and gradient-difference vectors at the new point are
\begin{equation}
s_k
:=
\mathcal{T}_{\eta_k p_k}(\eta_k p_k),
\qquad
y_k
:=
G_{k+1}
-
\mathcal{T}_{\eta_k p_k}(G_k).
\label{eq:optimizer_lbfgs_pairs}
\end{equation}
The pair is accepted only when
\begin{equation}
\langle s_k,y_k\rangle_\star>0.
\label{eq:optimizer_lbfgs_curvature}
\end{equation}
All older stored pairs are exactly transported along each newly accepted
geodesic before the next two-loop recursion is applied.  Thus every vector used
by the recursion lies in the current tangent space.

Let the currently stored transported pairs be
\((s_j,y_j)\), \(j=k-r,\ldots,k-1\), with \(r\le m\).  Set
\begin{equation}
\rho_j
:=
\frac{1}{\langle s_j,y_j\rangle_\star}.
\end{equation}
The standard two-loop recursion is then applied with the pairing
\(\langle\cdot,\cdot\rangle_\star\).  Starting from \(q=G_k\), the backward loop
uses
\begin{equation}
\alpha_j
=
\rho_j\langle s_j,q\rangle_\star,
\qquad
q\leftarrow q-
\alpha_j y_j.
\end{equation}
The initial inverse-Hessian scale may be chosen as
\begin{equation}
\gamma_k
=
\frac{\langle s_{k-1},y_{k-1}\rangle_\star}
{\langle y_{k-1},y_{k-1}\rangle_\star},
\end{equation}
when a previous pair exists, and \(\gamma_k=1\) otherwise.  Setting
\(r=\gamma_k q\), the forward loop uses
\begin{equation}
\beta_j
=
\rho_j\langle y_j,r\rangle_\star,
\qquad
r\leftarrow r+s_j(\alpha_j-\beta_j).
\end{equation}
The search direction is
\begin{equation}
p_k
=
-\Pi_{\psi_k}(r),
\label{eq:optimizer_lbfgs_direction}
\end{equation}
with restart to \(-G_k\) if \(p_k\) is not a descent direction.  A geodesic line
search, for instance the strong-Wolfe search
\eqref{eq:optimizer_wolfe_armijo}--\eqref{eq:optimizer_wolfe_curvature}, then
chooses \(\eta_k\).

This optimizer uses the transparent Hopf geometry in two places.  First,
\(G_k\) is obtained from Hopf-coordinate gradients by the analytic diagonal
metric lift \eqref{eq:optimizer_state_gradient_from_hopf}.  Second, the memory
vectors are kept in the correct tangent spaces by exact sphere transport, rather
than by treating the parameter space as flat.

\subsection*{Hopf Riemannian Barzilai--Borwein}

The Riemannian Barzilai--Borwein method~\cite{barzilai1988two} is a spectral-gradient method on the
same state sphere.  It has no multi-vector memory and therefore provides a
simpler third cost-and-gradient optimizer.

After a step from \(\psi_{k-1}\) to \(\psi_k\), define
\begin{equation}
s_{k-1}
:=
\mathcal{T}(\eta_{k-1}p_{k-1}),
\qquad
y_{k-1}
:=
G_k-
\mathcal{T}(G_{k-1}),
\end{equation}
where \(\mathcal{T}\) denotes exact transport along the accepted geodesic.  The
two standard spectral choices are
\begin{equation}
\alpha_k^{\rm BB1}
=
\frac{\langle s_{k-1},s_{k-1}\rangle_\star}
{\langle s_{k-1},y_{k-1}\rangle_\star},
\qquad
\alpha_k^{\rm BB2}
=
\frac{\langle s_{k-1},y_{k-1}\rangle_\star}
{\langle y_{k-1},y_{k-1}\rangle_\star}.
\label{eq:optimizer_bb_steps}
\end{equation}
The implementation may use BB1, BB2, or alternate between them, with clipping
\(\alpha_k\in[\alpha_{\min},\alpha_{\max}]\).  The update is
\begin{equation}
\psi_{k+1}
=
\operatorname{Exp}_{\psi_k}(-\alpha_k G_k),
\qquad
\boldsymbol{\theta}_{k+1}
=
\operatorname{Hopf}^{-1}(\psi_{k+1}).
\label{eq:optimizer_bb_update}
\end{equation}
For stability one may add a geodesic line-search safeguard, such as Armijo
backtracking or the same strong-Wolfe/bounded geodesic search used above, seeded
by the Barzilai--Borwein spectral step.  This safeguard uses only objective and,
when the curvature test is checked, gradient evaluations, and does not change
the quantum-information interface.

\subsection*{Summary of the optimizer family}

The optimizer family used here is therefore:
\begin{itemize}
\item \textbf{Hopf-EGT-CG}: exact sphere geodesic update, exact vector transport,
hybrid Riemannian conjugate-gradient memory, and strong-Wolfe/bounded geodesic
line search;

\item \textbf{Hopf Riemannian L-BFGS}: analytic Hopf metric lift, transported
limited-memory curvature pairs, tangent-space two-loop recursion, and geodesic
line search;

\item \textbf{Hopf Riemannian Barzilai--Borwein}: analytic Hopf metric lift,
transported spectral step-size estimation, and a geodesic line-search safeguard
seeded by the spectral step.
\end{itemize}
All three optimizers use the same quantum process: evaluate the cost and the
Hopf-coordinate gradient at selected Hopf states.  The remaining operations are
classical consequences of the Hopf chart.  Ritz-type and local-subspace updates
are intentionally excluded from this appendix because they require stronger
oracle information or inner subspace optimization beyond the cost-and-gradient
interface considered here.
\FloatBarrier
\section{Metrological objectives as chain-rule observables}
\label{app:metrology}

This appendix states a limited transfer principle: the Hopf gradient-access
framework used for VQE also applies to metrological costs whose local
differentials are expectation-value differentials of Hermitian observables. The
main text gives the estimator for the energy objective
\[
E(\boldsymbol{\theta})
=
\bra{\psi(\boldsymbol{\theta})}H\ket{\psi(\boldsymbol{\theta})},
\]
but the same calculus applies to smooth scalar objectives built from measured
probabilities, expectation values, or local phase derivatives.

We keep QFI and fixed-readout CFI separate. A QFI objective is an ultimate local
probe-state benchmark assuming an optimal measurement; a fixed-readout CFI
objective is the Fisher information of a specified measurement protocol. The
fixed-readout Ramsey construction below optimizes a probe for a fixed parity
readout at a calibrated operating point; noisy optimality, global phase
estimation, and measurement co-optimization are outside this example.

The probe is
\[
\ket{\psi(\boldsymbol{\theta})}
=
U_{\boldsymbol{\theta}}\ket{0}^{\otimes n},
\]
and the normalized coordinate tangent is the same as in
Definition~\ref{def:normalized_partial_general},
\[
\ket{e_i(\boldsymbol{\theta})}
=
\frac{\ket{\dot\psi_i(\boldsymbol{\theta})}}
{\sqrt{g_{i,i}(\boldsymbol{\theta})}}.
\]
No new tangent-state synthesis is needed; only the Hermitian observable in the
transition element changes.

\subsection*{A general chain-rule form}

Let a scalar cost
\[
\mathcal{C}(\boldsymbol{\theta})
=
f\!\left(
m_1(\boldsymbol{\theta}),\ldots,m_K(\boldsymbol{\theta})
\right)
\]
depend on finitely many expectation values
\[
m_\alpha(\boldsymbol{\theta})
=
\bra{\psi(\boldsymbol{\theta})}
O_\alpha
\ket{\psi(\boldsymbol{\theta})},
\qquad
O_\alpha=O_\alpha^\dagger .
\]
Define the classical chain-rule coefficients
\[
a_\alpha(\boldsymbol{\theta})
:=
\frac{\partial f}{\partial m_\alpha}
\bigg|_{m=m(\boldsymbol{\theta})}
\]
and the associated chain-rule observable
\begin{equation}
O_{\mathcal{C}}(\boldsymbol{\theta})
:=
\sum_{\alpha=1}^K
a_\alpha(\boldsymbol{\theta})\,O_\alpha .
\label{eq:met_chain_rule_observable}
\end{equation}
Then
\begin{equation}
\partial_{\theta_i}\mathcal{C}(\boldsymbol{\theta})
=
2\sqrt{g_{i,i}(\boldsymbol{\theta})}\,
\operatorname{Re}\!\left[
\bra{e_i(\boldsymbol{\theta})}
O_{\mathcal{C}}(\boldsymbol{\theta})
\ket{\psi(\boldsymbol{\theta})}
\right].
\label{eq:met_general_gradient}
\end{equation}
Thus the VQE formula \eqref{eq:hopf_grad_short} is recovered by taking
\(K=1\), \(O_1=H\), and \(f(m_1)=m_1\).

For each \(i\), define the same signed branch state as in the main text,
\[
\ket{\varphi_i^{(s)}(\boldsymbol{\theta})}
=
\frac{
\ket{\psi(\boldsymbol{\theta})}
+
s\ket{e_i(\boldsymbol{\theta})}
}{\sqrt{2}},
\qquad
s\in\{+1,-1\}.
\]
For any Hermitian observable \(O\), write
\[
B_{\chi}[O]
:=
\bra{\chi}O\ket{\chi}.
\]
Then
\begin{equation}
\operatorname{Re}\!\left[
\bra{e_i}O\ket{\psi}
\right]
=
s\left(
B_{\varphi_i^{(s)}}[O]
-
\frac{1}{2}
\left(
B_{\psi}[O]+B_{e_i}[O]
\right)
\right),
\label{eq:met_branch_identity_general}
\end{equation}
where
\[
B_{\psi}[O]
=
\bra{\psi}O\ket{\psi},
\qquad
B_{e_i}[O]
=
\bra{e_i}O
\ket{e_i}.
\]
Applying \eqref{eq:met_branch_identity_general} to
\(O=O_{\mathcal{C}}(\boldsymbol{\theta})\) gives
\begin{equation}
\partial_{\theta_i}\mathcal{C}(\boldsymbol{\theta})
=
2\sqrt{g_{i,i}(\boldsymbol{\theta})}\,
s\left(
B_{\varphi_i^{(s)}}[O_{\mathcal{C}}]
-
\frac{1}{2}
\left(
B_{\psi}[O_{\mathcal{C}}]
+
B_{e_i}[O_{\mathcal{C}}]
\right)
\right).
\label{eq:met_branch_gradient}
\end{equation}
In using this identity, \(O_{\mathcal{C}}(\boldsymbol{\theta})\) is built from
chain-rule coefficients evaluated at the current probe state and then held fixed
while branch-state transition elements are estimated. Thus
\(B_{\varphi_i^{(s)}}[O_{\mathcal{C}}]\) is a transition-moment quantity, not the
nonlinear cost \(\mathcal{C}(\varphi_i^{(s)})\). If the coefficients are
estimated from empirical data, the resulting gradient is generally a stochastic
approximation to \(\nabla\mathcal{C}\); nonlinear functions of noisy estimates
can introduce bias without debiasing. This matters especially for inverse-Fisher
objectives, whose prefactors can be sensitive to small probabilities or slopes.

\subsection*{Pure-state QFI objective}

Consider a single unknown parameter \(\phi\) encoded by a unitary
\[
U_\phi=\exp(-\iu\phi G),
\]
with Hermitian generator \(G\).  For a pure probe state, the quantum Fisher
information is
\begin{equation}
F_Q(\boldsymbol{\theta})
=
4\operatorname{Var}_{\boldsymbol{\theta}}(G)
=
4\left(
\langle G^2\rangle_{\boldsymbol{\theta}}
-
\langle G\rangle_{\boldsymbol{\theta}}^2
\right).
\label{eq:qfi_single_generator}
\end{equation}
Here and below
\[
\langle O\rangle_{\boldsymbol{\theta}}
:=
\bra{\psi(\boldsymbol{\theta})}O\ket{\psi(\boldsymbol{\theta})}.
\]
A natural minimized metrology cost is the regularized quantum
Cram\'er--Rao proxy
\begin{equation}
\mathcal{C}_{Q}(\boldsymbol{\theta})
=
\frac{1}{F_Q(\boldsymbol{\theta})+\eta},
\qquad
\eta>0 .
\label{eq:qfi_cost}
\end{equation}
This cost defines a local probe-state benchmark associated with the
optimal-measurement QFI. A fixed experimental readout realizes this benchmark
only when it saturates the QFI at the operating point. By contrast, the
fixed-readout Ramsey objective below is a classical Fisher-information objective
for a specified parity measurement.
Let
\[
\mu(\boldsymbol{\theta})
=
\langle G\rangle_{\boldsymbol{\theta}}.
\]
From \eqref{eq:qfi_single_generator},
\[
\partial_{\theta_i} F_Q
=
4\,\partial_{\theta_i}\langle G^2\rangle
-
8\mu\,\partial_{\theta_i}\langle G\rangle .
\]
Therefore
\begin{equation}
\partial_{\theta_i} F_Q
=
8\sqrt{g_{i,i}(\boldsymbol{\theta})}
\left(
T_i[G^2]
-
2\mu\,T_i[G]
\right),
\label{eq:qfi_grad_tangent}
\end{equation}
where
\[
T_i[O]
:=
\operatorname{Re}\!\left[
\bra{e_i(\boldsymbol{\theta})}
O
\ket{\psi(\boldsymbol{\theta})}
\right].
\]
The minimized-cost gradient is then
\begin{equation}
\partial_{\theta_i}\mathcal{C}_{Q}
=
-\frac{\partial_{\theta_i} F_Q}
{\left(F_Q+\eta\right)^2}.
\label{eq:qfi_cost_gradient}
\end{equation}
Equivalently, this is the general chain-rule formula
\eqref{eq:met_general_gradient} with the chain-rule observable
\begin{equation}
O_{\mathcal{C}_{Q}}
=
-\frac{
4G^2-8\mu G
}
{\left(F_Q+\eta\right)^2}.
\label{eq:qfi_chain_rule_observable}
\end{equation}
Thus the QFI objective requires the same Hopf tangent-state machinery as VQE,
with the Hamiltonian \(H\) replaced by the observables \(G\) and \(G^2\), or
by their linear combination \eqref{eq:qfi_chain_rule_observable}.

\subsection*{Fixed-readout Ramsey CFI objective}

We next record the exact chain-rule observable for a fixed-readout Ramsey CFI
objective. Let
\[
U_\phi=\exp(-\iu\phi G),
\qquad
A_{\mathrm e}=W^\dagger \Pi_{\mathrm e} W,
\]
and define the effective even-parity POVM element
\[
M_{\mathrm e}(\phi)
=
U_\phi^\dagger A_{\mathrm e} U_\phi .
\]
At a fixed operating point \(\phi_0\), define
\[
p(\boldsymbol{\theta})
=
\bra{\psi(\boldsymbol{\theta})}
M_{\mathrm e}(\phi_0)
\ket{\psi(\boldsymbol{\theta})},
\]
and
\[
q(\boldsymbol{\theta})
=
\left.
\frac{\partial}{\partial \phi}
\bra{\psi(\boldsymbol{\theta})}
M_{\mathrm e}(\phi)
\ket{\psi(\boldsymbol{\theta})}
\right|_{\phi=\phi_0}.
\]
Equivalently,
\[
q(\boldsymbol{\theta})
=
\bra{\psi(\boldsymbol{\theta})}
D_{\mathrm e}(\phi_0)
\ket{\psi(\boldsymbol{\theta})},
\]
where the derivative observable is
\begin{equation}
D_{\mathrm e}(\phi_0)
:=
\left.
\frac{\partial M_{\mathrm e}(\phi)}{\partial\phi}
\right|_{\phi=\phi_0}
=
\iu U_{\phi_0}^\dagger
\left[
G,A_{\mathrm e}
\right]
U_{\phi_0}.
\label{eq:ramsey_derivative_observable}
\end{equation}
At an interior operating point satisfying
\[
0<p(\boldsymbol{\theta})<1,
\]
the fixed-readout classical Fisher information is
\[
F_C(\boldsymbol{\theta})
=
\frac{q(\boldsymbol{\theta})^2}
{p(\boldsymbol{\theta})[1-p(\boldsymbol{\theta})]},
\]
and the minimized local inverse-CFI cost is
\[
\mathcal{C}_{\mathrm{Ram}}(\boldsymbol{\theta})
=
\frac{1}{F_C(\boldsymbol{\theta})+\eta},
\qquad
\eta>0.
\]
The derivatives below are taken on this interior domain. At the boundary points
\(p(\boldsymbol{\theta})\in\{0,1\}\), the quotient defining \(F_C\) is singular
and must instead be treated by a separate limiting prescription or by explicitly
regularizing the Bernoulli denominator \(p(1-p)\). The additive constant
\(\eta\) in the inverse cost does not by itself regularize this denominator.

For \(0<p(\boldsymbol{\theta})<1\), the chain-rule coefficients with respect to
\(p\) and \(q\) are
\begin{align}
a_p(\boldsymbol{\theta})
&:=
\frac{\partial \mathcal{C}_{\mathrm{Ram}}}{\partial p}
=
\frac{
q(\boldsymbol{\theta})^2
[1-2p(\boldsymbol{\theta})]
}{
[F_C(\boldsymbol{\theta})+\eta]^2
p(\boldsymbol{\theta})^2
[1-p(\boldsymbol{\theta})]^2
},
\nonumber\\
a_q(\boldsymbol{\theta})
&:=
\frac{\partial \mathcal{C}_{\mathrm{Ram}}}{\partial q}
=
-\frac{
2q(\boldsymbol{\theta})
}{
[F_C(\boldsymbol{\theta})+\eta]^2
p(\boldsymbol{\theta})
[1-p(\boldsymbol{\theta})]
}.
\label{eq:ramsey_chain_coefficients}
\end{align}
Thus the exact fixed-readout Ramsey chain-rule observable is
\begin{equation}
O_{\mathcal{C}_{\mathrm{Ram}}}(\boldsymbol{\theta})
=
a_p(\boldsymbol{\theta})M_{\mathrm e}(\phi_0)
+
a_q(\boldsymbol{\theta})D_{\mathrm e}(\phi_0),
\label{eq:ramsey_chain_rule_observable}
\end{equation}
and the Hopf gradient is
\[
\partial_{\theta_i}\mathcal{C}_{\mathrm{Ram}}
=
2\sqrt{g_{i,i}(\boldsymbol{\theta})}\,
\operatorname{Re}\!\left[
\bra{e_i(\boldsymbol{\theta})}
O_{\mathcal{C}_{\mathrm{Ram}}}(\boldsymbol{\theta})
\ket{\psi(\boldsymbol{\theta})}
\right].
\]

In a centered finite-difference implementation, the derivative observable
\(D_{\mathrm e}(\phi_0)\) is replaced by
\begin{equation}
D_{\mathrm e,\delta}
=
\frac{
M_{\mathrm e}(\phi_0+\delta)
-
M_{\mathrm e}(\phi_0-\delta)
}{2\delta}.
\label{eq:ramsey_finite_difference_observable}
\end{equation}
The corresponding finite-difference chain-rule observable is
\[
O_{\mathcal{C}_{\mathrm{Ram}},\delta}(\boldsymbol{\theta})
=
a_p(\boldsymbol{\theta})M_{\mathrm e}(\phi_0)
+
a_q(\boldsymbol{\theta})D_{\mathrm e,\delta}.
\]

\subsection*{Compiled-setting scaling}

The metrology transfer does not alter the Hopf state preparation or gradient
access circuits.  The indexed tangent-state preparations
\(\ket{\Phi_d(\boldsymbol{\theta})}\), the indexed branch states
\(\ket{\Omega_d(\boldsymbol{\theta})}\), and, for the complex ansatz, the
corresponding phase-block states
\(\ket{\Phi_{\mathrm{ph}}(\boldsymbol{\theta})}\) and
\(\ket{\Omega_{\mathrm{ph}}(\boldsymbol{\theta})}\), are exactly the same
objects as in Section~\ref{sec:gradient_access}.  Only the observable or
readout configuration used to evaluate the transition moments changes:
\[
H
\quad\longrightarrow\quad
O_{\mathcal C}(\boldsymbol{\theta}).
\]

Let \(K_{\mathrm{read}}\) denote the number of logical readout configurations
or moment types required by the chosen metrological cost. For the pure-QFI
formula above, \(K_{\mathrm{read}}\) includes the moments needed for \(G\) and
\(G^2\), together with any additional grouping cost from their observable
decompositions. For a centered finite-difference Ramsey implementation, the
logical phase settings are
\[
\phi_0,
\qquad
\phi_0+\delta,
\qquad
\phi_0-\delta,
\]
so \(K_{\mathrm{read}}=3\) before any further hardware-specific repetition or
sampling allocation.

Thus the Hopf compiled-setting count appears per logical readout
configuration. For the real ansatz,
\[
N_{\mathrm{set,met}}^{\mathbb{R}}
=
K_{\mathrm{read}}(1+2n),
\]
and for the complex ansatz,
\[
N_{\mathrm{set,met}}^{\mathbb{C}}
=
K_{\mathrm{read}}\bigl[1+2(n+1)\bigr].
\]
Equivalently, if \(C_{\mathrm{cfg}}^{\mathrm{read}}\) denotes the compiled
execution cost of one Hopf gradient-access setting for one logical readout
configuration, then
\begin{equation}
\boxed{
C_{\mathrm{cfg}}(\nabla \mathcal{C}_{\mathrm{met}})
=
O\!\left(K_{\mathrm{read}}n\right)
C_{\mathrm{cfg}}^{\mathrm{read}}
=
O\!\left(K_{\mathrm{read}}\log N\right)
C_{\mathrm{cfg}}^{\mathrm{read}}.
}
\label{eq:met_cfg_scaling}
\end{equation}
When \(K_{\mathrm{read}}\) is independent of \(N\), this preserves the
\(O(\log N)\) compiled-setting scaling of the VQE case. If the chosen
metrological observable or readout scheme requires a number of logical
settings that grows with problem size, that cost enters multiplicatively.

Equation~\eqref{eq:met_cfg_scaling} gives the compiled-setting and
hardware-reconfiguration scaling for metrological objectives. The corresponding
sampling budget is componentwise and is determined by the chosen metrological
readout: by the measurements needed for \(G\) and \(G^2\) in the pure-QFI
formulation, or by the Ramsey probability and local phase-slope measurements in
the fixed-readout formulation above.

\FloatBarrier
\section{Proofs of the main results}
\label{app:proof}

\subsection*{Closed-form product formula for Hopf amplitudes}

\begin{lemma}[Closed-form product formula for Hopf amplitudes]
\label{lem:hopf_product_formula}
Fix $n\in\mathbb{N}$, set $N=2^n$, and let $i\in\{0,\ldots,N-1\}$ have binary expansion
\[
i=(q_n q_{n-1}\cdots q_1)_2,\qquad q_k\in\{0,1\},
\]
using the qubit order $(q_n,\ldots,q_1)$ (MSB to LSB).
Define the sequence of internal-node indices along the root-to-leaf construction by
\[
s_t(i)
:=2^{t-1}+\big(q_n q_{n-1}\cdots q_{n-t+2}\big)_2,
\qquad t=1,\ldots,n,
\]
where for $t=1$ the empty bit string is interpreted as $0$, so that $s_1(i)=1$.
Equivalently, $s_1(i)=1$ and
\[
s_{t+1}(i)=2\,s_t(i)+q_{n-t+1},
\qquad t=1,\ldots,n-1.
\]

(A) Real Hopf coordinates.
The amplitude produced by Definition~\ref{def:hopf_coordinates}(A) satisfies
\[
x_i(\boldsymbol{\theta})
=\prod_{t=1}^{n}
\begin{cases}
\cos\theta_{s_t(i)}, & q_{n-t+1}=0,\\[2pt]
\sin\theta_{s_t(i)}, & q_{n-t+1}=1.
\end{cases}
\]
In particular,
\[
x_i(\boldsymbol{\theta})^2
=\prod_{t=1}^{n}
\begin{cases}
\cos^2\theta_{s_t(i)}, & q_{n-t+1}=0,\\[2pt]
\sin^2\theta_{s_t(i)}, & q_{n-t+1}=1.
\end{cases}
\]

(B) Complex Hopf coordinates.
Let $r_i(\boldsymbol{\theta})\ge 0$ denote the magnitude produced from the internal-node
angles as in (A), using $\theta_1,\ldots,\theta_{N-1}$.
Then Definition~\ref{def:hopf_coordinates}(B) yields
\[
x_i(\boldsymbol{\theta})
=
r_i(\boldsymbol{\theta})\,e^{\iu\theta_{N+i}},
\qquad
|x_i(\boldsymbol{\theta})|^2=r_i(\boldsymbol{\theta})^2.
\]
\end{lemma}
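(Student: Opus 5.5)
The plan is an induction on the depth of the recursion in Definition~\ref{def:hopf_coordinates}, read as a loop invariant. For fixed $i=(q_n\cdots q_1)_2$, let $x_i^{(t)}$ and $j^{(t)}$ denote the values of $x_i$ and $j$ after the first $t$ iterations ($t=0,\ldots,n$), with $x_i^{(0)}=1$ and $j^{(0)}=1$.

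The invariant to prove is that, for $0\le t\le n-1$,
\[
j^{(t)}=s_{t+1}(i),
\qquad
x_i^{(t)}=\prod_{u=1}^{t}\bigl(\cos\theta_{s_u(i)}\bigr)^{1-q_{n-u+1}}\bigl(\sin\theta_{s_u(i)}\bigr)^{q_{n-u+1}}.
\]
The base case $t=0$ holds because $s_1(i)=1$ and the empty product is $1$. For the inductive step, iteration $t+1$ uses the bit $q_k$ with $k=n-t$, that is, $q_{n-(t+1)+1}$. It multiplies $x_i$ by $\cos\theta_{j^{(t)}}$ or $\sin\theta_{j^{(t)}}$, according as this bit is $0$ or $1$. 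Since $j^{(t)}=s_{t+1}(i)$, this appends exactly the $(t+1)$-st factor of the product. The update $j\leftarrow 2j+q_{n-t}$ then gives $j^{(t+1)}=2s_{t+1}(i)+q_{n-t}=s_{t+2}(i)$, by the stated recursion. At $t=n$ the product has all $n$ factors, which is formula (A).

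The closed form $s_t(i)=2^{t-1}+(q_n\cdots q_{n-t+2})_2$ should be checked against the recursion. The key fact is that $2\bigl(2^{t-1}+P\bigr)+b=2^{t}+(2P+b)$, and $2P+b$ is exactly the prefix $P$ with the bit $b$ appended. So the two descriptions of $s_t(i)$ agree, again by induction on $t$.

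The squared identity follows by squaring the product factorwise. Since the real ranges allow final angles in $[0,2\pi)$, signs can appear, but squaring removes them.

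For part (B), the magnitude loop is identical to (A), with all angles in $[0,\pi/2]$. Every factor is then nonnegative, so $r_i\ge0$. The final assignment multiplies by $e^{\iu\theta_{N+i}}$, and the modulus of this factor is $1$, which gives $|x_i|^2=r_i^2$.

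There is no real obstacle here. The only delicate point is the index bookkeeping: iteration $t$ must consume the bit $q_{n-t+1}$ and use the node index current \emph{before} the update. I would state the invariant in terms of the pre-update $j$ to avoid an off-by-one error.
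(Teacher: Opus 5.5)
Your proof is correct and takes the same route as the paper: unroll the recursion of Definition~\ref{def:hopf_coordinates} and note that step $t$ uses node index $s_t(i)$ and consumes bit $q_{n-t+1}$. Your loop invariant just makes the paper's ``by construction'' step explicit, and you also check that the closed-form and recursive descriptions of $s_t(i)$ agree, which the paper takes for granted.
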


\begin{proof}
Fix $i$ and run the update rule in Definition~\ref{def:hopf_coordinates}.
At the start, $x_i=1$ and $j=1$.
At step $t$ (corresponding to $k=n-t+1$), the algorithm multiplies $x_i$ by
$\cos\theta_j$ if $q_{n-t+1}=0$ and by $\sin\theta_j$ if $q_{n-t+1}=1$,
and then updates $j\leftarrow 2j$ or $2j+1$ accordingly.
By construction, the value of $j$ used at step $t$ is exactly $s_t(i)$.
Therefore after $n$ steps,
$x_i$ equals the stated product of $n$ cosine/sine factors, proving (A),
and squaring gives the formula for $x_i^2$.

In the complex case, the same internal-node recursion produces the nonnegative magnitude
$r_i(\boldsymbol{\theta})$, after which Definition~\ref{def:hopf_coordinates}(B) multiplies
by the leaf phase $e^{\iu\theta_{N+i}}$, yielding (B).
\end{proof}

\subsection*{Proof of Theorem~\ref{thm:hopf_metric_diagonal}}

\begin{proof}

We write
\(
\ket{\psi(\boldsymbol{\theta})}=\sum_{k=0}^{N-1}x_k(\boldsymbol{\theta})\ket{b_k}.
\)
By Lemma~\ref{lem:hopf_product_formula}, each amplitude $x_k$ is a product of
$\cos\theta_j$ and $\sin\theta_j$ factors along a unique root-to-leaf path,
and in the complex case an additional leaf phase factor.

(A) Real case.
Here $x_k\in\mathbb{R}$ and the metric is
\(
g^{\mathbb{R}}_{i\ell}=\braket{\dot\psi_i|\dot\psi_\ell}
=\sum_{k}\partial_{\theta_i} x_k\,\partial_\ell x_k
\)
(cf.\ Definition~\ref{def:hopf_metric_convention}(A)).

Step 1: Support structure of $\ket{\dot\psi_i}$.
Fix an internal node index $i\in\{1,\ldots,N-1\}$ and let $\mathcal{L}(i)$ be
the set of leaf indices in the subtree rooted at $i$.
If $k\notin\mathcal{L}(i)$ then the product formula for $x_k$ contains no
factor $\cos\theta_i$ or $\sin\theta_i$, hence $\partial_{\theta_i} x_k=0$.
Therefore $\ket{\dot\psi_i}$ is supported only on $\{\ket{b_k}:k\in\mathcal{L}(i)\}$.

Moreover, writing $\mathcal{L}(i)=\mathcal{L}_L(i)\,\dot\cup\,\mathcal{L}_R(i)$
as the leaves in the left and right child subtrees of $i$, Lemma~\ref{lem:hopf_product_formula}
implies that for $k\in\mathcal{L}_L(i)$ the amplitude has the form
\[
x_k(\boldsymbol{\theta})=\alpha_k(\boldsymbol{\theta})\cos\theta_i,
\]
where $\alpha_k$ does not depend on $\theta_i$, and for $k\in\mathcal{L}_R(i)$,
\[
x_k(\boldsymbol{\theta})=\beta_k(\boldsymbol{\theta})\sin\theta_i,
\]
where $\beta_k$ does not depend on $\theta_i$.
Thus
\begin{equation}
\partial_{\theta_i} x_k=
\begin{cases}
-\alpha_k\sin\theta_i=-\tan\theta_i\,x_k, & k\in\mathcal{L}_L(i),\\[2pt]
\ \ \beta_k\cos\theta_i=\ \cot\theta_i\,x_k, & k\in\mathcal{L}_R(i),\\[2pt]
0, & k\notin\mathcal{L}(i).
\end{cases}
\label{eq:real_support_and_scaling}
\end{equation}

Step 2: Diagonal entries.
Define the incoming mass at node $i$ by
\[
S_i(\boldsymbol{\theta}) := \sum_{k\in\mathcal{L}(i)} x_k(\boldsymbol{\theta})^2.
\]
Using \eqref{eq:real_support_and_scaling},
\begin{align*}
g^{\mathbb{R}}_{ii}
&=\sum_{k=0}^{N-1}(\partial_{\theta_i} x_k)^2\\
&=\tan^2\theta_i\sum_{k\in\mathcal{L}_L(i)} x_k^2
\;+\;\cot^2\theta_i\sum_{k\in\mathcal{L}_R(i)} x_k^2.
\end{align*}
But from the factorization $x_k=\alpha_k\cos\theta_i$ on $\mathcal{L}_L(i)$ and
$x_k=\beta_k\sin\theta_i$ on $\mathcal{L}_R(i)$ we have
\[
\sum_{k\in\mathcal{L}_L(i)}x_k^2=\cos^2\theta_i\sum_{k\in\mathcal{L}_L(i)}\alpha_k^2,
\qquad
\sum_{k\in\mathcal{L}_R(i)}x_k^2=\sin^2\theta_i\sum_{k\in\mathcal{L}_R(i)}\beta_k^2.
\]
Furthermore, by construction the descendant angles below $i$ parameterize unit vectors
on each child subtree, so both sums of squares of $\alpha_k$ and $\beta_k$ equal the same incoming mass:
\[
\sum_{k\in\mathcal{L}_L(i)}\alpha_k^2=\sum_{k\in\mathcal{L}_R(i)}\beta_k^2=S_i(\boldsymbol{\theta}).
\]
Substituting gives
\[
g^{\mathbb{R}}_{ii}
=\tan^2\theta_i\cdot \cos^2\theta_i\,S_i+\cot^2\theta_i\cdot \sin^2\theta_i\,S_i
=\sin^2\theta_i\,S_i+\cos^2\theta_i\,S_i
=S_i(\boldsymbol{\theta}).
\]
Finally, $S_i$ depends only on the ancestor split probabilities above $i$.
Let $d_i=\lfloor\log_2 i\rfloor$ be the depth of node $i$ (root at depth $0$),
and let $s_1,\ldots,s_{d_i}$ be the ancestor indices from the root $s_1=1$ to the
parent of $i$, with corresponding left/right bits $q_1,\ldots,q_{d_i}$ (MSB first)
as in the theorem statement. Each ancestor contributes a factor $\cos^2\theta_{s_m}$
if the path goes left and $\sin^2\theta_{s_m}$ if it goes right, hence
\[
g^{\mathbb{R}}_{ii}(\boldsymbol{\theta})=S_i(\boldsymbol{\theta})
=\prod_{m=1}^{d_i}
\begin{cases}
\cos^2\theta_{s_m}, & q_m=0,\\
\sin^2\theta_{s_m}, & q_m=1,
\end{cases}
\]
which is \eqref{eq:g_real_diag}. For $i=1$ the product is empty, so $g^{\mathbb{R}}_{11}=1$.

Step 3: Off-diagonal entries (diagonality).
Let $i\neq \ell$ be internal nodes. If $\mathcal{L}(i)\cap\mathcal{L}(\ell)=\varnothing$,
then $\ket{\dot\psi_i}$ and $\ket{\dot\psi_\ell}$ have disjoint supports, hence
$g^{\mathbb{R}}_{i\ell}=0$.

Otherwise, one of the nodes is an ancestor of the other; assume $\ell$ lies in the
subtree of $i$. Then on the entire leaf set $\mathcal{L}(\ell)$, the scaling factor in
\eqref{eq:real_support_and_scaling} is constant (all leaves of $\mathcal{L}(\ell)$ lie
entirely in either $\mathcal{L}_L(i)$ or $\mathcal{L}_R(i)$). Therefore there exists a real
constant $c$ such that for all $k\in\mathcal{L}(\ell)$,
\(
\partial_{\theta_i} x_k=c\,x_k
\)
and $\partial_{\theta_i} x_k=0$ for $k\notin\mathcal{L}(i)\supseteq\mathcal{L}(\ell)$.
Hence
\[
g^{\mathbb{R}}_{i\ell}=\sum_{k}\partial_{\theta_i} x_k\,\partial_\ell x_k
=c\sum_{k\in\mathcal{L}(\ell)} x_k\,\partial_\ell x_k.
\]
But the vector of amplitudes restricted to $\mathcal{L}(\ell)$ has fixed squared norm
$S_\ell(\boldsymbol{\theta})$, and $\partial_\ell$ varies only the internal angle at node $\ell$,
so differentiating $S_\ell=\sum_{k\in\mathcal{L}(\ell)}x_k^2$ with respect to $\theta_\ell$ yields
\(
0=\partial_\ell S_\ell=2\sum_{k\in\mathcal{L}(\ell)} x_k\,\partial_\ell x_k.
\)
Thus $\sum_{k\in\mathcal{L}(\ell)} x_k\,\partial_\ell x_k=0$ and therefore
$g^{\mathbb{R}}_{i\ell}=0$. This proves that $\boldsymbol{g}^{\mathbb{R}}$ is diagonal.

(B) Complex case.
Write $x_k(\boldsymbol{\theta})=r_k(\boldsymbol{\theta})e^{\iu\phi_k}$ with
$\phi_k=\theta_{N+k}$ and $r_k\ge 0$ determined by the internal-node angles.
By Lemma~\ref{lem:hopf_product_formula}, the magnitudes $r_k$ obey the same
cosine--sine tree factorization as in the real case.

For $1\le i,\ell\le N-1$ (magnitude parameters), we have
\(
\partial_{\theta_i} x_k=(\partial_{\theta_i} r_k)e^{\iu\phi_k}
\)
and therefore
\[
\braket{\dot\psi_i|\dot\psi_\ell}
=\sum_k \overline{\partial_{\theta_i} x_k}\,\partial_\ell x_k
=\sum_k (\partial_{\theta_i} r_k)(\partial_\ell r_k),
\]
which is real. The same support/cancellation argument from (A) applies to $\{r_k\}$,
so the magnitude block is diagonal and
\(
g^{\mathbb{C}}_{ii}=g^{\mathbb{R}}_{ii}
\)
for $1\le i\le N-1$.

For phase parameters $i=N+k$,
\(
\partial_{\theta_{N+k}}x_m=\iu\,x_m\,\delta_{mk},
\)
so
\[
g^{\mathbb{C}}_{N+k,\,N+k}
=\mathrm{Re}\,\braket{\dot\psi_{N+k}|\dot\psi_{N+k}}
=\mathrm{Re}\,(|\iu x_k|^2)
=|x_k|^2,
\]
and for $k\neq m$ the phase derivatives are orthogonal, hence the phase block is diagonal.

Finally, for a magnitude parameter $i\le N-1$ and a phase parameter $N+k$,
\[
\braket{\dot\psi_i|\dot\psi_{N+k}}
=\overline{\partial_{\theta_i} x_k}\,(\iu x_k)
=\iu\,(\partial_{\theta_i} r_k)\,r_k,
\]
which is purely imaginary; taking the real part (Definition~\ref{def:hopf_metric_convention}(B))
gives $g^{\mathbb{C}}_{i,\,N+k}=0$. Hence the full complex metric is diagonal and
\eqref{eq:g_complex_diag} holds.
\end{proof}

\subsection*{Proof of Theorem~\ref{thm:gradient_estimation_short}}

\begin{proof}
Let $\ket{\psi(\boldsymbol{\theta})}=U_{\boldsymbol{\theta}}\ket{0}^{\otimes n}$ be differentiable
and define
\(
E(\boldsymbol{\theta})=\bra{\psi(\boldsymbol{\theta})}H\ket{\psi(\boldsymbol{\theta})}
\)
with $H=H^\dagger$.
Using the product rule,
\begin{equation}
\partial_{\theta_i} E(\boldsymbol{\theta})
=
\partial_{\theta_i}\big(\bra{\psi(\boldsymbol{\theta})}H\ket{\psi(\boldsymbol{\theta})}\big)\\
=
\bra{\dot\psi_i(\boldsymbol{\theta})}H\ket{\psi(\boldsymbol{\theta})}
+
\bra{\psi(\boldsymbol{\theta})}H\ket{\dot\psi_i(\boldsymbol{\theta})}.
\end{equation}
Since $H$ is Hermitian,
\(
\bra{\psi(\boldsymbol{\theta})}H\ket{\dot\psi_i(\boldsymbol{\theta})}
=
\overline{\bra{\dot\psi_i(\boldsymbol{\theta})}H\ket{\psi(\boldsymbol{\theta})}},
\)
hence
\[
\partial_{\theta_i} E(\boldsymbol{\theta})
=
2\,\mathrm{Re}\!\left[\bra{\dot\psi_i(\boldsymbol{\theta})}H\ket{\psi(\boldsymbol{\theta})}\right].
\]
Whenever $g_{i,i}(\boldsymbol{\theta})>0$, Definition~\ref{def:normalized_partial_general} gives
\[
\ket{e_i(\boldsymbol{\theta})}
=
\frac{\ket{\dot\psi_i(\boldsymbol{\theta})}}{\sqrt{g_{i,i}(\boldsymbol{\theta})}},
\qquad
\ket{\dot\psi_i(\boldsymbol{\theta})}
=
\sqrt{g_{i,i}(\boldsymbol{\theta})}\,\ket{e_i(\boldsymbol{\theta})},
\]
so substituting yields
\[
\partial_{\theta_i} E(\boldsymbol{\theta})
=
2\sqrt{g_{i,i}(\boldsymbol{\theta})}\;
\mathrm{Re}\!\left[\bra{e_i(\boldsymbol{\theta})}\,H\,\ket{\psi(\boldsymbol{\theta})}\right],
\]
which is \eqref{eq:hopf_grad_short}.
\end{proof}

\subsection*{Proof of Theorem~\ref{thm:synthesis_normalized_direction_hopf}}

\begin{proof}
We treat magnitude parameters and phase parameters separately.
Throughout, write
\(
\ket{\psi(\boldsymbol{\theta})}=\sum_{k=0}^{N-1}x_k(\boldsymbol{\theta})\ket{b_k}.
\)

In this proof, the vectors \(\boldsymbol{\theta}^{(i)}\) are interpreted as gate-parameter assignments on the fixed Hopf circuit skeleton. They are not required to satisfy the canonical coordinate-range restrictions of Definition~\ref{def:hopf_coordinates}; only the resulting circuit action matters.

(A) Magnitude parameters ($1\le i\le N-1$).
Fix an internal node index $i$ and let $\mathcal{L}(i)$ be the set of leaves in the subtree
rooted at $i$. Let $\mathcal{L}_L(i)$ and $\mathcal{L}_R(i)$ denote the leaves in the left and
right child subtrees of $i$.
By Lemma~\ref{lem:hopf_product_formula}, every amplitude $x_k$ factorizes as a product of
trigonometric terms along the unique root-to-leaf path.

Step 1: Isolating the subtree rooted at $i$.
Let $A_L^{(i)}$ and $A_R^{(i)}$ be the ancestor sets defined in the theorem statement.
Setting
\(
\theta^{(i)}_a=0
\)
for $a\in A_L^{(i)}$ forces the corresponding ancestor factor to be $\cos\theta^{(i)}_a=\cos 0=1$
and the sibling-branch factor to be $0$ (since $\sin 0=0$), thereby eliminating amplitudes
supported on the ``wrong'' sibling subtrees.
Similarly, setting
\(
\theta^{(i)}_a=\pi/2
\)
for $a\in A_R^{(i)}$ forces $\sin\theta^{(i)}_a=\sin(\pi/2)=1$ and $\cos(\pi/2)=0$, again
eliminating all sibling subtrees not on the path to $i$.
Consequently,
\[
x_k\!\big(\boldsymbol{\theta}^{(i)}\big)=0
\qquad\text{for all }k\notin\mathcal{L}(i),
\]
so the prepared state has support only on $\{\ket{b_k}:k\in\mathcal{L}(i)\}$.

Moreover, for $k\in\mathcal{L}(i)$ the product of all ancestor factors above $i$ equals $1$
under $\boldsymbol{\theta}^{(i)}$, so the remaining dependence of
$x_k(\boldsymbol{\theta}^{(i)})$ comes only from the angle at $i$ and from strict descendants of
$i$. Since the construction keeps all descendant angles unchanged
(\eqref{eq:desc_keep_mag}), the relative amplitude pattern inside the subtree is preserved.

Step 2: A two-branch decomposition at node $i$.
There exist vectors supported on the two child subtrees,
\[
\ket{\psi_L}:=\sum_{k\in\mathcal{L}_L(i)} \alpha_k(\boldsymbol{\theta})\ket{b_k},
\qquad
\ket{\psi_R}:=\sum_{k\in\mathcal{L}_R(i)} \beta_k(\boldsymbol{\theta})\ket{b_k},
\]
such that $\|\psi_L\|=\|\psi_R\|=1$ and, on the subtree $\mathcal{L}(i)$,
\begin{equation}
\sum_{k\in\mathcal{L}(i)} x_k(\boldsymbol{\theta})\ket{b_k}
=
\gamma_i(\boldsymbol{\theta})
\Big(\cos\theta_i\,\ket{\psi_L}+\sin\theta_i\,\ket{\psi_R}\Big),
\label{eq:subtree_decomp_original}
\end{equation}
where the nonnegative scalar $\gamma_i(\boldsymbol{\theta})$ is the common product of ancestor
trigonometric factors above node $i$ (hence independent of $\theta_i$).
Equivalently,
\(
\gamma_i(\boldsymbol{\theta})^2=\sum_{k\in\mathcal{L}(i)}|x_k(\boldsymbol{\theta})|^2.
\)

Differentiating \eqref{eq:subtree_decomp_original} with respect to $\theta_i$ yields
\begin{equation}
\ket{\dot\psi_i(\boldsymbol{\theta})}
=
\gamma_i(\boldsymbol{\theta})
\Big(-\sin\theta_i\,\ket{\psi_L}+\cos\theta_i\,\ket{\psi_R}\Big),
\label{eq:subtree_derivative}
\end{equation}
and therefore
\[
\|\ket{\dot\psi_i(\boldsymbol{\theta})}\|^2=\gamma_i(\boldsymbol{\theta})^2.
\]
In the real case this equals $g^{\mathbb{R}}_{i,i}(\boldsymbol{\theta})$, and in the complex
case (magnitude block) it equals $g^{\mathbb{C}}_{i,i}(\boldsymbol{\theta})$
(consistent with Theorem~\ref{thm:hopf_metric_diagonal}).
Hence, whenever $g_{i,i}(\boldsymbol{\theta})>0$, we have
\begin{equation}
\ket{e_i(\boldsymbol{\theta})}
=
\frac{\ket{\dot\psi_i(\boldsymbol{\theta})}}{\sqrt{g_{i,i}(\boldsymbol{\theta})}}
=
-\sin\theta_i\,\ket{\psi_L}+\cos\theta_i\,\ket{\psi_R}.
\label{eq:normalized_derivative_subtree}
\end{equation}

Step 3: Real ansatz synthesis.
By Step~1, under $\boldsymbol{\theta}^{(i)}$ all amplitude weight is clamped into the subtree
$\mathcal{L}(i)$ and the ancestor prefactor becomes $1$.
By \eqref{eq:desc_keep_mag}, the descendant angles below $i$ are unchanged, so the same
$\ket{\psi_L}$ and $\ket{\psi_R}$ appear.
Finally, by \eqref{eq:ti_shift_mag} we shift $\theta_i\mapsto\theta_i+\pi/2$, and using
\(
\cos(\theta_i+\pi/2)=-\sin\theta_i
\)
and
\(
\sin(\theta_i+\pi/2)=\cos\theta_i,
\)
the resulting subtree superposition is exactly
\(
-\sin\theta_i\,\ket{\psi_L}+\cos\theta_i\,\ket{\psi_R}.
\)
Therefore,
\[
\mathrm{HopfReal}\!\big(\boldsymbol{\theta}^{(i)}\big)\ket{0}^{\otimes n}
=
\ket{e_i(\boldsymbol{\theta})}.
\]

Step 4: Complex ansatz synthesis (magnitude block).
For $\mathrm{HopfComplex}$, the magnitude recursion is identical to the real case, and the
phase assignment \eqref{eq:phase_keep_on_subtree} keeps the original leaf phases on
$\mathcal{L}(i)$ while setting all other phases to $0$.
Since Step~1 makes $x_k(\boldsymbol{\theta}^{(i)})=0$ for $k\notin\mathcal{L}(i)$, those
outside phases are irrelevant.
Inside the subtree, the same trigonometric shift at node $i$ produces the same linear
combination \eqref{eq:normalized_derivative_subtree}, now with the correct inherited leaf
phases. Hence,
\[
\mathrm{HopfComplex}\!\big(\boldsymbol{\theta}^{(i)}\big)\ket{0}^{\otimes n}
=
\ket{e_i(\boldsymbol{\theta})}.
\]

Finally, if node $i$ has depth $d$ (root at depth $0$), its subtree contains exactly
$N/2^d$ leaves. The resulting state is supported only on the corresponding
computational basis states, so its support size is at most $N/2^d$.
Equality holds whenever its amplitude on every leaf in $\mathcal{L}(i)$ is nonzero.

(B) Phase parameters ($i=N+\ell$).
Fix $\ell\in\{0,\ldots,N-1\}$ and write
\(
x_\ell(\boldsymbol{\theta})=r_\ell(\boldsymbol{\theta})e^{\iu\theta_{N+\ell}}
\)
with $r_\ell\ge 0$.
Then
\[
\ket{\dot\psi_{N+\ell}(\boldsymbol{\theta})}
=
\iu\,x_\ell(\boldsymbol{\theta})\,\ket{b_\ell},
\qquad
\left\|\ket{\dot\psi_{N+\ell}(\boldsymbol{\theta})}\right\|^2
=
|x_\ell(\boldsymbol{\theta})|^2,
\]
so
\begin{equation}
\ket{e_{N+\ell}(\boldsymbol{\theta})}
=
\frac{\iu\,x_\ell(\boldsymbol{\theta})}{|x_\ell(\boldsymbol{\theta})|}\ket{b_\ell}.
\label{eq:phase_normalized_target}
\end{equation}

Now construct $\boldsymbol{\theta}^{(i)}$ as in \eqref{eq:phase_case_clamp}--\eqref{eq:phase_case_shift}.
The ancestor clamping forces the magnitude to be supported on the single leaf $\ell$ with
$r_\ell(\boldsymbol{\theta}^{(i)})=1$ and $r_m(\boldsymbol{\theta}^{(i)})=0$ for $m\neq \ell$.
The phase shift sets
\(
\theta^{(i)}_{N+\ell}=\theta_{N+\ell}+\pi/2,
\)
so the prepared state is
\[
\mathrm{HopfComplex}\!\big(\boldsymbol{\theta}^{(i)}\big)\ket{0}^{\otimes n}
=
\iu\,e^{\iu\theta_{N+\ell}}\ket{b_\ell}.
\]
Since
\[
x_\ell(\boldsymbol{\theta})
=
r_\ell(\boldsymbol{\theta})e^{\iu\theta_{N+\ell}},
\qquad
|x_\ell(\boldsymbol{\theta})|=r_\ell(\boldsymbol{\theta}),
\]
we have
\[
\frac{\iu\,x_\ell(\boldsymbol{\theta})}{|x_\ell(\boldsymbol{\theta})|}
=
\iu\,e^{\iu\theta_{N+\ell}}.
\]
Hence
\[
\mathrm{HopfComplex}\!\big(\boldsymbol{\theta}^{(i)}\big)\ket{0}^{\otimes n}
=
\iu\,e^{\iu\theta_{N+\ell}}\ket{b_\ell}
=
\frac{\iu\,x_\ell(\boldsymbol{\theta})}{|x_\ell(\boldsymbol{\theta})|}\ket{b_\ell}
=
\ket{e_{N+\ell}(\boldsymbol{\theta})}.
\]
This completes the proof.
\end{proof}

\subsection*{Proof of Theorem~\ref{thm:hopf_gradient_setting_cost}}

\begin{proof}
We prove the per-setting compiled execution bound separately for
magnitude-layer settings and phase-layer settings.

Throughout, the forward Hopf ansatz uses a fixed binary-tree gate skeleton with
\(N-1\) single-qubit rotation locations, one for each internal node of the Hopf tree
(Definition~\ref{def:hopf_ansatz}).
At each location, the original gate already carries some data-register controls;
their number is at most \(n-1\).

(A) Magnitude-layer settings: \(\ket{\Phi_d}\) and \(\ket{\Omega_d}\).
Fix a depth \(d\in\{0,1,\ldots,n-1\}\).
The reduced index register \(I\) contains \(d\) qubits and labels the nodes in the layer
\(
\mathcal{L}_d=\{2^d,\ldots,2^{d+1}-1\}
\)
through the offset \(r=i-2^d\).

For a fixed magnitude parameter \(i\in\mathcal{L}_d\), Theorem~\ref{thm:synthesis_normalized_direction_hopf}(A)
shows that the tangent-state preparation is obtained from the same Hopf skeleton by replacing each magnitude angle
by one of a constant-size set of values:
\[
0,\qquad \tfrac{\pi}{2},\qquad \theta_j,\qquad \theta_j+\tfrac{\pi}{2}.
\]
When the preparation is made coherent over all \(i\in\mathcal{L}_d\), the
nonzero assignment at a gate of depth \(h\) is selected by a single
mixed-polarity prefix condition on the layer index: selection of its right
subtree for \(h<d\), equality with the gate for \(h=d\), or equality with its
depth-\(d\) ancestor for \(h>d\). Hence each gate location requires only
\(O(1)\) index-conditioned cases, with no angle multiplexing.

By Lemma~\ref{lem:cnot_model}, the CNOT cost of a controlled single-qubit gate is linear in the total number of controls once the small-control regime is absorbed into constants.
Hence each index-conditioned case costs \(O(n)\) CNOTs.
Since there are \(N-1=O(N)\) gate locations and only \(O(1)\) cases per location, the total compiled execution cost of the indexed derivative setting \(\ket{\Phi_d(\boldsymbol{\theta})}\) is
\[
O(N)\times O(n)=O(nN).
\]

For the branch setting \(\ket{\Omega_d(\boldsymbol{\theta})}\), one ancilla qubit is added to coherently combine the baseline state and the indexed derivative preparation.
This adds at most one extra control to the same family of gates and therefore does not change the asymptotic bound:
\[
\mathrm{Cost}\big(\ket{\Omega_d(\boldsymbol{\theta})}\big)=O(nN).
\]

This proves the real-case claim and the magnitude-block part of the complex-case claim.

(B) Phase-layer settings: \(\ket{\Phi_{\mathrm{ph}}}\) and \(\ket{\Omega_{\mathrm{ph}}}\).
Now consider the complex Hopf ansatz and the phase parameters
\(
\theta_{N+\ell}
\),
\(
\ell=0,\ldots,N-1
\).
Let \(P\) be the \(n\)-qubit phase-index register.

For each fixed leaf label
\(
\ell=(b_n b_{n-1}\cdots b_1)_2,
\)
Theorem~\ref{thm:synthesis_normalized_direction_hopf}(B) gives an explicit parameter vector
\(
\boldsymbol{\theta}^{(N+\ell)}
\)
such that
\[
\mathrm{HopfComplex}\!\big(\boldsymbol{\theta}^{(N+\ell)}\big)\ket{0}^{\otimes n}
=
\ket{e_{N+\ell}(\boldsymbol{\theta})}.
\]
Thus the indexed phase derivative preparation is obtained by implementing the map
\(
\ell\mapsto \boldsymbol{\theta}^{(N+\ell)}
\)
coherently under control of \(P\).

For an internal-node gate location \(j\), the phase-case construction
\eqref{eq:phase_case_clamp} sets its magnitude angle either to \(0\) or to \(\pi/2\), depending only on whether \(j\) lies on the selected root-to-leaf path and on the next branch bit.
Hence each internal-node gate location requires at most one nontrivial index-conditioned case.

For a promoted final-layer \(R_{\mathbb C}\) gate corresponding to a sibling leaf pair, the phase-case construction
\eqref{eq:phase_case_shift} requires at most two nontrivial cases:
one selecting the left leaf and one selecting the right leaf.
Each such case uses only the two local leaf phases already attached to that gate location, together with the constant shift \(+\pi/2\).
Again, no angle multiplexing is required.

Therefore the indexed phase-layer implementation uses the same Hopf skeleton as the forward complex ansatz, with only \(O(1)\) case splitting at each original gate location.
Each case inherits the original data-register controls and acquires at most \(n\) additional controls from the phase-index register \(P\).
By Lemma~\ref{lem:cnot_model}, each such case costs \(O(n)\) CNOTs.
Since there are \(O(N)\) gate locations and \(O(1)\) cases per location, the indexed phase-derivative setting satisfies
\[
\mathrm{Cost}\big(\ket{\Phi_{\mathrm{ph}}(\boldsymbol{\theta})}\big)=O(nN).
\]

For the phase-branch setting \(\ket{\Omega_{\mathrm{ph}}(\boldsymbol{\theta})}\), the additional ancilla used to form the signed superposition adds at most one extra control and therefore preserves the same asymptotic bound:
\[
\mathrm{Cost}\big(\ket{\Omega_{\mathrm{ph}}(\boldsymbol{\theta})}\big)=O(nN).
\]

Combining parts (A) and (B), every gradient-access configuration used in the Hopf protocol—whether derivative or branch, magnitude or phase—has compiled execution cost \(O(nN)\).
\end{proof}

\subsection*{Proof of Lemma~\ref{lem:inverse_hopf_map}}

\begin{proof}
Let \(N=2^n\).

\textbf{(A) Real case.}
Let \(\mathbf{x}\in\mathbb{R}^{N}\) satisfy
\(\sum_{\ell=0}^{N-1}x_\ell^2=1\).
Use a tree-indexed array of weights, with leaf weights
\[
w_{N+\ell}:=x_\ell^2,
\qquad
\ell=0,\ldots,N-1.
\]
For internal nodes \(j=N-1,N-2,\ldots,1\), define
\[
w_j:=w_{2j}+w_{2j+1}.
\]
Then \(w_j\) is the total squared mass in the subtree rooted at \(j\). In
particular,
\[
w_j=\sum_{\ell\in\mathcal{L}(j)}x_\ell^2,
\qquad
w_{2j}=\sum_{\ell\in\mathcal{L}_L(j)}x_\ell^2,
\qquad
w_{2j+1}=\sum_{\ell\in\mathcal{L}_R(j)}x_\ell^2.
\]
Thus
\[
S_L(j)=\sqrt{w_{2j}},
\qquad
S_R(j)=\sqrt{w_{2j+1}},
\qquad
S(j)=\sqrt{w_j}.
\]

For every non-final internal node \(1\le j\le N/2-1\), set
\[
\theta_j=
\begin{cases}
\operatorname{atan2}\!\big(\sqrt{w_{2j+1}},\sqrt{w_{2j}}\big),
& w_j>0,\\[2pt]
0, & w_j=0.
\end{cases}
\]
Equivalently, when \(w_j>0\),
\[
\cos\theta_j=\frac{\sqrt{w_{2j}}}{\sqrt{w_j}},
\qquad
\sin\theta_j=\frac{\sqrt{w_{2j+1}}}{\sqrt{w_j}}.
\]
Therefore the forward Hopf recursion sends the incoming magnitude
\(\sqrt{w_j}\) at node \(j\) to child magnitudes
\[
\sqrt{w_j}\cos\theta_j=\sqrt{w_{2j}},
\qquad
\sqrt{w_j}\sin\theta_j=\sqrt{w_{2j+1}}.
\]
By induction from the root down to depth \(n-1\), the recursion reproduces the
correct incoming magnitude for every final-layer sibling pair.

It remains only to recover the signs inside each final sibling pair.  For a
final-layer node \(j=N/2+k\), the two descendant leaves are \(2k\) and \(2k+1\).
Let
\[
S_k:=\sqrt{x_{2k}^2+x_{2k+1}^2}.
\]
If \(S_k>0\), set
\[
\theta_{N/2+k}
=
\operatorname{atan2}\!\big(x_{2k+1},x_{2k}\big)
\pmod{2\pi}.
\]
Then
\[
S_k\cos\theta_{N/2+k}=x_{2k},
\qquad
S_k\sin\theta_{N/2+k}=x_{2k+1}.
\]
If \(S_k=0\), both leaf amplitudes vanish and the angle is arbitrary; the
convention in the lemma sets it to zero.  Hence the real forward Hopf map
reproduces the signed vector \(\mathbf{x}\).  Nonuniqueness occurs exactly at
zero-subtree or zero-leaf-pair points.

\textbf{(B) Complex case.}
Let \(\mathbf{x}\in\mathbb{C}^{N}\) satisfy
\(\sum_{\ell=0}^{N-1}|x_\ell|^2=1\), and write
\[
x_\ell=r_\ell e^{\iu\phi_\ell},
\qquad
r_\ell=|x_\ell|\ge0,
\]
with the convention \(\phi_\ell=0\) when \(x_\ell=0\).

Apply the same bottom-up construction to the nonnegative leaf weights
\[
w_{N+\ell}:=r_\ell^2=|x_\ell|^2.
\]
For every internal node \(j=1,\ldots,N-1\), set
\[
\theta_j=
\begin{cases}
\operatorname{atan2}\!\big(\sqrt{w_{2j+1}},\sqrt{w_{2j}}\big),
& w_j>0,\\[2pt]
0, & w_j=0.
\end{cases}
\]
Since all magnitudes are nonnegative, these angles lie in \([0,\pi/2]\), and
the forward magnitude recursion reproduces exactly the magnitudes
\(r_0,\ldots,r_{N-1}\).

Finally set the leaf phases to
\[
\theta_{N+\ell}:=\phi_\ell,
\qquad
\ell=0,\ldots,N-1.
\]
Then the complex forward Hopf map gives
\[
r_\ell e^{\iu\theta_{N+\ell}}
=
r_\ell e^{\iu\phi_\ell}
=
x_\ell
\]
for every leaf.  If \(r_\ell=0\), the corresponding phase is irrelevant, so the
inverse is not unique at zero-amplitude leaves.

\textbf{Complexity.}
The bottom-up computation of all weights uses one traversal of a complete
binary tree with \(O(N)\) nodes.  Once the weights are known, each magnitude
angle is obtained in \(O(1)\) time.  In the real case, the final-layer signed
angles are obtained directly from the pairs
\((x_{2k},x_{2k+1})\).  In the complex case, the leaf phases require one
additional linear pass.  Therefore
\[
\text{Time}_{\mathrm{Hopf}^{-1}}(N)=O(N),
\qquad
\text{Memory}_{\mathrm{Hopf}^{-1}}(N)=O(N).
\]
This proves the lemma.
\end{proof}

\subsection*{Proof of Theorem~\ref{thm:hopf_jacobian_both}}

\begin{proof}
Fix a leaf (basis) index $i\in\{0,\ldots,N-1\}$.
By Lemma~\ref{lem:hopf_product_formula}, the Hopf amplitude $x_i(\boldsymbol{\theta})$
is a product of trigonometric factors contributed by the internal nodes on the unique
root-to-leaf path of $i$, and (in the complex case) an additional leaf phase factor.

Let $P(i)$ denote the set of internal nodes on the root-to-parent path that generates $x_i$
(as in the theorem statement), and for $j\in P(i)$ let $\varepsilon_{ij}\in\{0,1\}$ indicate
whether $x_i$ contains the factor $\cos\theta_j$ ($\varepsilon_{ij}=0$) or $\sin\theta_j$
($\varepsilon_{ij}=1$).

(A) Real case.
Assume $x_i(\boldsymbol{\theta})\in\mathbb{R}$.
If $j\notin P(i)$, then the product formula for $x_i$ contains no factor depending on
$\theta_j$, hence $\partial x_i/\partial\theta_j=0$.

Now assume $j\in P(i)$.
Write the product formula as
\[
x_i(\boldsymbol{\theta})
=
f_{ij}(\theta_j)\,u_{ij}(\boldsymbol{\theta}),
\]
where \(u_{ij}\) is the product of all factors in \(x_i\) except the one at node
\(j\), so \(u_{ij}\) is independent of \(\theta_j\).  Here
\[
f_{ij}(\theta_j)=
\begin{cases}
\cos\theta_j, & \varepsilon_{ij}=0,\\
\sin\theta_j, & \varepsilon_{ij}=1.
\end{cases}
\]
The direct product derivative is always
\[
\frac{\partial x_i}{\partial\theta_j}
=
f'_{ij}(\theta_j)\,u_{ij}(\boldsymbol{\theta}).
\]
On the regular set where \(f_{ij}(\theta_j)\neq0\), this may be written as
\[
\frac{\partial x_i}{\partial\theta_j}
=
\frac{f'_{ij}(\theta_j)}{f_{ij}(\theta_j)}
x_i(\boldsymbol{\theta}).
\]
Thus, if \(\varepsilon_{ij}=0\), then
\[
\frac{\partial x_i}{\partial\theta_j}
=
-\tan\theta_j\,x_i,
\]
while if \(\varepsilon_{ij}=1\), then
\[
\frac{\partial x_i}{\partial\theta_j}
=
\cot\theta_j\,x_i.
\]
At points where \(f_{ij}(\theta_j)=0\), the displayed \(\tan\) or \(\cot\)
formula is interpreted by the direct product derivative above, equivalently by
continuous extension.  This proves \eqref{eq:jac_real_entries} with the stated
regular-set convention.

Finally, since in the real case
\(
g^{\mathbb{R}}_{j\ell}(\boldsymbol{\theta})
=\sum_{i=0}^{N-1}\frac{\partial x_i}{\partial\theta_j}\frac{\partial x_i}{\partial\theta_\ell}
\)
(cf.\ \eqref{eq:real_metric_JTJ}), the matrix identity
\(
\boldsymbol{g}^{\mathbb{R}}=(\boldsymbol{J}^{\mathbb{R}})^\top\boldsymbol{J}^{\mathbb{R}}
\)
is exactly the statement that $g^{\mathbb{R}}_{j\ell}$ is the Euclidean inner product
of columns $j$ and $\ell$ of $\boldsymbol{J}^{\mathbb{R}}$, proving \eqref{eq:g_real_gram}.

(B) Complex case.
Write
\[
x_i(\boldsymbol{\theta}) = r_i(\boldsymbol{\theta})\,e^{\iu\theta_{N+i}},
\qquad r_i(\boldsymbol{\theta})\ge 0,
\]
where $r_i$ is built from internal-node angles as in Lemma~\ref{lem:hopf_product_formula}(B).

For $1\le j\le N-1$ (magnitude parameters), the phase
$e^{\iu\theta_{N+i}}$ is independent of $\theta_j$, so the same
product/log-derivative argument as in (A) applies verbatim and yields the two
path-derivative cases in \eqref{eq:jac_complex_entries}, together with the
zero case when \(j\notin P(i)\), now with \(x_i\) complex-valued.

For the phase parameter $j=N+i$,
\[
\frac{\partial x_i}{\partial\theta_{N+i}}
=
r_i(\boldsymbol{\theta})\cdot \iu e^{\iu\theta_{N+i}}
= \iu\,x_i(\boldsymbol{\theta}),
\]
and for $j=N+m$ with $m\neq i$ the derivative is $0$ since $x_i$ does not depend on that phase.
This completes \eqref{eq:jac_complex_entries}.

Finally, by Definition~\ref{def:hopf_metric_convention}(B),
\[
g^{\mathbb{C}}_{j\ell}(\boldsymbol{\theta})
=\mathrm{Re}\,\langle \partial_{\theta_j}\psi,\partial_\ell\psi\rangle
=\mathrm{Re}\sum_{i=0}^{N-1}\overline{\frac{\partial x_i}{\partial\theta_j}}
\frac{\partial x_i}{\partial\theta_\ell},
\]
which is precisely the $(j,\ell)$ entry of
$\mathrm{Re}\big((\boldsymbol{J}^{\mathbb{C}})^\dagger\boldsymbol{J}^{\mathbb{C}}\big)$,
proving \eqref{eq:g_complex_gram}.
\end{proof}

\end{document}